\def\@seccntformat#1{\@ifundefined{#1@cntformat}%
   {\csname the#1\endcsname\quad}  
   {\csname #1@cntformat\endcsname}
}
\let\oldappendix\appendix 
\renewcommand\appendix{%
    \oldappendix
    \newcommand{\section@cntformat}{\appendixname~\thesection\quad}
}
\newenvironment{lessspaceitemize*}%
  {\begin{itemize}%
  \vspace{-1mm}
    \setlength{\itemsep}{0pt}%
    \setlength{\parskip}{0pt}}%
    {\vspace{-1mm} \end{itemize}}
\newenvironment{lessspaceenum*}%
  {\begin{enumerate}%
  \vspace{-1mm}
    \setlength{\itemsep}{0pt}%
    \setlength{\parskip}{0pt}}%
  {\end{enumerate}}
\DeclareMathOperator*{\argmax}{arg\,max}
\newtheorem{theorem}{Theorem}[section]
\newtheorem{definition}[theorem]{Definition}
\newtheorem{lemma}[theorem]{Lemma}
\newtheorem{claim}[theorem]{Claim}
\newtheorem{proposition}[theorem]{Proposition}
\newtheorem{corollary}[theorem]{Corollary}
\newtheorem*{inf_thm}{(Centerpiece Result)}{\bfseries}{\itshape}
\newlength\myindent
\newlength\myindenttwo
\title{Price Doubling and Item Halving:\\ Robust Revenue Guarantees for Item Pricing}
\author{Elliot Anshelevich\footnote{Rensselaer Polytechnic Institute, \texttt{eanshel@cs.rpi.edu, sekars@rpi.edu.}} \and Shreyas Sekar$^*$}
\begin{document}

\newpage

\maketitle

\begin{abstract}
We study approximation algorithms for revenue maximization based on static \emph{item pricing}, where a seller chooses prices for various goods in the market, and then the buyers purchase utility-maximizing bundles at these given prices. We formulate two somewhat general techniques for designing good pricing algorithms for this setting: {\em Price Doubling} and {\em Item Halving}. Using these techniques, we unify many of the existing results in the item pricing literature under a common framework, as well as provide several new item pricing algorithms for approximating both revenue and social welfare. More specifically, for a variety of settings with item pricing, we show that it is possible to deterministically obtain a log-approximation for revenue and a constant-approximation for social welfare {\em simultaneously}: thus one need not sacrifice revenue if the goal is to still have decent welfare guarantees. 

The main technical contribution of this paper is a $O((\log m + \log k)^2)$-approximation algorithm for revenue maximization based on the Item Halving technique, for settings where buyers have XoS valuations, where $m$ is the number of goods and $k$ is the average supply. Surprisingly, ours is the first known item pricing algorithm with polylogarithmic approximations for such general classes of valuations, and partially resolves an important open question from the algorithmic pricing literature about the existence of item pricing algorithms with logarithmic factors for general valuations~\cite{balcanBM08}. We also use the Item Halving framework to form envy-free item pricing mechanisms for the popular setting of multi-unit markets, providing a log-approximation to revenue in this case.
\end{abstract}

\section{Introduction}
{\em Item pricing} lies at the core of most markets: a seller assigns prices for their goods or services, and buyers decide whether or not to pay the asked prices. While other economic paradigms are also known, most economic interactions still fall into this basic market model in which the seller determines the prices, and each buyer purchases their most desirable bundle of goods at the given prices. Moreover, although there are exceptions, a seller often cannot choose to charge different prices for different customers, i.e., the prices are {\em non-discriminatory} and do not change depending on the buyer. The algorithmic problem studied in this work, that of \emph{computing static item prices to maximize revenue}, occupies a central place in the study of markets (see for example \cite{briestK11,guruswamiHKKKM05}). Despite the substantial body of work on item pricing, surprisingly little is known about maximizing revenue for complex buyer valuations; we attempt to fill this conspicuous void in the literature by presenting new approximation algorithms for submodular and XoS functions, and in the process, partially address a long-standing open question from~\cite{balcanBM08}.



More generally, we model item pricing mechanisms in which the seller posts a price for each distinct good in the market, followed by a consumption stage in which buyers purchase utility maximizing bundles of goods at the fixed prices. Arguably, the appeal of item pricing stems from its simplicity: not only do such mechanisms allocate resources in a decentralized fashion, but they also do so without discriminating against the buyers or imposing a large cognitive overhead on them. We study two natural mechanisms that fall under this umbrella:
\begin{description}

\item[Simultaneous Mechanism] Once the prices are fixed, buyers arrive at the same time, and consume goods in parallel. The algorithm must ensure \emph{envy-freeness} without violating any of the supply constraints., i.e., each buyer receives a utility-maximizing bundle of goods.

\item[Sequential Mechanism] A relaxation of the simultaneous mechanism, where the buyers arrive in some order and purchase utility maximizing bundles subject to available supply.
\end{description}

\noindent \textbf{Landscape of Item Pricing Results} To obtain a better understanding of our results, it is important to place them in the context of the larger body of work in this field. 
In the simultaneous model, much previous work has been done in designing item pricing mechanisms for both maximizing social welfare~\cite{azevedo2013walrasian,gul1999walrasian,hsumrrv16} and revenue~\cite{balcanBM08,briestK06,grandoniR16,guruswamiHKKKM05}. Since welfare maximization for this model is equivalent to the problem of computing a Walrasian equilibrium pricing, minus the market clearing constraint, it is relatively well understood: for gross substitutes valuations, one can compute optimal envy-free prices, while there exist no item prices that approximate the optimum welfare better than a $\Omega(\sqrt{m})$-factor for submodular valuations. On the other hand, the revenue maximization problem was shown to be $\Theta(\log^{1-\epsilon} m)$-inapproximable even for simple classes of buyer valuations.

By resorting to sequential mechanisms, Feldman et al.~\cite{feldmanGL15} provided an item pricing mechanism that approximates the optimum welfare up to a constant factor, even for the somewhat general class of XoS valuations (see Section \ref{sec:model}). Not much is known about the revenue maximization problem with sequential buyers: for XoS valuations our understanding of this regime is largely incomplete owing to the super-logarithmic gap between the known lower bound of $\Omega(\log m)$ for submodular valuations, and the upper bound of $2^{\tilde{O}(\sqrt{\log m})}$\footnote{In the prior literature, the norm is to assume a unit-supply environment and express the approximation bounds only in terms of the number of goods $m$, or buyers $N$.} corresponding to the current best approximation algorithm for revenue maximization. Among other results, in this work we shrink this gap by presenting a $O(\log^2 m)$-approximation algorithm for revenue maximization~\cite{balcanBM08}. 


Finally, our work is motivated by concerns that are somewhat orthogonal to those prevalent in fields such as mechanism design. Specifically, we are less interested in constructing novel auctions for revenue maximization, and more so in gaining a fundamental understanding of how markets with item pricing can operate. Towards this end, as is common in the item pricing literature~\cite{guruswamiHKKKM05,cheungS08,feldmanGL16}, we eschew concerns such as incentive compatibility, and consider a full information model, where the seller is aware of each buyer's valuation function. The reader is asked to refer to the related work for comparisons to more general information models assumed in regard to item pricing.


\vskip 3pt\noindent\textbf{Revenue, Welfare, and Bicriteria Approximations~} A noteworthy aspect of this work is that, as in much of the previous literature, all of our approximation factors are derived by comparing the revenue obtained by our algorithm to the social welfare of the optimum allocation, $SW(OPT)$. While one could speculate that the choice of a better benchmark might result in improved approximations, measuring revenue in terms of social welfare  allows for our results to have some useful implications~\cite{anshelevichS16}. For instance, since the optimum welfare is an upper bound on the revenue obtained by any individually rational mechanism, our main result provides a $O(\log^2 m)$-approximation (for unit supply) to the performance of \emph{any} reasonable algorithm, for e.g., even a centralized non-item pricing mechanism that arbitrarily allocates goods and charges each buyer her exact valuation for the allocated bundle. Secondly, a distinguishing feature of this paper is our focus on (deterministic) bicriteria approximation algorithms that simultaneously maximize both revenue and social welfare. While many revenue maximization algorithms in the literature do lead to inadvertent bicriteria factors, the bounds obtained for welfare are trivial since any $\alpha$-approximation for revenue in terms of $SW(OPT)$ implies an $\alpha$-approximation for social welfare as well. On the contrary, the unifying framework that we develop allows us to derive explicit bicriteria approximations, where both the revenue and social welfare are comparable to their respective optimal lower bounds. For example, while the results of~\cite{guruswamiHKKKM05} immediately imply a $(O(\log m + \log k), O(\log m + \log k))$-approximation for (revenue,welfare) for unit-demand buyers, our framework provides an improved $(O(\log m + \log k), O(1))$-bicriteria approximation for the same setting (see Table~\ref{tab:results}). 

\subsection{Summary of Contributions}
In this paper, we consider a typical setting consisting of a set of $m$ goods in limited supply, with $k$ being the supply averaged over all of the goods in the market, and $k_{\max}$ being the maximum supply of any good. There are $N$ buyers in the market having combinatorial valuation functions, and under a given set of prices, each buyer purchases a subset of the available goods that maximizes her quasilinear utility, i.e., value derived minus price paid. 

\subsubsection*{Unifying Framework}

At a fundamental level, our first contribution is a general framework for revenue maximization based on two complementary algorithmic approaches, \emph{price doubling} and \emph{item halving}. The framework serves as a focal point for unifying a number of results in the pricing literature (see Table~\ref{tab:results}); we add additional value to these results by transforming them into bicriteria approximation algorithms that simultaneously guarantee good revenue and welfare. Secondly, we leverage the insights gained from our framework to design new approximations for revenue maximization in other settings.

\begin{table}[htb]
\centering
\begin{tabular}{|l|c|c|}
\hline
\textbf{Setting} & \textbf{Previous Results} & {\textbf{Our Results}}\\
& (Revenue) & $\{ \text{Revenue,Welfare} $\}\\
\hline
Unit-Demand & $O(\log m + \log k)$~\cite{guruswamiHKKKM05}  & $\{O(\log m + \log k), 3\} $\\
\hline
Gross Substitutes & -  & $\{O(\log m + \log k), 3\}  $ \\
\hline
Single-Minded & $O(\sqrt{m}\log k_{max})$~\cite{cheungS08}  & $\{O(\sqrt{m}\log k_{max}), O(\sqrt{m})\} $\\
\hline
Unlimited Supply & $\Theta(\log m + \log N)$~\cite{balcanBM08} &  $\{O(\log m + \log N), 2\} $\\
\hline
Bundle Pricing  & $O(\alpha^* \log N)$~\cite{feldmanGL16} & $\{O(\alpha^*\log N), 6\alpha^*\}$\\
\hline
Non Envy-Free Item Pricing  & $O(\log k_{max})$~\cite{elbassioniFS10} & $\{O(\log k_{max}), 6\}$\\
\hline
\end{tabular}
\caption{Bicriteria approximation factors obtained for various settings. For a given valuation class, $\alpha^*$ denotes the best known approximation factor for the allocation problem.}
\label{tab:results}
\end{table}

\noindent\textbf{Price Doubling: Revenue without Sacrificing Welfare in Simultaneous Mechanisms}
Surprisingly, we observe that a number of algorithms in the literature actually fall within our price doubling framework, allowing us to strengthen \emph{all of these results} by transforming them to bicriteria approximation algorithms, whose guarantees are listed in Table~\ref{tab:results}.  Moreover, we present a black-box transformation for item pricing from welfare maximization to revenue maximization for simultaneous mechanisms. Specifically, given black-box access to any $\alpha$-approximate simultaneous mechanism for welfare maximization, we present a poly-time reduction based on price doubling to compute item prices that ensure a $O(\alpha (\log m + \log k + \log \alpha))$-approximation to the optimal revenue, and simultaneously a $3\alpha$-approximation for welfare. The result is rather powerful since it holds for arbitrary combinatorial valuations as long as the black-box algorithm satisfies a `no local improvement' property.

\subsubsection*{Item Halving: New Approximation Algorithms for XoS and Multi-Unit Buyers}

The main technical contribution of this paper is a new approximation algorithm for revenue maximization based on our item halving framework; this novel framework allows us to prove quite a different set of results than the more usual price doubling techniques.

\begin{inf_thm}
For settings where buyers have XoS valuations, we can compute item prices in poly-time such that the sequential mechanism with these prices results in a \\$O((\log m + \log k)^2)$-approximation to the optimum revenue.
\end{inf_thm}

Note that XoS valuations are a strict generalization of submodular functions. The sequential mechanism that we design achieves the desired revenue guarantee for two models of buyer arrival: $(i)$ when the arrival is adversarial within a polynomially bounded set of buyer orders, and $(ii)$ when buyers arrive according to an arbitrary probability distribution. For the completely adversarial case, we present a non-poly-time algorithm with the same guarantee, that serves as an existence result.

\noindent{\em Key Features of our Result:} The above theorem provides a positive answer to the long standing open question of whether static item pricing could be used to obtain (poly-)logarithmic approximation factors for complex valuations, and in the process, significantly improves upon the previous upper bound of $2^{O(\sqrt{\log (mk) \log \log (mk)})}$~\cite{balcanBM08}. Prior to this, similar approximation factors were only known for more general pricing schemes such as bundle pricing~\cite{feldmanGL16} or discriminatory pricing~\cite{chakrabortyHK09}. In contrast, ours is the first known poly-logarithmic approximation algorithm based on non-discriminatory item pricing! Moreover, we come close to closing the gap between the upper and lower bounds for this setting, the latter being a $\Omega(\log m)$ factor for submodular valuations with unit supply $(k=1)$.

Finally, in Section \ref{sec:multi-item} we apply the same item halving machinery developed for XoS valuations to obtain a simultaneous mechanism with a $O(\log m)$-revenue guarantee for the popular class of {\em multi-unit valuations,} where a buyer's value depends only on the number of items that she receives. Previously, the best known result for this setting was a $O(\log m + \log N)$-approximation based on sequential mechanisms~\cite{balcanBM08}.

\subsection{Related Work}
Clearly, the current paper is closely associated with the growing body of work on item pricing and more general envy-free schemes~\cite{branzeiFMZ16,feldmanFLS12,FiatW09} for social welfare and revenue maximization. Item pricing for maximizing only welfare has traditionally been a sought after area of research owing to its ties to Walrasian equilibrium; since our focus is primarily on revenue, we refer the reader to~\cite{feldmanwelfarepricing,hsumrrv16, roughgardenT15} for more recent algorithmic perspectives on the subject. On the other hand, the problem of item pricing for revenue maximization has recently gained traction in computer science with respect to both sequential and simultaneous mechanisms. A steady stream of research has yielded near-optimal approximation algorithms for a variety of settings including but not limited to unit demand~\cite{briestK11,chenD14,guruswamiHKKKM05}, single minded~\cite{balcanB07,briestK06,cheungS08}, graph minded~\cite{grandoniR16}, multi-unit markets~\cite{branzeiFMZ16,feldmanFLS12}, and unlimited supply settings~\cite{balcanC10,briestK11}. One of the main contributions of this paper is a general framework that captures many of the above results, and provides a recipe for converting them into bicriteria approximations.

Despite the tremendous body of work on revenue maximization, ours is the first known poly-logarithmic approximation algorithm based on static, item pricing for complex valuations such as submodular and XoS functions. Partial exceptions include the $O(\log m)$-approximation for `simple submodular functions' in~\cite{balcanBM08} and the $O(\log^3 m)$-approximation in~\cite{chakrabortyHK09} for settings consisting of a large number of buyers having the exact same valuation function. Although our central result improves upon the upper bound of $2^{\tilde{O}(\sqrt{\log m})}$ from~\cite{balcanBM08} for XoS functions, it is pertinent to mention that their result holds for a more general information model where the seller is not aware of the buyer valuations. For such a general model, it is reasonable to expect that the only possible strategy would be to price items uniformly (common price), as is done in~\cite{balcanBM08,chakrabortyHK09}. The presence of a matching lower bound of $2^{\Omega(\sqrt{\log m})}$~\cite{chakrabortyHK09} for uniform pricing motivates the need for pricing different goods differently. For such a scheme, we argue that knowledge of buyer valuations is necessary since it allows us to quantify each good's relative value in the market.

As mentioned previously, this work differs considerably from the literature on \emph{sequential posted pricing} in multi-parameter mechanism design (e.g., \cite{adamczykBFKL15,caiDW16,chawlaHMS10}). With the notable exception of~\cite{feldmanGL15}, most of the pricing schemes in this area are non-anonymous, i.e., based on offering the same good to different buyers at different prices.  A more benign form of price discrimination that has recently exploded in popularity is \emph{dynamic pricing}~\cite{blumGMS11,chakrabortyHK09,huang2015welfare}, where the price offered in each round of a sequential mechanism is independent of the identity of the arriving buyer. On the contrary, our prices are static and do not change across rounds: this lack of price discrimination is an important force behind our results.
Finally, understanding the nature of revenue-welfare trade-offs has been an important research agenda in auctions~\cite{diakonikolasPPS12,likhodedovS04} but has not received much attention in posted pricing markets with general valuations such as the one in this paper.

\section{Model and Preliminaries} \label{sec:model}
We consider a market setting with a set $\mathcal{N}$ of $N$ buyers and set $\mathcal{I}$ of $m$ goods. Each buyer $j$ has a monotone valuation function $v_j:2^{\mathcal{\mathcal{I}}} \to \mathbb{R}^{+}$, whereas the seller is bound by a supply constraint on each good, i.e., at most $k_i$ units of good $i \in \mathcal{I}$ are available to be sold. We use $k := \frac{\sum_{i \in \mathcal{I}}k_i}{m}$ to denote the average supply of goods in the market. Given an allocation $\vec{S} =  (S_1, S_2, \ldots, S_N)$ of goods to the buyers, we will use $SW(\vec{S}) = \sum_{j \in \mathcal{N}}v_j(S_j)$ to represent the social welfare of this allocation. For the same allocation, let $N_i(\vec{S})$ denote the set of buyers who received a copy of good $i$ and $k_i(\vec{S}) = |N_i(\vec{S})|$. An allocation is \emph{feasible} only if it does not violate the supply constraints i.e., for each $i \in \mathcal{I}$, no more $k_i$ buyers receive a copy of this good. Finally, let $\theta(\vec{S}) = \sum_{i \in \mathcal{I}}k_i(\vec{S})$ denote the total number of items allocated to buyers in the given allocation $\vec{S}$.

In this work, we will primarily deal with the XoS class consisting of \emph{fractionally subadditive} valuation functions; this class includes, for example, all submodular functions. A valuation function $v$ is said to belong to this class if there exists a set of additive clauses $(a_1, \ldots, a_r)$ such that for any $T \subseteq \mathcal{\mathcal{I}}$, $v(T) = \max_{j=1}^{r} a_j(T)$. Each additive clause $a_j$ has a single value $a_j(i)$ for each $i \in \mathcal{N}$ so that for a set $T$ of agents, $a_j(T) = \sum_{i \in T}a_j(i)$. We also provide new approximation algorithms for the class of \emph{multi-unit valuations}, where a buyer's valuation depends only on the number of items that she receives, i.e., for all $S, T \subseteq \mathcal{I}$ such that $|S| = |T|$, $v(S) = v(T)$. Some of our secondary results are shown for other extremely standard classes of valuations that we formally define in Appendix~\ref{app:definitions}. Finally, as is common in almost all of the literature concerning combinatorial buyer valuations, we assume that we have black-box access to certain oracles that allow us to query the set functions. Specifically, we consider the following types of oracles with respect to a valuation $v$: $(i$) \emph{Value Oracle}, that when given a set $S \subseteq \mathcal{I}$ returns the value of $v(S)$, $(ii)$ \emph{Demand Oracle} that when queried with a vector of item prices $\vec{p}$ returns a set $S$ that maximizes the quantity $v(S) - \sum_{i \in S}p_i$, and $(iii)$ an \emph{XoS oracle} that for an XoS function $v$ and a set $T \subseteq \mathcal{I}$ returns the additive clause $a_l$ that maximizes $a_l(T)$.


\subsection{Pricing Mechanisms}
In its full generality, a pricing mechanism $\mathcal{M}$ is a system for allocating goods to the buyers and charging them payments. In the current work however, we restrict our attention to two types of mechanisms based on \emph{non-discriminatory item pricing}, where the seller fixes a single price per good in advance, and buyers purchase utility-maximizing bundles at the given prices. The price paid by each buyer equals the sum of item prices of the goods that she receives. 

\begin{enumerate}
\item \textbf{Simultaneous Mechanism.}
In this mechanism, the seller posts a price $p_i$ per good, buyers arrive simultaneously, and purchase a utility-maximizing subset $S_j\subseteq\mathcal{I}$ under the given prices. A pair $(\vec{p},\vec{S})$ is said to be a \emph{valid outcome} of the simultaneous mechanism if the following constraints are satisfied: $(i)$ for every buyer $j \in \mathcal{N}$, $S_j$ maximizes $v_j(S_j) - \sum_{i \in S_j}p_i$\} and thus the solution is \emph{envy-free}, and $(ii)$ at most $k_i$ buyers are allocated a copy of good $i$.

\item \textbf{Sequential Mechanism.} The sequential mechanism is very similar to the previous mechanism except that buyers arrive sequentially and once the supply of a good is exhausted, it is no longer available to future buyers. Specifically, this mechanism proceeds as follows:

\begin{lessspaceenum*}
\item Initially, all of the goods are available.
\item For each good $i\in \mathcal{I}$, the seller posts a single price $p_i$ and decides on the quantity $q_i \in [0,k_i]$ of this good to supply.
\item The buyers arrive in some order, and each buyer purchases her utility maximizing bundle from the set of available goods.
\item At any stage of the mechanism, if $q_i$ copies of good $i$ are sold (i.e, the good is sold out), then it is marked as unavailable.
\end{lessspaceenum*}

A triple $(\vec{p},\vec{q},\vec{S})$ is said to be a valid outcome of the sequential mechanism if by setting the prices $p_i$ and supply constraints $q_i$, the above mechanism can result in the allocation $\vec{S}$ for some arrival order of the buyers. In fact, we will usually omit $\vec{q}$ from this and simply denote an outcome by $(\vec{p},\vec{S})$, since if such an outcome is possible, then we can always set $q_i=k_i(\vec{S})$ to obtain the same outcome.
\end{enumerate}

We consider three arrival models for buyers in the sequential mechanism: $(i)$ \emph{partially adversarial}, where the buyer arrival order comes from a polynomially bounded set of arrival orders, $(ii)$ \emph{random arrival}, where buyers arrive according a distribution $D^{\pi}$, and finally $(iii)$ \emph{fully adversarial}, where any of the $N!$ arrival orders are possible. For the objective of revenue maximization, we are able to provide efficient computational mechanisms for the first two arrival models, and an exponential time algorithm for the fully adversarial case that serves as an existence result.

\noindent\textbf{Revenue and Surplus} Given a valid outcome $(\vec{p},\vec{S})$ of either mechanism, we define $Rev(\vec{p},\vec{S}) := \sum_{i \in \mathcal{I}}p_i k_i(\vec{S}) = \sum_{j \in \mathcal{N}}\sum_{i \in S_j}p_i$ to be the resulting revenue, and $Surp(\vec{p},\vec{S}) := \sum_{j \in \mathcal{N}}\{v_j(S_j) - \sum_{i \in S_j}p_i\}$ to be the aggregate surplus or buyer utility. It is not hard to see that $SW(\vec{S}) = Rev(\vec{p},\vec{S}) + Surp(\vec{p},\vec{S}).$

\subsection{Sufficient Conditions for Revenue and Bicriteria Approximations}
We now present the first part of our general framework for revenue maximization that helps us to identify solutions with good revenue and welfare properties. Specifically, we present results pertaining to a general sequence of pricing solutions that obey certain desirable properties that we term as \emph{charging properties}. The claims proved in this section are not technically difficult; the proofs are mostly straight-forward and can be found in the Appendix. The computationally challenging aspect is to actually design algorithms that compute a sequence of solutions obeying the properties spelled out below, as we do in Sections \ref{sec:pricedoubling}-\ref{sec:multi-item}. Once these solutions are computed, our lemmas simply help in showing that it is possible to leverage them to obtain good revenue.

\begin{definition}
\label{defn_framework1}
A sequence of pricing solutions $(\vec{p}^{(t)}, \vec{S}^{(t)})_{t=1}^{\gamma}$ is said to satisfy the \emph{charging} property if for all $t \in [1,\gamma-1]$,

$$SW(\vec{S}^{(t)}) - SW(\vec{S}^{(t+1)}) \leq \alpha Rev(\vec{p}^{(t)}, \vec{S}^{(t)}),$$

and $SW(\vec{S}^{(\gamma)}) \leq \alpha Rev(\vec{p}^{(\gamma)}, \vec{S}^{(\gamma)})$ for some $\alpha \geq 1$.

\end{definition}

The essence of Definition~\ref{defn_framework1} is that the `loss in welfare' as we traverse the solutions in a top-down manner can be \emph{charged} to the solution's revenue. Below we also define a similar notion that we refer to as the \emph{generalized charging property}, which is a strict generalization of Definition~\ref{defn_framework1}. It is, however, useful to make explicit the simpler charging property for two reasons: $(i)$ despite its apparent simplicity, this property is at the heart of a number of revenue maximization algorithms in the literature, and $(ii)$ we are able to show strong bicriteria guarantees for this model, which fail to hold upon generalization. 

\begin{claim}\label{clm_simplecharging}
Consider a sequence of pricing solutions $(\vec{p}^{(t)}, \vec{S}^{(t)})_{t=1}^{\gamma}$ that satisfy the charging property. Then, there exists $\ell \in [0, \gamma]$ such that

$$Rev(\vec{p}^{(\ell)}, \vec{S}^{(\ell)}) \geq \frac{SW(\vec{S^{(1)}})}{\gamma \alpha}.$$

Moreover, for any target welfare parameter $c \in [1, \gamma \alpha]$, there exists an $1 \leq \ell_2 \leq \gamma$ such that

$$SW(\vec{S}^{(\ell_2)}) \geq \frac{SW(\vec{S^{(1)}})}{c}~~~\text{and}~~~Rev(\vec{p}^{(\ell_2)}, \vec{S}^{(\ell_2)}) \geq (1-\frac{1}{c})\frac{SW(\vec{S^{(1)}})}{\gamma \alpha}.$$
\end{claim}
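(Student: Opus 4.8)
The plan is to exploit the telescoping structure of the charging property. Summing the inequality $SW(\vec{S}^{(t)}) - SW(\vec{S}^{(t+1)}) \leq \alpha \, Rev(\vec{p}^{(t)}, \vec{S}^{(t)})$ over $t = 1, \ldots, \gamma - 1$ and then adding the final inequality $SW(\vec{S}^{(\gamma)}) \leq \alpha \, Rev(\vec{p}^{(\gamma)}, \vec{S}^{(\gamma)})$, the left-hand side telescopes to $SW(\vec{S}^{(1)})$, yielding
\[
SW(\vec{S}^{(1)}) \;\leq\; \alpha \sum_{t=1}^{\gamma} Rev(\vec{p}^{(t)}, \vec{S}^{(t)}).
\]
By averaging, some index $\ell$ must satisfy $Rev(\vec{p}^{(\ell)}, \vec{S}^{(\ell)}) \geq SW(\vec{S}^{(1)})/(\gamma\alpha)$, which proves the first part. (The statement allows $\ell \in [0,\gamma]$, but the argument in fact delivers $\ell \in [1,\gamma]$; the weaker range is harmless.)

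For the bicriteria part, fix a target welfare parameter $c \in [1, \gamma\alpha]$. The idea is to walk down the sequence from $t=1$ until the welfare first drops below $SW(\vec{S}^{(1)})/c$, and argue that just before that happens we have a solution with both decent welfare and decent revenue. Concretely, let $\ell_2$ be the largest index such that $SW(\vec{S}^{(\ell_2)}) \geq SW(\vec{S}^{(1)})/c$; such an index exists since $t=1$ qualifies, and I would handle the corner case $\ell_2 = \gamma$ separately (there the welfare bound is immediate and the revenue bound follows from the single inequality $SW(\vec{S}^{(\gamma)}) \le \alpha Rev(\vec p^{(\gamma)},\vec S^{(\gamma)})$, which is at least as strong as what is claimed). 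The welfare guarantee at $\ell_2$ holds by choice of $\ell_2$. For the revenue guarantee, note that by maximality of $\ell_2$ we have $SW(\vec{S}^{(\ell_2+1)}) < SW(\vec{S}^{(1)})/c$, so the charging property at $t=\ell_2$ gives
\[
\alpha\, Rev(\vec{p}^{(\ell_2)}, \vec{S}^{(\ell_2)}) \;\geq\; SW(\vec{S}^{(\ell_2)}) - SW(\vec{S}^{(\ell_2+1)}) \;>\; \frac{SW(\vec{S}^{(1)})}{c} - \frac{SW(\vec{S}^{(1)})}{c}.
\]
That bound is vacuous, so a naive single-step argument does not suffice — this is the main obstacle.

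To get the claimed $(1 - 1/c)\,SW(\vec{S}^{(1)})/(\gamma\alpha)$ revenue bound, I would instead combine the telescoped sum with the welfare cutoff. Split the sum over all $\gamma$ solutions into the prefix $t \le \ell_2$ and the suffix $t > \ell_2$. Crucially, apply the telescoping argument only to the prefix: summing the charging inequalities for $t = 1, \ldots, \ell_2 - 1$ and noting $SW(\vec{S}^{(\ell_2)}) \geq SW(\vec{S}^{(1)})/c > 0$, we get $SW(\vec{S}^{(1)}) - SW(\vec{S}^{(\ell_2)}) \leq \alpha \sum_{t=1}^{\ell_2 - 1} Rev(\vec{p}^{(t)}, \vec{S}^{(t)})$, hence
\[
\alpha \sum_{t=1}^{\ell_2} Rev(\vec{p}^{(t)}, \vec{S}^{(t)}) \;\geq\; SW(\vec{S}^{(1)}) - SW(\vec{S}^{(\ell_2)}) \;\geq\; \Bigl(1 - \tfrac{1}{c}\Bigr) SW(\vec{S}^{(1)}),
\]
where the last step uses the welfare lower bound at $\ell_2$ only if it helps — actually I want the reverse, so let me use instead that among $t = 1,\dots,\ell_2$ every solution except possibly $\ell_2$ itself has welfare at least $SW(\vec S^{(1)})/c$; the cleanest route is to observe $SW(\vec S^{(\ell_2)}) \le SW(\vec S^{(1)})$ is not what we need. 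The correct finish: since $SW$ is nonincreasing along the sequence is NOT assumed, so I would instead simply note $SW(\vec S^{(\ell_2)})$ could be large; to force the $(1-1/c)$ factor, redefine $\ell_2$ as the \emph{first} index with $SW(\vec S^{(\ell_2)}) < SW(\vec S^{(1)})/c$ if one exists (else $\ell_2 = \gamma$ and we are done as above), so that $SW(\vec S^{(\ell_2 - 1)}) \ge SW(\vec S^{(1)})/c$ and set the output index to $\ell_2 - 1$; then telescoping over $t \le \ell_2 - 1$ combined with $SW(\vec S^{(\ell_2)}) < SW(\vec S^{(1)})/c$ gives $\alpha\sum_{t=1}^{\ell_2-1} Rev \ge SW(\vec S^{(1)})(1 - 1/c)$, and averaging over these at most $\gamma$ terms yields a single index with revenue at least $(1-1/c)SW(\vec S^{(1)})/(\gamma\alpha)$ and welfare at least $SW(\vec S^{(1)})/c$. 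I expect reconciling the two averaging steps and the off-by-one in the index range to be the only delicate bookkeeping.
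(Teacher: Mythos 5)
Your proof of the first part matches the paper's exactly (telescope, then pick the max-revenue index, which is just averaging). For the bicriteria part you take a genuinely different route. The paper defines $\ell_2$ as the smallest index whose revenue clears the target threshold $(1-\frac{1}{c})\frac{SW(\vec S^{(1)})}{\gamma\alpha}$; the revenue guarantee then holds trivially by the definition of $\ell_2$, and the welfare guarantee falls out of a single telescoping sum over $t<\ell_2$ using the fact that every earlier revenue is \emph{below} the threshold. Your approach inverts the selection rule: you pick the first index whose \emph{welfare} falls below $SW(\vec S^{(1)})/c$, telescope up to that point to get $\alpha\sum_{t<\ell_2}Rev^{(t)} > (1-\frac{1}{c})SW(\vec S^{(1)})$, and then do a second averaging step to extract a single index $t^*\le \ell_2-1$; the welfare bound for $t^*$ is then free because $t^*$ lies in the prefix where welfare is still above threshold. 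Both are correct, but the paper's version is leaner in that a single selection rule delivers both properties without the extra averaging pass; your version more closely mirrors the intuition of ``walk down until welfare dips'' at the cost of one more step and a case split.

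Two small points of caution in your write-up. First, the fallback case $\ell_2=\gamma$ (no welfare index dips below threshold) should not be dispatched via the tail inequality $SW(\vec S^{(\gamma)})\le\alpha Rev^{(\gamma)}$ alone: that yields $Rev^{(\gamma)}\ge \frac{SW(\vec S^{(1)})}{\alpha c}$, which dominates the claimed $(1-\frac{1}{c})\frac{SW(\vec S^{(1)})}{\gamma\alpha}$ only when $c\le\gamma+1$, whereas the claim allows $c$ up to $\gamma\alpha$. The clean fix is to note that in this case \emph{all} indices satisfy the welfare bound, so you can simply reuse the Part~1 averaging over all $\gamma$ revenues, yielding some $t^*$ with $Rev^{(t^*)}\ge\frac{SW(\vec S^{(1)})}{\gamma\alpha}\ge(1-\frac{1}{c})\frac{SW(\vec S^{(1)})}{\gamma\alpha}$ and welfare automatically at least $SW(\vec S^{(1)})/c$. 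Second, you write ``set the output index to $\ell_2-1$'' and then separately say the averaging ``yields a single index''; these are generally different indices, and only the latter (the one produced by averaging, which lies somewhere in $\{1,\dots,\ell_2-1\}$) is what you should return. The argument is still sound since every index in that prefix meets the welfare threshold, but the exposition should not suggest you are returning $\ell_2-1$ specifically.
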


For the purposes of intuition, $SW(\vec{S^{(1)}})$ is typically close to the optimum welfare and $\alpha$ is a small constant, whereas $\gamma$ is logarithmic in one of the market parameters such as $N$ or $m$.

\paragraph{Generalized Revenue-Welfare Charging Property}
Although the simple charging property presented above captures a number of revenue maximization algorithms in the item pricing literature, it is not enough to provide good algorithms for XoS valuations in the sequential setting, as we do in this paper. To do this, we provide a strict generalization of this property in a manner that makes it ideal to design pricing mechanisms which have black-box access to an allocation algorithm.


\begin{definition}
\label{defn_framework2}
Consider a sequence of benchmark allocations $\vec{B}^{(1)}, \vec{B}^{(2)}, \ldots, \vec{B}^{(\gamma)}$, and pricing solutions $(\vec{p}^{(t)}, \vec{S}^{(t)})_{t=1}^{\gamma}$. The sequence of solutions is said to satisfy the \emph{generalized charging} property if $(i)$ for all $t \in [1,\gamma-1]$,

$$SW(\vec{B}^{(t)}) - SW(\vec{B}^{(t+1)}) \leq \alpha Rev(\vec{p}^{(t)}, \vec{S}^{(t)}) + \frac{SW(\vec{B}^{(t)})}{\beta},$$

for $\alpha,\beta \geq 1$;

\noindent $(ii)$ $SW(\vec{B}^{(\gamma)}) \leq \alpha Rev(\vec{p}^{(\gamma)}, \vec{S}^{(\gamma)})$.

\end{definition}

When $\vec{S}^{(t)}=\vec{B}^{(t)}$ and $\beta$ is large, this property implies the simpler charging property above. Note that in the above definition, the benchmark allocations $(\vec{B}^{(t)})_{t=1}^\gamma$ are completely arbitrary, and may not actually be obtainable by any pricing mechanism.

\begin{claim}
\label{clm_frame2rev}
Consider a sequence of benchmark allocations $\vec{B}^{(1)}, \ldots, \vec{B}^{(\gamma)}$, and pricing solutions $(\vec{p}^{(t)}, \vec{S}^{(t)})_{t=1}^{\gamma}$ that satisfy the generalized charging property as per Definition~\ref{defn_framework2}. Then, there exists $\ell \in [1,\gamma]$ such that

$$Rev(\vec{p}^{(\ell)}, \vec{S}^{(\ell)}) \geq \frac{1}{\gamma\alpha}SW(\vec{B}^{(1)})\{1-\frac{\gamma-1}{\beta}\}.$$

\end{claim}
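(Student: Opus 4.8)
The plan is to mimic the telescoping argument from Claim~\ref{clm_simplecharging} but now account for the extra multiplicative slack $SW(\vec{B}^{(t)})/\beta$ appearing in each step of the generalized charging property. First I would suppose, for contradiction, that $Rev(\vec{p}^{(\ell)},\vec{S}^{(\ell)}) < \frac{1}{\gamma\alpha}SW(\vec{B}^{(1)})\{1-\frac{\gamma-1}{\beta}\}$ for \emph{every} $\ell \in [1,\gamma]$. Denote $R := \frac{1}{\gamma}SW(\vec{B}^{(1)})\{1-\frac{\gamma-1}{\beta}\}$, so that $\alpha Rev(\vec{p}^{(t)},\vec{S}^{(t)}) < R$ for all $t$. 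The goal is to show this forces $SW(\vec{B}^{(\gamma)}) > \alpha Rev(\vec{p}^{(\gamma)},\vec{S}^{(\gamma)})$, contradicting part $(ii)$ of Definition~\ref{defn_framework2}.

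The key computation is to unwind the recursion $SW(\vec{B}^{(t+1)}) \geq SW(\vec{B}^{(t)}) - \alpha Rev(\vec{p}^{(t)},\vec{S}^{(t)}) - \frac{SW(\vec{B}^{(t)})}{\beta}$. I would first observe the crude monotonicity-type fact that $SW(\vec{B}^{(t)}) \leq SW(\vec{B}^{(1)})$ is \emph{not} automatic here (the benchmarks are arbitrary), so instead I would bound $\frac{SW(\vec{B}^{(t)})}{\beta}$ directly: iterating the inequality and using $\alpha Rev(\vec{p}^{(t)},\vec{S}^{(t)}) < R$ at each step, one gets after $t-1$ steps
\[
SW(\vec{B}^{(t)}) \;\geq\; SW(\vec{B}^{(1)}) - (t-1)R - \frac{1}{\beta}\sum_{s=1}^{t-1} SW(\vec{B}^{(s)}).
\]
The cleanest way to control the trailing sum is to note that, again crudely, each $SW(\vec{B}^{(s)}) \leq SW(\vec{B}^{(1)})$ whenever the partial telescoped bound stays below $SW(\vec{B}^{(1)})$ — and if it ever exceeds $SW(\vec{B}^{(1)})$ we are only helping ourselves. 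Feeding $SW(\vec{B}^{(s)}) \leq SW(\vec{B}^{(1)})$ into the sum gives
\[
SW(\vec{B}^{(t)}) \;\geq\; SW(\vec{B}^{(1)})\Bigl(1 - \tfrac{t-1}{\beta}\Bigr) - (t-1)R.
\]
Setting $t = \gamma$ yields $SW(\vec{B}^{(\gamma)}) \geq SW(\vec{B}^{(1)})\bigl(1-\tfrac{\gamma-1}{\beta}\bigr) - (\gamma-1)R = \gamma R - (\gamma-1)R = R > \alpha Rev(\vec{p}^{(\gamma)},\vec{S}^{(\gamma)})$, by the choice of $R$ and our assumption applied at $\ell=\gamma$. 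This contradicts $SW(\vec{B}^{(\gamma)}) \leq \alpha Rev(\vec{p}^{(\gamma)},\vec{S}^{(\gamma)})$, completing the argument.

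The main obstacle I anticipate is making the bound on the trailing sum $\sum_{s} SW(\vec{B}^{(s)})$ fully rigorous, since the $\vec{B}^{(s)}$ are arbitrary and need not be decreasing; the cleanest fix is probably an explicit induction showing $SW(\vec{B}^{(t)}) \geq SW(\vec{B}^{(1)})(1-\tfrac{t-1}{\beta}) - (t-1)R$ for all $t$ directly, where in the inductive step one uses the generalized charging inequality together with the (trivially true) fact $SW(\vec{B}^{(t)}) \leq SW(\vec{B}^{(1)})$ — and this last fact itself needs a one-line justification, e.g. that if it failed we could have started the sequence at $t$ instead, or more simply that the claim is only interesting when $1-\tfrac{\gamma-1}{\beta}>0$ and in the degenerate regime the stated bound is vacuous since the right-hand side is nonpositive. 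I would state the lemma's content as an induction to sidestep the telescoping bookkeeping entirely, and remark that when $\vec{S}^{(t)}=\vec{B}^{(t)}$ and $\beta\to\infty$ this recovers the $\frac{1}{\gamma\alpha}SW(\vec{S}^{(1)})$ bound of Claim~\ref{clm_simplecharging}.
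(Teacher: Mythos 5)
Your contradiction-plus-induction plan is a genuinely different route from the paper, which argues directly by telescoping from the welfare-maximizing benchmark $\vec{B}^{(r)}$ with $r=\argmax_t SW(\vec{B}^{(t)})$ down to $\vec{B}^{(\gamma)}$, and then taking $\ell=\argmax_t Rev(\vec{p}^{(t)},\vec{S}^{(t)})$. However, your version has a real gap: the appeal to $SW(\vec{B}^{(t)})\leq SW(\vec{B}^{(1)})$. The benchmark allocations are completely arbitrary --- the paper stresses this right after Definition~\ref{defn_framework2} --- and the charging inequality only gives lower bounds of the form $SW(\vec{B}^{(t+1)})\geq SW(\vec{B}^{(t)})-\alpha Rev(\vec{p}^{(t)},\vec{S}^{(t)})-SW(\vec{B}^{(t)})/\beta$, which nowhere forbid $SW(\vec{B}^{(t+1)})$ from vastly exceeding $SW(\vec{B}^{(1)})$. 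Neither of your proposed rescues fixes this: ``we could have started the sequence at $t$'' restarts the target rather than justifying the bound, and the vacuous-regime observation concerns the sign of $1-\frac{\gamma-1}{\beta}$, which is independent of whether $SW(\vec{B}^{(t)})$ is monotone. So the fact you call ``trivially true'' is neither trivial nor, in general, true, and feeding it into the trailing sum $\sum_s SW(\vec{B}^{(s)})$ is unjustified.

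The repair is small and preserves your structure. Rewrite the charging step as $SW(\vec{B}^{(t+1)})\geq(1-\frac{1}{\beta})SW(\vec{B}^{(t)})-\alpha Rev(\vec{p}^{(t)},\vec{S}^{(t)})$ and prove $SW(\vec{B}^{(t)})\geq SW(\vec{B}^{(1)})(1-\frac{t-1}{\beta})-(t-1)R$ by induction: apply the inductive lower bound to $SW(\vec{B}^{(t)})$, multiply it through by the nonnegative factor $(1-\frac{1}{\beta})$, and then use $(1-\frac{t-1}{\beta})(1-\frac{1}{\beta})\geq 1-\frac{t}{\beta}$ and $(t-1)R(1-\frac{1}{\beta})\leq(t-1)R$ (this last step needs $R\geq 0$, which you may assume since otherwise the claim's right-hand side is nonpositive and the claim is vacuous). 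No monotonicity of benchmark welfares is required. The paper dodges the same pitfall differently: starting the telescope at the argmax index $r$ makes $SW(\vec{B}^{(t)})\leq SW(\vec{B}^{(r)})$ hold by construction. Both are sound; yours, once repaired, is arguably more transparent about where the $1-\frac{\gamma-1}{\beta}$ factor originates.
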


Unlike for the simple charging property, here we provide a weaker bicriteria trade-off result between revenue and welfare that can be summed up as follows: `either the revenue guarantee is much better than that stated in Claim~\ref{clm_frame2rev} or one can compute a single pricing solution with similar revenue guarantees as before but whose social welfare is only a constant factor away from $SW(\vec{B}^{(1)})$'. For a specific application of this somewhat abstract claim, see Claim~\ref{thm_bicriteriaxos}.

\begin{claim}
\label{clm_frame2bicrit}
\textbf{Bicriteria Bounds}
Suppose that we have a sequence of benchmark allocations and prices as defined in Definition~\ref{defn_framework2}. In addition, suppose that the following new property is satisfied for all $t$: $SW(\vec{B}^{(t)})(1-\frac{1}{\beta}) \leq \alpha Rev(\vec{p}^{(t)}, \vec{S}^{(t)}) + Surp(\vec{p}^{(t)}, \vec{S}^{(t)}).$ Then, there exists $\ell \in [0,\gamma]$ and some constant $c \geq 1$ that depends on the instance such that,

\begin{align*}
(i) SW(\vec{S}^{(\ell)}) & \geq SW(\vec{B}^{(1)})(1-\frac{\gamma-1}{\beta})\{\frac{1}{2}(1-\frac{1}{\beta}) - \frac{c}{2\gamma} + \frac{c}{2\gamma\alpha}\}\\
(ii) Rev(\vec{p}^{(\ell)}, \vec{S}^{(\ell)}) & \geq \frac{c}{2\gamma \alpha}SW(\vec{B}^{(1)})(1-\frac{\gamma-1}{\beta}).
\end{align*}
\end{claim}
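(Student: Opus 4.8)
My plan is to combine the revenue existence result from Claim~\ref{clm_frame2rev} with the new surplus hypothesis, and then run a case analysis parametrized by a threshold constant $c$. First I would recall from Claim~\ref{clm_frame2rev} that there exists $\ell_1 \in [1,\gamma]$ with $Rev(\vec{p}^{(\ell_1)}, \vec{S}^{(\ell_1)}) \geq \frac{1}{\gamma\alpha}SW(\vec{B}^{(1)})(1-\frac{\gamma-1}{\beta})$. If this revenue already happens to be large — say at least $\frac{c}{2\gamma\alpha}SW(\vec{B}^{(1)})(1-\frac{\gamma-1}{\beta})$ for the appropriate instance-dependent $c$ — we may simply be able to take $\ell = \ell_1$, and part $(ii)$ holds; the work is then to verify that part $(i)$ is also satisfied at this index, possibly vacuously when the bracketed welfare coefficient is non-positive. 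So the real content is the complementary regime, where every solution in the sequence has relatively small revenue.

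The key idea in that regime is to exploit the new property $SW(\vec{B}^{(t)})(1-\frac{1}{\beta}) \leq \alpha Rev(\vec{p}^{(t)}, \vec{S}^{(t)}) + Surp(\vec{p}^{(t)}, \vec{S}^{(t)})$ together with the decomposition $SW(\vec{S}^{(t)}) = Rev(\vec{p}^{(t)}, \vec{S}^{(t)}) + Surp(\vec{p}^{(t)}, \vec{S}^{(t)})$ noted earlier in the paper. Substituting the surplus from the latter into the former gives $SW(\vec{B}^{(t)})(1-\frac{1}{\beta}) \leq (\alpha-1) Rev(\vec{p}^{(t)}, \vec{S}^{(t)}) + SW(\vec{S}^{(t)})$, so $SW(\vec{S}^{(t)}) \geq SW(\vec{B}^{(t)})(1-\frac{1}{\beta}) - (\alpha-1) Rev(\vec{p}^{(t)}, \vec{S}^{(t)})$. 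The plan is to apply this at a well-chosen index $t$ where (a) the benchmark welfare $SW(\vec{B}^{(t)})$ has not yet decayed too much relative to $SW(\vec{B}^{(1)})$, and (b) the revenue $Rev(\vec{p}^{(t)}, \vec{S}^{(t)})$ is upper-bounded by the threshold defining the current case. For (a) I would use the generalized charging inequality telescoped: since each step loses at most $\frac{SW(\vec{B}^{(t)})}{\beta}$ plus an $\alpha Rev$ term, and in this case all revenues are small, the benchmark welfares stay within roughly a $(1-\frac{\gamma-1}{\beta})$ factor of $SW(\vec{B}^{(1)})$ — this is exactly the same estimate already used to prove Claim~\ref{clm_frame2rev}. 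Choosing $c$ so that $\frac{c}{2\gamma\alpha}SW(\vec{B}^{(1)})(1-\frac{\gamma-1}{\beta})$ is precisely the case-splitting threshold makes the two bounds in $(i)$ and $(ii)$ fall out after collecting terms: the $\frac{1}{2}(1-\frac{1}{\beta})$ comes from the benchmark-to-solution welfare transfer, the $-\frac{c}{2\gamma}$ from the $(\alpha-1)Rev$ correction, and the $+\frac{c}{2\gamma\alpha}$ from folding the revenue contribution back in.

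The main obstacle I anticipate is bookkeeping the constant $c$ consistently: it is "instance-dependent", so I expect it to be defined as (a scaled version of) $\min_t Rev(\vec{p}^{(t)}, \vec{S}^{(t)}) \cdot \frac{2\gamma\alpha}{SW(\vec{B}^{(1)})(1-(\gamma-1)/\beta)}$ or something similar — essentially the largest threshold for which the "small revenue everywhere" case fails — and then one has to check that with this choice both branches of the dichotomy deliver exactly the claimed pair of inequalities, and that $c \geq 1$ (which should follow from Claim~\ref{clm_frame2rev}'s guarantee, giving $c \geq 1$ automatically since the minimum revenue is at least $\frac{1}{\gamma\alpha}SW(\vec{B}^{(1)})(1-\frac{\gamma-1}{\beta})$). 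A secondary subtlety is handling the boundary index $\ell = 0$ allowed in the statement: presumably $\vec{S}^{(0)}$ is interpreted as (or dominated by) $\vec{B}^{(1)}$, so that taking $\ell=0$ trivially satisfies $(i)$ when the welfare coefficient is large and $(ii)$ is not needed — I would state this interpretation explicitly at the start. The inequalities themselves, once the right index and the right $c$ are fixed, are routine algebra, so I would not belabor them.
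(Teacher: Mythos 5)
Your overall architecture — telescope the generalized charging inequality to control the decay of benchmark welfare, combine with the new surplus hypothesis, and close with $SW(\vec{S}^{(t)}) = Rev + Surp$ — is the same as the paper's, and the manipulation $SW(\vec{S}^{(t)}) \geq SW(\vec{B}^{(t)})(1-\frac{1}{\beta}) - (\alpha-1)Rev(\vec{p}^{(t)},\vec{S}^{(t)})$ is a valid repackaging of the paper's step where it first lower-bounds $Surp^{(\ell)}$ and then adds back $Rev^{(\ell)}$. What is genuinely off is the treatment of $c$ and the index selection.

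You propose to set $c$ proportional to $\min_t Rev(\vec{p}^{(t)},\vec{S}^{(t)})$ and to justify $c\geq 1$ by asserting that ``the minimum revenue is at least $\frac{1}{\gamma\alpha}SW(\vec{B}^{(1)})(1-\frac{\gamma-1}{\beta})$.'' This is not what Claim~\ref{clm_frame2rev} gives you: it guarantees only that \emph{some} index $\ell$ achieves that revenue, not that all (or even the minimum) do. In general the minimum revenue across the sequence can be arbitrarily small, so your $c$ can fall below $1$ and the argument breaks. The paper avoids this by fixing the threshold $T := \frac{1}{2\gamma\alpha}SW(\vec{B}^{(1)})(1-\frac{\gamma-1}{\beta})$ (half the Claim~\ref{clm_frame2rev} guarantee, so an index above $T$ is guaranteed to exist), taking $\ell$ to be the \emph{first} index with $Rev^{(\ell)}\geq T$, and then defining $c := Rev^{(\ell)}/T$. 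With this choice, $c\geq 1$ is automatic, part $(ii)$ is immediate, and the ``all small'' behavior you need for the telescoping applies exactly to the indices $t<\ell$ — there is no genuine case split, just one chosen $\ell$. Your sketch also leaves the case-splitting threshold defined circularly in terms of $c$; the paper's fixed-threshold-then-measure-$c$ order resolves that. (Your remark about $\ell=0$ is harmless — the paper's proof only ever produces $\ell\in[1,\gamma]$, so the $\ell=0$ allowance in the statement is slack.)
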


\section{Price Doubling and Bicriteria Approximations}
\label{sec:pricedoubling}
The main technical contribution of this work is a new $O((\log m + \log k)^2)$-approximation algorithm for revenue maximization with XoS valuations, which is presented in Section~\ref{sec:itemhalving}. Before heading to the technical part, however, we present a few fundamental results for revenue and welfare maximization that illustrate the power of our unifying framework. The framework that we derive in this section, henceforth referred to as the price doubling framework, is based on generating a sequence of pricing solutions such that the prices across successive solutions differ by a multiplicative factor (usually two), and then applying the charging property from Definition~\ref{defn_framework2} to these solutions. Somewhat surprisingly, we observe that our framework is found lurking beneath the hood of a number of revenue maximization algorithms in the literature. These include item pricing algorithms for environments such as unit-demand~\cite{guruswamiHKKKM05} and single-minded valuations~\cite{cheungS08}, unlimited supply markets~\cite{balcanBM08} but also mechanisms based on more general pricing schemes such as \emph{envy-free bundle pricing}~\cite{feldmanGL15} and \emph{non envy-free item pricing}~\cite{elbassioniFS10}. Our simple observation allows us to immediately extend \emph{all} of these algorithms to bicriteria (revenue,welfare) approximations without any loss in the asymptotic factors for revenue.

%
%

\subsection{Reduction from Welfare to Revenue for Simultaneous Mechanisms}
Our first, and most general result in this section involves a reduction from revenue to welfare maximization based on price doubling, which yields not just the bicriteria algorithms mentioned previously, but also new revenue approximations for other settings (see Theorem~\ref{thm:implications} and Table \ref{tab:results}). Informally, the result indicates that in order to design item pricing mechanisms with good revenue, it is sufficient to focus our attentions on mechanisms that maximize social welfare. Our algorithm can be viewed as a generalization of a similar black-box transformation from revenue to Walrasian equilibrium for unit-demand buyers in~\cite{guruswamiHKKKM05}. However, the extension is quite non-trivial owing to the generality of our result, as it holds for settings with `arbitrary combinatorial valuations' under some small assumptions.

We now define some notions that are pertinent to our result. We say that a buyer has a (monotone) \emph{single-item valuation} if and only if $v(\{i\}) > 0$ for at most one good $i \in \mathcal{I}$. Secondly, an allocation, and by extension an algorithm, is said to be \emph{locally welfare maximizing} if for any buyer $j$ who is not allocated any goods, and any set of goods $T$ that are not \emph{sold out}, $v(T) = 0$. For instance, observe that any mechanism that always outputs a socially optimal solution trivially satisfies this property. The following theorem takes as input a black-box algorithm for a simultaneous item pricing mechanism whose social welfare is an $\alpha$-approximation to $SW(OPT)$ and creates an item pricing algorithm for revenue with a $\log$ factor increase in the approximation factor. Moreover, the resulting solution will still have good welfare at the same time.

\begin{theorem}\label{thm.black-box}
Let $\mathcal{V}$ be a class of buyer valuations that includes single-item valuations. Suppose that we are given black-box access to a locally welfare maximizing algorithm $Alg$, that provides an $\alpha$-approximate pricing solution for the problem of designing a simultaneous mechanism for welfare maximization for instances where buyer valuations belong to $\mathcal{V}$. Then, we can efficiently compute prices such that the simultaneous mechanism with these prices provides a $O(\alpha (\log(\alpha) + \log(m) + \log(k)))$-approximation to revenue and a $(3\alpha)$-approximation to social welfare.
\end{theorem}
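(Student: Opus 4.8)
The plan is to fit the problem into the generalized charging framework of Definition~\ref{defn_framework2}: I would construct a sequence of $\gamma = O(\log\alpha+\log m+\log k)$ pricing solutions $(\vec p^{(t)},\vec S^{(t)})_{t=1}^{\gamma}$ together with benchmark allocations $\vec B^{(1)},\dots,\vec B^{(\gamma)}$, where $\vec B^{(1)}$ is (essentially) a welfare-optimal allocation $OPT$, that satisfy the generalized charging property with the black-box factor $\alpha$ and with $\beta=\Theta(\gamma^{2})$, as well as the extra hypothesis of Claim~\ref{clm_frame2bicrit}. Granting this, Claim~\ref{clm_frame2rev} immediately yields a price vector with $Rev \geq \frac{1}{\gamma\alpha}SW(OPT)\bigl(1-\frac{\gamma-1}{\beta}\bigr) = \Omega\!\bigl(SW(OPT)/(\alpha(\log\alpha+\log m+\log k))\bigr)$, and since $SW(OPT)$ upper bounds the revenue of every individually rational mechanism this is the claimed $O(\alpha(\log\alpha+\log m+\log k))$-approximation to revenue; Claim~\ref{clm_frame2bicrit} then upgrades this to a single solution that attains (up to constants) the same revenue while also having welfare within a $3\alpha$ factor of $SW(OPT)$, which is the bicriteria part.

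\textbf{Building the sequence.} First I would run $Alg$ once to get an $\alpha$-approximate simultaneous outcome, and then apply price doubling: roughly, $\vec p^{(t)}$ is this price vector scaled up by a factor $2^{t}$, with $\vec S^{(t)}$ re-derived as a \emph{valid} simultaneous outcome (envy-free and supply-respecting) at the scaled prices. To ensure such an outcome exists at every scale and that it collects the revenue I need, at level $t$ I would re-invoke $Alg$ on a modified instance in which each buyer is restricted to demand only goods she actually bought one level earlier (and, on the benchmark side, only one representative good of her $OPT$-bundle); this is exactly where the hypothesis that $\mathcal V$ contains single-item valuations enters, since these restricted instances are mixtures of general and single-item valuations. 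The benchmark $\vec B^{(t)}$ is obtained from $OPT$ by dropping all buyers whose $OPT$-value has fallen below the current scale $2^{t}$, so that $SW(\vec B^{(t)})-SW(\vec B^{(t+1)})$ equals the total $OPT$-value of one geometric value class of buyers; because we may harmlessly discard the bottom $O(1/(mk))$ fraction of the optimal value and truncate once the scale exceeds $SW(OPT)$, only $\gamma = O(\log\alpha+\log m+\log k)$ levels survive and $SW(\vec B^{(\gamma)})$ is negligible, matching part $(ii)$ of Definition~\ref{defn_framework2}.

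\textbf{Verifying the charging property, and the main obstacle.} Fix a level $t$ and a buyer $j$ in value class $t$ (so $v_j(O_j)\asymp 2^{t}$). If at prices $\vec p^{(t)}$ her utility-maximizing bundle costs $\Omega(v_j(O_j))$, she contributes that much to $Rev(\vec p^{(t)},\vec S^{(t)})$ and is accounted for; otherwise her demanded bundle is cheap, and since her $OPT$-bundle $O_j$ is available to her unless some of its goods are sold out, the utility-maximization condition forces many goods of $O_j$ to be sold out at level $t$ --- those goods are bought by \emph{other} buyers at price scale $2^{t}$, so summing over all ``unsatisfied'' class-$t$ buyers recovers a constant fraction of their total $OPT$-value as revenue. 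The few buyers this fails to charge are controlled by the \emph{locally welfare maximizing} property of $Alg$: a buyer left empty-handed values every non-sold-out good at zero, so no welfare leaks except through sold-out (hence revenue-bearing) goods, up to the slack $SW(\vec B^{(t)})/\beta$, which is made small by $\beta=\Theta(\gamma^{2})$; inserting the factor $\alpha$ to absorb the suboptimality of $Alg$'s allocations yields the inequality of Definition~\ref{defn_framework2}, and since the ``cheap-bundle'' buyers deposit their value into $Surp(\vec p^{(t)},\vec S^{(t)})$ the same case analysis also gives the extra hypothesis $SW(\vec B^{(t)})(1-1/\beta)\le\alpha Rev(\vec p^{(t)},\vec S^{(t)})+Surp(\vec p^{(t)},\vec S^{(t)})$ of Claim~\ref{clm_frame2bicrit}. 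I expect the hard part to be precisely this charging step --- showing that cheap demand forces enough of a buyer's $OPT$-goods to be sold out \emph{elsewhere}, and apportioning the resulting revenue among the unsatisfied class-$t$ buyers without double counting, while keeping the modified instances inside $\mathcal V$; once this is in hand, everything else is geometric-sum bookkeeping and an appeal to Claims~\ref{clm_frame2rev} and~\ref{clm_frame2bicrit}.
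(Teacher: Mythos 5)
Your proposal is on the right track at the top level (price doubling plus a charging argument), but it diverges from the paper's construction in a way that leaves a real gap, and you've correctly flagged exactly where the gap is.

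The paper uses the \emph{simple} charging property (Definition~\ref{defn_framework1}) with $\vec S^{(t)}=\vec B^{(t)}$, not the generalized one, and the crucial mechanism is quite different from what you describe. Rather than scaling up an output price vector and ``re-deriving'' an outcome, or restricting real buyers' demand sets, the paper \emph{adds} dummy buyers: for each good $i$ it introduces $k_i+1$ new single-item buyers who value $i$ at the reserve level $r_t$, and then simply reruns $Alg$ on this augmented instance. This forces $p_i(r_t)\ge r_t$ (otherwise too many dummies would demand $i$), and the locally-welfare-maximizing hypothesis then forces any good priced strictly above $r_t$ to be fully sold (since unsold copies could be handed to a zero-utility dummy). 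The charging inequality $SW(\vec S^{(t)})-SW(\vec S^{(t+1)})\le 2\,Rev(\vec p^{(t)},\vec S^{(t)})$ then falls out cleanly from envy-freeness of the original buyers at the new prices, splitting goods into those with $p_i^{(t+1)}=r_{t+1}=2r_t\le 2p_i^{(t)}$ and those priced above $r_{t+1}$ (which are sold out and so charged to revenue at level $t+1$). No generalized slack term $\beta$, no value-class decomposition of $OPT$, no apportionment of sold-out goods among ``unsatisfied'' buyers is needed. The baseline is $SW(\vec S^{(1)})$, shown separately to be at least $SW(OPT)/(2\alpha)$ since the starting reserve $SW^0/(2mk)$ is low enough to cost at most half the welfare; this sidesteps the need to use $OPT$ itself as a benchmark, which you cannot do since $Alg$ only guarantees an $\alpha$-approximation.

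Two concrete problems with your route. First, modifying \emph{real} buyers' valuations (restricting them to a single good they bought earlier) and re-invoking $Alg$ does not yield a valid outcome of the original instance: envy-freeness is with respect to the original $v_j$, and a buyer restricted to one item may strictly prefer some other unsold bundle at the computed prices. The paper avoids this by leaving the original buyers' valuations untouched and only adding auxiliary buyers, so the restriction of an envy-free outcome of $G(r)$ to the original buyer set stays envy-free for $G$. Second, your charging step---``cheap demand forces many $OPT$-goods to be sold out elsewhere, then apportion that revenue among unsatisfied class-$t$ buyers''---is not carried out, and you yourself identify it as the hard part; there is genuine double-counting to resolve and no obvious way to recover the full $SW(\vec B^{(t)})-SW(\vec B^{(t+1)})$ gap this way. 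The paper's proof never faces this issue because it charges the welfare drop of each buyer to the price of her own current bundle, via the envy-freeness inequality $v_j(S_j^{(t+1)})-\sum_{i\in S_j^{(t+1)}}p_i^{(t+1)}\ge v_j(S_j^{(t)})-\sum_{i\in S_j^{(t)}}p_i^{(t+1)}$, which makes the bookkeeping local and immediate.
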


We remark that the logarithmic gap between welfare and revenue is known to be tight even when all buyers have single item valuations~\cite{guruswamiHKKKM05,feldmanGL15}.

\noindent\textbf{Black-box Reduction} Before constructing our black-box reduction, we require some pertinent notation. Given an instance $G$ of our problem and a fixed `reserve price' $r \geq 0$, define a new instance $G(r)$ consisting of the same set of goods as $G$ and an augmented set of buyers comprising of the original buyer set $\mathcal{N}$ and a new set $\mathcal{N}'$ of $(m+1)k$ dummy buyers. Specifically for each good $i \in \mathcal{I}$, we define $k_i+1$ dummy buyers all of whom have single item valuations with $v(\{i\}) = r$ (and so $v(\{i'\}) = 0 ~ \forall i' \neq i$). Assuming that the original instance $G$ is clear from the context, we define $(\vec{p}(r), \vec{S}(r))$ to be the outcome of the algorithm $Alg$ for the instance $G(r)$ without any of the dummy buyers, i.e., we run $Alg$ for input $G(r)$ and set $\vec{p}(r)$ to be the prices output by this algorithm and $\vec{S}(r)$ to be sub-allocation output by this algorithm consisting only of the original set of buyers $\mathcal{N}$. Now, we define our black-box reduction as follows for $\gamma = 1+\log(\alpha m k)$, and $SW^0 = SW(Alg(G))$, which is the social welfare of the allocation returned by $Alg$ for the original instance.

\begin{lessspaceenum*}
\item For $t=1$ to $t = \gamma$
\item Set reserve price $r_t = 2^{t-1} \frac{SW^0}{2 mk}$.
\item Define $(\vec{p}^{(t)}, \vec{S}^{(t)}) = (\vec{p}(r_t), \vec{S}(r_t))$, i.e. the outcome of $Alg$ with the reserve price $r_t$.
\item Return the smallest index $\ell \in [1,\gamma]$ at which $Rev(\vec{p}^{(t)}, \vec{S}^{(t)}) \geq \frac{SW(\vec{S}^{(1)})}{6\gamma}$.
\end{lessspaceenum*}

The proof proceeds by showing that the sequence of solutions $(\vec{p}^{(t)}, \vec{S}^{(t)})_{t=1}^{\gamma}$ satisfies the simple charging property, and therefore, the bicriteria result follows from Claim~\ref{clm_simplecharging} with $c=\frac{3}{2}$. Details can be found in the Appendix. Note that we could have presented this theorem in the form of a continuous trade-off (by varying the value of $c$) quantifying the amount of revenue that the seller has to sacrifice in order to obtain the target welfare guarantee.

\subsection{Implications of the Black-Box Reduction and Charging Property}
\label{sec:implications}
The next theorem of this paper is a collection of bicriteria approximation algorithms that simultaneously maximize both revenue and welfare for a number of settings, not just restricted to item pricing. Some of these results are a consequence of Theorem~\ref{thm.black-box}, whereas others follow from previous papers~\cite{balcanBM08,cheungS08,elbassioniFS10,feldmanGL16}, where the algorithms implicitly used the charging property to obtain the revenue bounds. Before presenting the theorem, we define two mechanisms that strictly generalize (relax) the simultaneous item pricing mechanism.
\begin{description}
\item[Simultaneous bundle pricing]~\cite{feldmanGL16,rubinstein16} The seller partitions the set of goods into bundles, posts one price per bundle, and each buyer purchases the bundle of bundles that maximize her utility under the given prices.

\item[Non Envy-Free Item Pricing]~\cite{elbassioniFS10} The seller still posts a single price per good, but has the power to allocate non utility maximizing bundles to users as long as the total amount paid by any buyer is no larger than her valuation for the bundle received.
\end{description}




\begin{theorem}\label{thm:implications}
We can compute the following $\{\text{revenue, welfare}\}$ bicriteria approximations in poly-time for limited supply settings
\begin{enumerate}
\item A $\{O(\log m + \log k), 3\}$-approximation when buyers have gross substitutes valuation.

\item A $\{O(\sqrt{m}\log(k_{max})), O(\sqrt{m})\}$-approximation for single-minded valuations based on the algorithm of~\cite{cheungS08}.

\item A $\{O(\alpha^*\log(N)), 6\alpha^*\}$-approximation for arbitrary combinatorial valuations based on simultaneous bundle pricing~\cite{feldmanGL16} and a $\{O(\log(k_{max})), 6\}$-approximation algorithm based on non envy-free item pricing~\cite{elbassioniFS10} when buyers have subadditive valuations.
\end{enumerate}
Here $k_{max}$ refers to the maximum available supply of any good, and $\alpha^*$  denotes the approximation factor corresponding to the best known algorithm for the welfare maximizing allocation problem for the given class of valuations\footnote{For example $\alpha^* = \frac{e}{e-1}$ for submodular, XoS valuations~\cite{dobzinskiNS10} and~$\alpha^*=\sqrt{m}$ for general valuations~\cite{kolliopoulosS04}} .

\end{theorem}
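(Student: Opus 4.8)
The plan is to prove the three parts separately: part~1 falls out of the black-box reduction of Theorem~\ref{thm.black-box}, while parts~2 and~3 follow by re-examining the cited revenue algorithms and observing that each already produces (up to a harmless reformulation) a sequence of pricing solutions obeying the simple charging property of Definition~\ref{defn_framework1}, after which the bicriteria half of Claim~\ref{clm_simplecharging} yields the stated pair of guarantees.

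\textbf{Part 1.} I would instantiate Theorem~\ref{thm.black-box} with $\mathcal{V}$ equal to the class of gross substitutes valuations and with $Alg$ the classical polynomial-time algorithm that computes a Walrasian equilibrium (equivalently, an optimal-welfare item pricing) for gross substitutes buyers. Three hypotheses must be checked. First, $\mathcal{V}$ contains all single-item valuations, which is immediate since a valuation supported on one good is trivially gross substitutes; the augmented instances $G(r)$ used inside the reduction therefore still consist of gross substitutes buyers, so $Alg$ may legitimately be invoked on them. Second, $Alg$ is $\alpha$-approximate for welfare with $\alpha=1$, which is the classical existence and polynomial-time computability of an exact Walrasian equilibrium for gross substitutes. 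Third, $Alg$ is locally welfare maximizing, because it always returns a socially optimal allocation and, as observed just before Theorem~\ref{thm.black-box}, any algorithm that always outputs a socially optimal allocation has this property. Substituting $\alpha=1$ into the conclusion of Theorem~\ref{thm.black-box} gives an $O(\log m+\log k)$-approximation to revenue and a $3$-approximation to welfare, which is exactly part~1.

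\textbf{Parts 2 and 3.} For each of these settings the recipe is the same. Fix the valuation class, and let $\vec{S}^{(1)}$ be the allocation produced by the best known \emph{welfare}-approximating item (resp.\ bundle, resp.\ non-envy-free item) pricing for that class, so that $SW(\vec{S}^{(1)})$ is within a constant factor of $SW(OPT)/\sqrt{m}$ for single-minded valuations, of $SW(OPT)/\alpha^*$ for arbitrary valuations under bundle pricing, and of $SW(OPT)$ (a genuine constant, since with non-envy-free item pricing one may seed with a constant-factor welfare allocation) for subadditive valuations under non-envy-free item pricing. I would then revisit the algorithm of~\cite{cheungS08} (resp.~\cite{feldmanGL16}, resp.~\cite{elbassioniFS10}) and exhibit the sequence of solutions it implicitly builds — obtained by repeatedly scaling prices or reserve thresholds by a factor of two starting from $\vec{S}^{(1)}$ — as a sequence satisfying the charging property with a \emph{constant} $\alpha$ and with $\gamma = O(\log k_{\max})$ (resp.\ $O(\log N)$, resp.\ $O(\log k_{\max})$). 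Concretely, this means extracting from each paper's analysis the telescoping inequality ``the drop in $SW$ from one price scale to the next is at most a constant times the revenue collected at the coarser scale'' together with the boundary condition $SW(\vec{S}^{(\gamma)}) \le \alpha\,Rev(\vec p^{(\gamma)},\vec S^{(\gamma)})$ at the finest scale. Finally I would invoke the bicriteria part of Claim~\ref{clm_simplecharging} with a small constant target $c$: this returns a single solution whose welfare is $\Omega(SW(\vec{S}^{(1)}))$ — hence $O(\sqrt m)$-, $6\alpha^*$-, and $6$-approximate respectively once the seed approximation and $c$ are chosen to make the constants work out — and whose revenue is $\Omega\!\big(SW(\vec S^{(1)})/(\gamma\alpha)\big)$, i.e.\ $\Omega\!\big(SW(OPT)/(\sqrt m\log k_{\max})\big)$, $\Omega\!\big(SW(OPT)/(\alpha^*\log N)\big)$, and $\Omega\!\big(SW(OPT)/\log k_{\max}\big)$ respectively, matching the stated revenue bounds. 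All steps run in polynomial time because the cited algorithms do and because Claim~\ref{clm_simplecharging} is constructive (one simply scans the $O(\gamma)$ solutions).

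\textbf{Main obstacle.} The real work, and the main obstacle, is the verification in parts~2--3 that the three cited algorithms genuinely fit the charging-property template. This is a re-reading of their proofs through our lens rather than new mathematics, but it needs care on two fronts: (i) those papers state their guarantees as revenue-to-$SW(OPT)$ ratios rather than as welfare telescoping, so one must reorganize their potential/charging arguments into the exact form of Definition~\ref{defn_framework1}; and (ii) for single-minded and for bundle pricing the intermediate solutions are not literally outcomes of the simultaneous item-pricing mechanism (they involve demand sets and bundles), so one must confirm that $Rev$ and $SW$ as defined here coincide with the quantities appearing in~\cite{cheungS08,feldmanGL16} and that the resulting prices remain valid outcomes of the relevant (relaxed) mechanism. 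A secondary, purely bookkeeping, issue is choosing the seed welfare approximation and the target parameter $c$ in Claim~\ref{clm_simplecharging} so that the welfare factors come out as the claimed $3$, $O(\sqrt m)$, $6\alpha^*$, and $6$.
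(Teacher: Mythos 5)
Your proposal is correct and follows essentially the same route as the paper: Part~1 is obtained by plugging the exact Walrasian-equilibrium algorithm for gross substitutes into Theorem~\ref{thm.black-box} with $\alpha=1$, and Parts~2--3 are obtained by recasting the cited algorithms as sequences satisfying the (simple) charging property and then invoking the bicriteria half of Claim~\ref{clm_simplecharging}. The one piece of work you correctly flag as an ``obstacle'' --- that bundle pricing and non-envy-free item pricing are not simultaneous item-pricing mechanisms, so the charging machinery must be restated for a generic mechanism $(\vec P,\vec S)$ charging per-buyer payments --- is precisely what the paper carries out at the start of its proof, and for the non-envy-free case the paper also uses a mild variant of the charging inequality (charging to the revenue of the \emph{next} solution rather than the current one, plus a factor-$2$ loss from rounding the fractional solutions of~\cite{elbassioniFS10}); these are bookkeeping refinements to your plan rather than departures from it.
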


The definitions of all of these classes of valuations are provided in Appendix~\ref{app:definitions}. The result for unit-demand buyers in Table~\ref{tab:results} follows from the fact that unit demand valuations satisfy the gross substitutes property. In addition, for settings with unlimited supply, it is not hard to see that one can obtain a $O(\log(N) + \log(m)), O(1))$-bicriteria approximation either by observing that the results in~\cite{balcanBM08} satisfy the charging property or by leveraging Theorem~\ref{thm.black-box} with a forced supply constraint of $N$ on each good. To the best of our knowledge, our results are the first such approximation algorithms for settings having gross substitutes valuations. Finally, we reiterate a point that we made in the Introduction. Any $c$-approximation algorithm for revenue obtained in terms of the optimum social welfare is trivially a $(c, c)$-bicriteria approximation. What makes Theorem~\ref{thm:implications} appealing is that the bound for welfare is much better than $c$, and in most cases, only a constant factor away from the optimum social welfare.

\section{Revenue Maximization via Item Halving for XoS Buyers}
\label{sec:itemhalving}
In this section, we present the central result of this paper: a computationally efficient, sequential posted pricing mechanism that achieves a $O((\log m + \log k)^2)$-approximation to the optimum revenue. In fact, as discussed previously, our approximation factor is obtained with respect to the social welfare maximizing allocation. Depending on the arrival order, our $O((\log m + \log k)^2)$ sequential mechanism comes in two flavors: the partially adversarial model, where the arrival order is adversarial within a bounded set of arrival orders, and the random model, where the buyers arrive according to an arbitrary distribution $D^{\Pi}$.  
%

\subsubsection*{Revenue Maximization for XoS Buyers: Challenges and Techniques}
How does one go about pricing items in unit supply for buyers having complex valuations? The state of the art in item pricing is based on the price doubling technique discussed in the previous section. Specifically, powerful approximation algorithms have resulted from the simple observation made in~\cite{balcanBM08} that offering a `single buyer' a random price (on all goods) from the set $\{2^t \frac{H}{m^2}\}_{t=1}^{1+\log^2 m}$, for a suitable choice of $H$, is sufficient to ensure good revenue in expectation. For instance, Chakraborty et al.~\cite{chakrabortyHK09} successfully leveraged this idea to design a dynamic pricing algorithm that obtains a polylog-approximation for revenue by resetting the price uniformly at random for every single buyer. Unfortunately, it is known that static pricing based on this technique could be rather sub-optimal ($2^{\Omega(\sqrt{\log m})}$) even for two buyers, since the earlier buyer could have large valuations for low revenue goods intended for the later buyer. The natural next step is to then consider pricing items differently based on the value they generate for buyers in some allocation, for e.g., as was done in~\cite{feldmanGL15}. However, this need not result in good revenue either for similar reasons. Therefore, we are forced to infer that current pricing techniques including price doubling are ineffectual in a static setting since they do not take into account the dependencies introduced by limited supply~\cite{chakrabortyHK09}.

The crucial observation that powers our centerpiece result is that in order to obtain good revenue, not only should the pricing be sensitive to the valuations, but it should also be sensitive to specific types of allocations whose sub-allocations are strongly undesirable to the buyers. Pricing items based on the value that it gives buyers in such `special allocations' forces the buyers to steer clear of sub-optimal purchases, thereby ensuring good revenue. Concretely, we reduce the problem of designing sequential mechanisms with good revenue to that of computing an allocation $\vec{A}$ such that for every sub-allocation $\vec{S}$ that only allocates half the number of items as $\vec{A}$ (i.e., $\theta(\vec{A}) \geq 2\theta(\vec{S})$), its social welfare is at least a $\alpha$ factor smaller than that of $\vec{A}$. 

How do we compute such a `special allocation'? For this, we turn to the item halving framework that takes some allocation as input and computes a sequence of benchmark allocations $(\vec{B}^{(t)})_{t=1}^{\gamma}$ such that in successive allocations, the number of items assigned to the buyers is halved and yet, the two allocations are close in terms of social welfare. Eventually, the framework must hit an impasse, either we compute the desired special allocation or we simply end up with one good, for which revenue maximization is trivial. At a more fundamental level, we couple our benchmark solutions with item prices and show that the sequence of solutions must satisfy the charging property from Definition~\ref{defn_framework2}. This allows us to efficiently identify a pricing solution $(\vec{p}^{(\ell)},\vec{S}^{(\ell)})$ such that the sequential mechanism with these prices yields a good approximation to the input allocation. Finally, looking beyond the unit-supply setting, our algorithm must also handle the fact that different copies of a good must be given the same price even though they may generate vastly different values for different buyers.


\subsection{Item Halving Framework}
\textbf{Notation} We will implicitly assume that we are dealing with XoS valuations for the rest of the section. Given an allocation $\vec{S} = (S_1, S_2, \ldots, S_N)$, we use $x_j^{S_j}$ to denote the XoS clause that maximizes buyer $j$'s valuation for the set $S_j$ of goods, i.e., $x_j^{S_j}(S_j) = v_j(S_j)$. Then, we can define the total value derived by buyers from any given good $i \in \mathcal{I}$ with respect to an allocation $\vec{S}$ as $U_i(\vec{S}) = \sum_{j \in N_i(\vec{S})}x_j^{S_j}(i)$. Additionally, for any integer $1 \leq r \leq k_i(\vec{S})$, we define $Top_i(r,\vec{S}) \subseteq N_i(\vec{S})$ to be the set of $r$ buyers in $N_i(\vec{S})$ with the highest values of $x^{S_j}_j(i)$.

Now, we are ready to define our first black-box algorithm that takes as input an allocation $\vec{A}$ and a parameter $\gamma \geq 1$, and computes a series of benchmark allocations $(\vec{B}^{(1)}, \ldots, \vec{B}^{(\gamma)})$ as well as pricing solutions $((\vec{p}^{(1)},\vec{S}^{(1)}) \ldots, (\vec{p}^{(\gamma)},\vec{S}^{(\gamma)}))$ that are in accordance with Definition~\ref{defn_framework2}. We present our black-box reduction (the core algorithm) in two parts, where the first part handles the main reduction and the computation of successive benchmark allocations, and the second part handles the boundary or tail condition, where a buyer has a large valuation for a single item. Note that the purpose of our core algorithm is to simply compute prices (and supply limits) for our eventual poly-logarithmic sequential mechanism.

\begin{figure}[ht]
\begin{framed}
\begin{enumerate}
\item \textbf{Input:} Allocation $\vec{A}$, parameter $\gamma \geq 1$. 
\item Define $\vec{B}^{(1)} = \vec{A}$.
\item For $t=1$ to $t=\gamma-1$
\item Set $\vec{p}^{(t)} = Prices(\vec{B}^{(t)},\gamma)$.~~~[See Figure \ref{fig.prices}]
\item Let $\pi$ be an arrival order chosen by an arbitrary function $\Gamma(\vec{p}^{(t)}, \vec{B}^{(t)})$ (to be fixed later)
\item Let $\vec{S}^{(t)}$ be an allocation obtained upon running the sequential mechanism with prices $\vec{p}^{(t)}$, supply $\vec{k}(\vec{B}^{(t)})$, and buyer arrival order $\pi$.
\item (Case I) if $2\theta(\vec{S}^{(t)}) \leq \theta(\vec{B}^{(t)})$, i.e., less than half the allocated goods from $\vec{B}^{(t)}$ are sold in $\vec{S}^{(t)}$: ~~~ set $\vec{B}^{(t+1)} = \vec{S}^{(t)}$.
\item (Case II) else:\\
~~~~~~~~~~~~set $\vec{B}^{(t+1)}=\text{Alloc-Unsold}(\vec{B}^{(t)},\vec{S}^{(t)})$.~~~[See Figure \ref{fig.unsold}]
\end{enumerate}
\end{framed}
\caption{Core Algorithm: Part I}
\label{fig:coremain}
\end{figure}

\begin{figure}[h]
\begin{framed}
\begin{enumerate}
\item Define $j^*$, $i^*$ to be a buyer, good pair that maximizes $v_{j}(\{i\})$.
\item Set $\vec{p}^{(\gamma)}_{i^*} = v_{j^*}(\{i^*\}) - \epsilon$, for some sufficiently small $\epsilon > 0$, and $\vec{p}^{(\gamma)}_{i} = \infty$, for $i \neq i^*$.
\item Let $\vec{S}^{(\gamma)}$ be the allocation in which $j^*$ receives the item $i^*$, and no one else receives anything. 
\end{enumerate}
\end{framed}
\caption{Core Algorithm Tail Conditions} \label{fig:tail}
\end{figure}

\paragraph{Description of Core Algorithm}
The core algorithm presented here (Figures~\ref{fig:coremain},~\ref{fig:tail}) is the main workhorse behind all of our results. The algorithm makes use of three separate functions. $Prices(\vec{S},\gamma)$ computes the next set of prices for each stage of our mechanism. $\Gamma$ selects an arbitrary arrival order, and thus determines the next allocation $\vec{S}^{(t)}$: everything we prove in this section will hold for arbitrary $\Gamma$, while in Section \ref{sec:applying} we will show how to formulate $\Gamma$ in order to create solutions with good revenue for adversarial and random arrival orders. For example, in the adversarial case $\Gamma$ should choose the arrival order that will result in the worst possible revenue; in this case when we show that some solution $S^{(\ell)}$ has high revenue this will also be true for all other allocations that could be created by other arrival orders. Finally, $\text{Alloc-Unsold}(\vec{B},\vec{S})$ allocates the unsold goods from the mechanism to the original buyers. The modular form of our black-box reduction provides us with the flexibility to define different variants based on the specific setting at hand. Beginning with the input solution as the first benchmark allocation, the algorithm computes prices $\vec{p}^{(t)}$ for each $t$ and then executes the sequential posted pricing mechanism with these prices and supply limits that equal the number of copies of each good allocated in the corresponding benchmark solution. A crucial decision is made in each iteration based on the number of items sold by the sequential mechanism. If this number is small, i.e., less than half of the items allocated in the benchmark allocation, the solution returned by the mechanism is carried over to the following round as the next benchmark solution, i.e., $\vec{B}^{(t+1)} = \vec{S}^{(t)}$. Otherwise, we instead allocate all the items which were sold in $\vec{B}^{(t)}$ but {\em unsold} in $\vec{S}^{(t)}$ using the Alloc-Unsold function to form the next benchmark solution. Thus, at each iteration the number of items allocated in the benchmark solution reduces by at least a factor of two. Finally, in the second half of the core algorithm (tail condition), we return a solution maximizing revenue subject to the fact that exactly one item is sold. We remark that our algorithm requires access to a demand oracle and XoS oracle for each buyer's valuation function, which are standard assumptions when dealing with XoS valuations.


\begin{figure}[h]
\begin{framed}
Return $\vec{p}$ such that:
\begin{align*}
 p_i & = \frac{U_i(\vec{S})}{2\gamma k_i(\vec{S})} & \text{if} ~~~ k_i(\vec{S}) \neq 0\\
 p_i & = \infty & \text{otherwise} & .
\end{align*}
\end{framed}
\caption{Function $Prices(\vec{S}, \gamma$)}
\label{fig.prices}
\end{figure}


\begin{figure}[htb]
\begin{framed}
Construct an allocation $\vec{C}$ as follows:
\begin{enumerate}
\item For each good $i \in \mathcal{I}$:
\item Define $q_i := k_i(\vec{B}) - k_i(\vec{S})$.
\item if $q_i > 0$:
\item Allocate item $i$ to each of the buyers in $Top_i(q_i,\vec{B})$.
\end{enumerate}
Return the allocation $\vec{C}$.
\end{framed}
\caption{Function $\text{Alloc-Unsold}(\vec{B},\vec{S})$}
\label{fig.unsold}
\end{figure}

We now prove the main theorem of this section, which allows us to use the charging property for the allocations generated by the above algorithm. In Section \ref{sec:applying}, we show how to apply this algorithm and the charging property to actually produce mechanisms with high revenue.

\begin{theorem} For $\gamma=\log m + \log k$, the benchmark allocations $\{\vec{B}^{(t)}\}_{t=1}^{t=\gamma}$ and pricing solutions $\{(\vec{p}^{(t)},\vec{S}^{(t)})\}_{t=1}^{t=\gamma}$ generated by the above algorithm satisfy the generalized charging property (Definition \ref{defn_framework2}) with $\alpha=\beta=2\gamma$, and thus for some $\ell$ we have that
$$SW(\vec{A})\leq 4(\log m + \log k)^2 Rev(\vec{p}^{(\ell)}, \vec{S}^{(\ell)}).$$
\label{thm.XoScharging}
\end{theorem}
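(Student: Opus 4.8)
The plan is to verify that the benchmark allocations and pricing solutions output by the core algorithm (Figures~\ref{fig:coremain},~\ref{fig:tail}) satisfy the generalized charging property of Definition~\ref{defn_framework2} with $\alpha=\beta=2\gamma$, and then to read off the final inequality from Claim~\ref{clm_frame2rev}: since $\vec{B}^{(1)}=\vec{A}$, $\gamma=\log m+\log k$, and $1-\frac{\gamma-1}{\beta}=1-\frac{\gamma-1}{2\gamma}\ge\tfrac12$, that claim produces an $\ell$ with $Rev(\vec{p}^{(\ell)},\vec{S}^{(\ell)})\ge\frac{1}{\gamma\alpha}SW(\vec{A})\big(1-\frac{\gamma-1}{\beta}\big)\ge\frac{1}{4\gamma^{2}}SW(\vec{A})$, which rearranges to the stated bound. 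Two bookkeeping facts will be used repeatedly. First, $SW(\vec{S})=\sum_{i\in\mathcal{I}}U_i(\vec{S})$: expand $v_j(S_j)=x_j^{S_j}(S_j)=\sum_{i\in S_j}x_j^{S_j}(i)$ and swap the order of summation. Second, for the prices $p_i^{(t)}=\frac{U_i(\vec{B}^{(t)})}{2\gamma k_i(\vec{B}^{(t)})}$ of Figure~\ref{fig.prices} we have $\sum_i p_i^{(t)}k_i(\vec{B}^{(t)})=\frac{SW(\vec{B}^{(t)})}{2\gamma}$ and, for each good, $p_i^{(t)}k_i(\vec{S}^{(t)})=\frac{U_i(\vec{B}^{(t)})}{2\gamma}\cdot\frac{k_i(\vec{S}^{(t)})}{k_i(\vec{B}^{(t)})}$; in particular a good $i$ that ends up \emph{sold out} in $\vec{S}^{(t)}$ (i.e.\ $k_i(\vec{S}^{(t)})=k_i(\vec{B}^{(t)})$) contributes exactly $U_i(\vec{B}^{(t)})/(2\gamma)$ to $Rev(\vec{p}^{(t)},\vec{S}^{(t)})$. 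Write $p^{(t)}(T)=\sum_{i\in T}p_i^{(t)}$.

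For condition~(i) of Definition~\ref{defn_framework2} I split into the two cases of the algorithm. In \textbf{Case~I}, where $\vec{B}^{(t+1)}=\vec{S}^{(t)}$, fix a buyer $j$ and partition $B_j^{(t)}$ into $B_j^{(t)}\cap S_j^{(t)}$, the set $T_j$ of goods of $B_j^{(t)}\setminus S_j^{(t)}$ that are \emph{not} sold out in $\vec{S}^{(t)}$, and the set $T_j'$ of those that \emph{are} sold out. A good in $T_j$ was never exhausted, hence was available when $j$ arrived; every good of $S_j^{(t)}$ was also available to $j$, so $S_j^{(t)}\cup T_j$ was a feasible purchase for $j$, and utility-maximization in the sequential mechanism gives $v_j(S_j^{(t)}\cup T_j)\le v_j(S_j^{(t)})+p^{(t)}(T_j)$. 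Since $x_j^{B_j^{(t)}}$ is a nonnegative additive clause of $v_j$ and $(B_j^{(t)}\cap S_j^{(t)})\cup T_j\subseteq S_j^{(t)}\cup T_j$, this yields $x_j^{B_j^{(t)}}\big((B_j^{(t)}\cap S_j^{(t)})\cup T_j\big)\le v_j(S_j^{(t)})+p^{(t)}(T_j)$. Summing over $j$, and using $\sum_j p^{(t)}(T_j)=\sum_i p_i^{(t)}|\{j:i\in T_j\}|\le\sum_i p_i^{(t)}k_i(\vec{B}^{(t)})=\frac{SW(\vec{B}^{(t)})}{2\gamma}$, gives $\sum_j x_j^{B_j^{(t)}}\big((B_j^{(t)}\cap S_j^{(t)})\cup T_j\big)\le SW(\vec{S}^{(t)})+\frac{SW(\vec{B}^{(t)})}{2\gamma}$. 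For the leftover piece, $\sum_j x_j^{B_j^{(t)}}(T_j')\le\sum_{i\text{ sold out}}U_i(\vec{B}^{(t)})=2\gamma\sum_{i\text{ sold out}}p_i^{(t)}k_i(\vec{S}^{(t)})\le 2\gamma\,Rev(\vec{p}^{(t)},\vec{S}^{(t)})$. Adding the two pieces and using $SW(\vec{B}^{(t)})=\sum_j x_j^{B_j^{(t)}}(B_j^{(t)})$ gives $SW(\vec{B}^{(t)})-SW(\vec{B}^{(t+1)})\le 2\gamma\,Rev(\vec{p}^{(t)},\vec{S}^{(t)})+\frac{SW(\vec{B}^{(t)})}{2\gamma}$, i.e.\ condition~(i) with $\alpha=\beta=2\gamma$. (This did not even use the Case~I test; that test is needed only so that $\vec{B}^{(t+1)}=\vec{S}^{(t)}$ still halves $\theta$.)

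In \textbf{Case~II}, where $\vec{B}^{(t+1)}=\text{Alloc-Unsold}(\vec{B}^{(t)},\vec{S}^{(t)})$ (Figure~\ref{fig.unsold}), I argue good by good. Each buyer gets a set $C_j\subseteq B_j^{(t)}$, so XoS gives $SW(\vec{B}^{(t+1)})\ge\sum_j x_j^{B_j^{(t)}}(C_j)=\sum_i\sum_{j\in Top_i(q_i,\vec{B}^{(t)})}x_j^{B_j^{(t)}}(i)$ with $q_i=k_i(\vec{B}^{(t)})-k_i(\vec{S}^{(t)})$; a one-line averaging bound (each value outside $Top_i(q_i,\vec{B}^{(t)})$ is at most the least value inside, hence at most the average of those inside, and this is vacuous when $q_i=0$) shows $\sum_{j\in Top_i(q_i,\vec{B}^{(t)})}x_j^{B_j^{(t)}}(i)\ge\frac{q_i}{k_i(\vec{B}^{(t)})}U_i(\vec{B}^{(t)})$. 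Summing over $i$ and using $\frac{k_i(\vec{S}^{(t)})}{k_i(\vec{B}^{(t)})}U_i(\vec{B}^{(t)})=2\gamma\,p_i^{(t)}k_i(\vec{S}^{(t)})$ gives $SW(\vec{B}^{(t)})-SW(\vec{B}^{(t+1)})\le 2\gamma\,Rev(\vec{p}^{(t)},\vec{S}^{(t)})$, even stronger than condition~(i). For condition~(ii): every round at least halves $\theta$ — Case~I by its test, Case~II because $\theta(\vec{B}^{(t+1)})=\theta(\vec{B}^{(t)})-\theta(\vec{S}^{(t)})<\theta(\vec{B}^{(t)})/2$ — so from $\theta(\vec{B}^{(1)})=\theta(\vec{A})\le mk$ and $\gamma=\log m+\log k$ one gets $\theta(\vec{B}^{(\gamma)})\le 1$, and subadditivity of XoS gives $SW(\vec{B}^{(\gamma)})\le\theta(\vec{B}^{(\gamma)})\,v_{j^*}(\{i^*\})\le v_{j^*}(\{i^*\})$, while the tail construction gives $Rev(\vec{p}^{(\gamma)},\vec{S}^{(\gamma)})=v_{j^*}(\{i^*\})-\epsilon$; hence $SW(\vec{B}^{(\gamma)})\le 2\gamma\,Rev(\vec{p}^{(\gamma)},\vec{S}^{(\gamma)})$ for $\epsilon$ small (using $2\gamma\ge2$). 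With both conditions verified, Claim~\ref{clm_frame2rev} yields an $\ell$ with
\[ SW(\vec{A})=SW(\vec{B}^{(1)})\le 4\gamma^{2}\,Rev(\vec{p}^{(\ell)},\vec{S}^{(\ell)})=4(\log m+\log k)^{2}\,Rev(\vec{p}^{(\ell)},\vec{S}^{(\ell)}). \]

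The main obstacle is the Case~I estimate. From one buyer's sequential utility-maximization one can only compare her realized bundle against \emph{still-available} alternatives, so to control the entire $\vec{B}^{(t)}$-value she loses one must lump her purchased set $S_j^{(t)}$ together with the block $T_j$ of goods she could still have added and charge the single price $p^{(t)}(T_j)$ of that block — this is exactly what produces the $SW(\vec{B}^{(t)})/\beta$ slack term, and it is affordable only because no good $i$ lies in more than $k_i(\vec{B}^{(t)})$ of the sets $T_j$, so these prices sum to precisely $SW(\vec{B}^{(t)})/(2\gamma)$. The goods that genuinely sold out cannot be treated this way and must be charged straight to revenue, which works because a sold-out good is sold in full and so by construction generates $U_i(\vec{B}^{(t)})/(2\gamma)$. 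Everything else — the two identities, the Case~II averaging step, and the halving/tail bookkeeping — is routine.
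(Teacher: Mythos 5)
Your proof is correct and follows essentially the same route as the paper: you prove condition~(i) of Definition~\ref{defn_framework2} separately in the two branches of the core algorithm (your Case~I account directly bounds $SW(\vec{B}^{(t)})-SW(\vec{S}^{(t)})$ by partitioning $B_j^{(t)}$ into purchased / still-available / sold-out goods, a minor streamlining of the paper's Claim~\ref{clm_case1gen} plus Corollary~\ref{corr_gencase1} which route through surplus; your Case~II argument is Claim~\ref{clm_allocunsoldgen} via Lemma~\ref{lem_propertyaverage}), then establish halving as in Lemma~\ref{lem_itemhalving}, the tail bound as in Lemma~\ref{lem_swgammaupper}, and apply Claim~\ref{clm_frame2rev}. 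One small slip: starting from $\theta(\vec{B}^{(1)})\le mk$ and halving $\gamma-1=\log(mk)-1$ times gives $\theta(\vec{B}^{(\gamma)})\le 2$, not $\le 1$; this only changes the bound to $SW(\vec{B}^{(\gamma)})\le 2v_{j^*}(\{i^*\})$, which still yields $SW(\vec{B}^{(\gamma)})\le 2\gamma\,Rev(\vec{p}^{(\gamma)},\vec{S}^{(\gamma)})$ for small $\epsilon$, so the conclusion is unaffected.
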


Note that although this theorem immediately implies that at least one solution $(\vec{p}^{(\ell)},\vec{S}^{(\ell)})$ has high revenue compared to the welfare of the starting allocation $\vec{A}$, this does not yet give us the result we need since the allocation $\vec{S}^{(\ell)}$ is specific to one particular arrival order of buyers, whereas the actual arrival order is either random or adversarial.

\subsubsection*{Proof of Theorem \ref{thm.XoScharging}}
Fix $\gamma = \log m + \log k$. By Claim \ref{clm_frame2rev}, it suffices if we show that for every $1 \leq t \leq \gamma - 1$,

\begin{equation}
\label{eqn_charging1}
SW(\vec{B}^{(t)}) - SW(\vec{B}^{(t+1)}) \leq 2\gamma Rev(\vec{p}^{(t)},\vec{S}^{(t)}) + \frac{SW(\vec{B}^{(t)})}{2\gamma}
\end{equation}

and that, $$SW(\vec{B}^{(\gamma)}) \leq 2\gamma Rev(\vec{p}^{(\gamma)},\vec{S}^{(\gamma)}).$$

Before actually proving these claims, we provide an outline of the proof. \begin{enumerate}
\item Consider some iteration $t$ of our core algorithm where the benchmark solution is $\vec{B}^{(t)}$. As mentioned in the algorithm, let $\vec{S}^{(t)}$ denote the allocation obtained upon running the sequential posted pricing mechanism with prices that are a scaled down version of each good's average utility in $\vec{B}^{(t)}$, (as in $Prices(\vec{B}^{(t)},\gamma)$). Then, we claim and prove that $SW(\vec{B}^{(t)}) - SW(\vec{S}^{(t)})$ can be bounded in terms of the mechanism's revenue and the welfare of $\vec{B}^{(t)}$, analogous to Definition~\ref{defn_framework2}. This has obvious implications in showing that our solutions satisfy Equation~\ref{eqn_charging1} for a given value of $t$ as long as the core algorithm chooses $\vec{B}^{(t+1)}$ to be the output of the sequential mechanism (Case I).

\item Given a benchmark allocation vector $\vec{B}^{(t)}$, once again consider running the sequential mechanism with scaled down prices, and let $\vec{T}$ denote a sub-allocation of $\vec{B}^{(t)}$ obtained by allocating the `unsold' items back to the original buyers (as in Alloc-Unsold$(\vec{B}^{(t)},\vec{S}^{(t)})$). Then, $SW(\vec{B}^{(t)}) - SW(\vec{T})$ can be bounded purely in terms of the revenue of the above sequential mechanism. Therefore, our solutions satisfy Equation~\ref{eqn_charging1} in the second case where $\vec{B}^{(t+1)}$ is defined to be the output of Alloc-Unsold.

\item In the final act, we prove that for each $t$, $\vec{B}^{(t+1)}$ allocates at most half the total number of items as $\vec{B}^{(t)}$, and so at most two items are allocated in $\vec{B}^{(\gamma)}$. In this case, the tail part of our core algorithm, where exactly one good is sold, results in a revenue that is only a constant factor away from the social welfare of $\vec{B}^{(\gamma)}$.

\end{enumerate}

\subsubsection*{General Claims}

We begin with our first, somewhat general claim that connects the social welfare of the benchmark allocation to the revenue and welfare of the solution returned by the sequential posted pricing mechanism for a certain type of `scaled-down' pricing scheme. As we will show later, this can be used to bound the difference in welfare of any two consecutive benchmark solutions (in our core algorithm) as long as the second solution is defined to be the output of the sequential mechanism. Owing to their rather general nature, the proofs of the following two claims are presented in Appendix~\ref{app:sec4} in order to not distract the reader from the main ideas underlying this theorem.

\begin{claim}
\label{clm_case1gen}
Given any allocation $\vec{B}$ and some parameter $\alpha \geq 1$, consider the sequential posted pricing mechanism with arbitrary buyer arrival order for supply constraint $q_i = k_i(\vec{B})$, and price $p_i = \frac{U_i(\vec{B})}{\alpha q_i}$ for all $i \in \mathcal{I}$. Suppose that $\vec{S}$ denotes the output allocation of this mechanism. Then,

$$SW(\vec{B}) - Surp(\vec{p},\vec{S}) \leq \alpha Rev(\vec{p},\vec{S}) + \frac{SW(\vec{B})}{\alpha}.$$
\end{claim}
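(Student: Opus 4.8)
The plan is to account, buyer by buyer, for the utility the sequential mechanism guarantees to each buyer relative to the value promised by the XoS clauses supporting $\vec B$, and then to charge the value of the goods that "disappeared" (sold out before a buyer arrived) against the mechanism's revenue.

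First I would fix notation: for each buyer $j$ let $x_j := x_j^{B_j}$ be the XoS clause maximizing $v_j$ on $B_j$, so that $v_j(B_j)=\sum_{i\in B_j}x_j(i)$ and, crucially, $v_j(T)\geq\sum_{i\in T}x_j(i)$ for \emph{every} $T\subseteq\mathcal I$. Let $A_j\subseteq\mathcal I$ be the set of goods still available at the moment $j$ arrives in the sequential mechanism with prices $\vec p$ and supplies $q_i=k_i(\vec B)$. Since $j$ buys a utility-maximizing bundle among the goods in $A_j$, in particular she does at least as well as buying $B_j\cap A_j$, so
$$v_j(S_j)-\sum_{i\in S_j}p_i \ \geq\ v_j(B_j\cap A_j)-\sum_{i\in B_j\cap A_j}p_i \ \geq\ \sum_{i\in B_j\cap A_j}\bigl(x_j(i)-p_i\bigr).$$
Summing over $j$, writing $B_j=(B_j\cap A_j)\cup(B_j\setminus A_j)$, and using $p_i\geq 0$ to drop the $-p_i$ terms on the complementary part, I get
$$Surp(\vec p,\vec S)\ \geq\ \sum_j\sum_{i\in B_j}\bigl(x_j(i)-p_i\bigr)\ -\ \sum_j\sum_{i\in B_j\setminus A_j}x_j(i).$$

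Next I would evaluate the two sums. For the first, $\sum_j\sum_{i\in B_j}x_j(i)=\sum_j v_j(B_j)=SW(\vec B)$, and since $p_i=U_i(\vec B)/(\alpha q_i)$ with $q_i=k_i(\vec B)$, we have $\sum_j\sum_{i\in B_j}p_i=\sum_i p_i k_i(\vec B)=\frac{1}{\alpha}\sum_i U_i(\vec B)=\frac{1}{\alpha}SW(\vec B)$, using the identity $\sum_i U_i(\vec B)=\sum_j v_j(B_j)$. Hence the first sum equals $\bigl(1-\frac{1}{\alpha}\bigr)SW(\vec B)$. For the second sum, the key observation is that $i\notin A_j$ precisely when good $i$ has already sold out before $j$ arrives, i.e.\ $k_i(\vec S)=q_i=k_i(\vec B)$; letting $\mathcal I_{\mathrm{sold}}$ denote this set of fully-sold goods, I would bound
$$\sum_j\sum_{i\in B_j\setminus A_j}x_j(i)\ \leq\ \sum_{i\in\mathcal I_{\mathrm{sold}}}\ \sum_{j\in N_i(\vec B)}x_j(i)\ =\ \sum_{i\in\mathcal I_{\mathrm{sold}}}U_i(\vec B),$$
since every inner term is nonnegative and we are summing over a sub-collection of the buyers counted in $U_i(\vec B)$. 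Finally, for $i\in\mathcal I_{\mathrm{sold}}$ we have the exact relation $U_i(\vec B)=\alpha q_i p_i=\alpha k_i(\vec S)p_i$, so $\sum_{i\in\mathcal I_{\mathrm{sold}}}U_i(\vec B)=\alpha\sum_{i\in\mathcal I_{\mathrm{sold}}}p_i k_i(\vec S)\leq\alpha\,Rev(\vec p,\vec S)$. Putting the pieces together gives $Surp(\vec p,\vec S)\geq\bigl(1-\frac{1}{\alpha}\bigr)SW(\vec B)-\alpha\,Rev(\vec p,\vec S)$, which rearranges exactly to the claimed inequality.

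The step I expect to be the only delicate one — and the one I would write most carefully — is the handling of the second sum: identifying "goods missing from $B_j$ when $j$ arrives" with "goods that sold their entire supply", and then charging their $\vec B$-value to the mechanism's revenue via the identity $U_i(\vec B)=\alpha p_i k_i(\vec S)$, which holds precisely because such goods have $k_i(\vec S)=q_i=k_i(\vec B)$. The remaining ingredients — the XoS lower-bound inequality $v_j(T)\geq\sum_{i\in T}x_j(i)$, the utility-maximization inequality when $j$ is served, and the bookkeeping identity $\sum_i U_i(\vec B)=SW(\vec B)$ — are routine.
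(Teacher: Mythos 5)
Your proof is correct and takes essentially the same route as the paper's. The paper bounds each buyer's surplus via the coarser fact that the never-sold-out goods $\mathcal{I}\setminus Sold$ are available in every round (so $B_j'=(\mathcal I\setminus Sold)\cap B_j$ is always purchasable), whereas you work with the finer per-buyer availability set $A_j$ and then discard the $B_j\setminus A_j$ part; since $B_j\setminus A_j\subseteq\mathcal I_{\mathrm{sold}}$, the two reduce to the identical charging of $\sum_{i\in Sold}U_i(\vec B)$ against $\alpha\cdot Rev$ and the identical surplus bound $\bigl(1-\tfrac1\alpha\bigr)\sum_{i\notin Sold}U_i(\vec B)$, so the end-to-end argument is the same. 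One small wording nit: you write that $i\notin A_j$ holds \emph{precisely when} $k_i(\vec S)=q_i$; the correct statement (and the one you actually use) is the one-way implication $i\notin A_j\Rightarrow k_i(\vec S)=q_i$, since a good can sell out only after $j$ has already been served.
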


Recall that $Surp(\vec{p},\vec{S})$ is the total surplus or utility derived by the buyers under the given prices. Since, we know that social welfare equals the sum of revenue and surplus, we can use the above claim to obtain an upper bound on $SW(\vec{B}) - SW(\vec{S})$, which we state as a corollary after proving the claim.

\begin{corollary}
\label{corr_gencase1}
Given any allocation $\vec{B}$ and some parameter $\alpha \geq 1$, consider the sequential posted pricing mechanism with arbitrary buyer arrival order for supply constraint $q_i = k_i(\vec{B})$, and price $p_i = \frac{U_i(\vec{B})}{\alpha q_i}$ for all $i \in \mathcal{I}$. Suppose that $\vec{S}$ denotes the output allocation of this mechanism. Then,

$$SW(\vec{B}) - SW(\vec{S}) \leq \alpha Rev(\vec{p},\vec{S}) + \frac{SW(\vec{B})}{\alpha}.$$
\end{corollary}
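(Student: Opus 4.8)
The plan is to derive Corollary~\ref{corr_gencase1} immediately from Claim~\ref{clm_case1gen}: since $SW(\vec{S}) = Rev(\vec{p},\vec{S}) + Surp(\vec{p},\vec{S})$ and $Rev(\vec{p},\vec{S}) \ge 0$, we have $Surp(\vec{p},\vec{S}) \le SW(\vec{S})$, and hence $SW(\vec{B}) - SW(\vec{S}) \le SW(\vec{B}) - Surp(\vec{p},\vec{S}) \le \alpha\, Rev(\vec{p},\vec{S}) + \tfrac{1}{\alpha} SW(\vec{B})$, where the last inequality is exactly the statement of Claim~\ref{clm_case1gen}. So the whole content lies in Claim~\ref{clm_case1gen}, whose proof the paper defers to the appendix; I sketch how I would prove it.

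I would first rewrite the target of the claim as $(1-\tfrac1\alpha)SW(\vec{B}) \le \alpha\, Rev(\vec{p},\vec{S}) + Surp(\vec{p},\vec{S})$. Using, for each buyer $j$, the XoS clause $x_j^{B_j}$ witnessing $v_j(B_j)$ (so $x_j^{B_j}(B_j)=v_j(B_j)$ and $x_j^{B_j}(T)\le v_j(T)$ for every $T$), decompose $SW(\vec{B}) = \sum_j v_j(B_j) = \sum_{i\in\mathcal{I}} U_i(\vec{B})$, so that it suffices to account for $U_i(\vec{B})$ good by good. Split $\mathcal{I}$ into the goods that are sold out at the end of the run with the prescribed prices (those with $k_i(\vec{S}) = q_i$) and the rest. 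For a sold-out good $i$ the mechanism collects revenue $q_i p_i = U_i(\vec{B})/\alpha$, so the total benchmark value of the sold-out goods is at most $\alpha\, Rev(\vec{p},\vec{S})$.

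For goods that are not sold out, the key observation is that such a good is available to \emph{every} buyer throughout the run, since a good becomes unavailable only once it sells out. So when buyer $j$ arrives, $T_j := \{\, i\in B_j : i \text{ not sold out and } x_j^{B_j}(i)\ge p_i \,\}$ is a feasible purchase; since $j$ picks a utility-maximizing bundle and $v_j(T_j)\ge x_j^{B_j}(T_j)=\sum_{i\in T_j}x_j^{B_j}(i)$, her realized utility obeys $u_j \ge \sum_{i\in T_j}(x_j^{B_j}(i)-p_i) \ge \sum_{i\in B_j,\, i \text{ not sold out}}(x_j^{B_j}(i)-p_i)$, the last inequality because the extra terms, over unsold goods $i\in B_j$ with $x_j^{B_j}(i)<p_i$, are nonpositive. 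Summing over $j$ and regrouping by good, $Surp(\vec{p},\vec{S}) \ge \sum_{i \text{ not sold out}}\big(U_i(\vec{B})-k_i(\vec{B})p_i\big) = (1-\tfrac1\alpha)\sum_{i \text{ not sold out}}U_i(\vec{B}) \ge (1-\tfrac1\alpha)\big(SW(\vec{B})-\alpha\, Rev(\vec{p},\vec{S})\big)$. Expanding and using $1-\tfrac1\alpha\le 1$ gives $\alpha\, Rev(\vec{p},\vec{S}) + Surp(\vec{p},\vec{S}) \ge (1-\tfrac1\alpha)SW(\vec{B})$, which is the claim.

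The step I expect to require the most care is the interplay between ``available to $j$ when $j$ arrives'' and ``not sold out at the end of the run'': a good can sell out late, so it may be available to an early buyer yet still belong to the sold-out set. This only enlarges the deviation available to $j$, so discarding such goods is safe — but I must first restrict $T_j$ to the coordinates with $x_j^{B_j}(i)\ge p_i$, so that every term appended afterwards is nonpositive and the lower bound on $u_j$ is preserved; I should also be explicit that the revealed-preference step uses only that $j$ bought a utility-maximizing bundle in the actual run, not any counterfactual re-run. The one boundary case to dispose of is $q_i=0$: no bundle $B_j$ contains such a good, $U_i(\vec{B})=0$, and the convention $p_i=\infty$ is harmless, so these goods can be ignored throughout.
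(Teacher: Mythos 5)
Your proposal is correct and follows the same route as the paper: the corollary is read off from Claim~\ref{clm_case1gen} via $SW(\vec{S}) = Rev(\vec{p},\vec{S}) + Surp(\vec{p},\vec{S}) \ge Surp(\vec{p},\vec{S})$, and your sketch of Claim~\ref{clm_case1gen} — split by sold-out vs.\ unsold goods, bound the sold-out goods by $\alpha\,Rev$, and bound surplus via revealed preference against the bundle $(\mathcal{I}\setminus Sold)\cap B_j$ using the XoS clause $x_j^{B_j}$ — is exactly the paper's appendix argument. The only cosmetic difference is your intermediate restriction to $T_j$ (goods with $x_j^{B_j}(i)\ge p_i$) before relaxing back to all unsold goods in $B_j$; the paper applies revealed preference directly to $B'_j=(\mathcal{I}\setminus Sold)\cap B_j$, which is valid for the same reason and skips that step.
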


Our next series of lemmas are useful in showing that Equation~\ref{eqn_charging1} holds in the (second) case when our core algorithm uses the function Alloc-Unsold$(\vec{B}^{(t)},\vec{S}^{(t)})$ to select the next benchmark allocation. We begin with an obvious claim on the average of $n$ numbers that we state here without proof.

\begin{lemma}
\label{lem_propertyaverage}
Consider a sequence of non-negative real numbers $v_1 \geq v_2 \geq \ldots \geq v_n$ and let $V = \sum_{i=1}^n v_i$. Then, for any $0 \leq r \leq n$,
we have that $\frac{V}{n} r + \sum_{i=1}^{n-r}v_i \geq V$.
\end{lemma}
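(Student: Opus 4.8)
The plan is to rewrite the target inequality $\frac{V}{n}r + \sum_{i=1}^{n-r}v_i \geq V$ in the equivalent form $\sum_{i=n-r+1}^{n}v_i \leq \frac{r}{n}V$, using $\sum_{i=1}^{n-r}v_i = V - \sum_{i=n-r+1}^{n}v_i$. Phrased this way, the statement just says that the sum of the $r$ smallest entries of the sequence is at most an $\frac{r}{n}$-fraction of the total, i.e. the average of the bottom $r$ entries does not exceed the overall average. This is intuitively obvious, and I would make it rigorous by a simple pivot argument; an alternative would be an induction on $r$ that strips off $v_{n-r+1}$ one entry at a time, but the pivot argument is cleaner and avoids bookkeeping.

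First I would dispose of the two boundary cases directly. For $r=0$ the left-hand side is exactly $V$, and for $r=n$ the sum $\sum_{i=1}^{n-r}v_i$ is empty and the left-hand side is again $\frac{V}{n}\cdot n = V$; in both cases the inequality holds with equality. For the remaining range $0 < r < n$, set $A = \sum_{i=n-r+1}^{n}v_i$ and $B = \sum_{i=1}^{n-r}v_i$, so that $V = A+B$, and take $m := v_{n-r}$ as the pivot (well-defined precisely because $0 < r < n$). Since the sequence is non-increasing, each of the $n-r$ terms in $B$ is at least $m$, so $B \geq (n-r)m$, and each of the $r$ terms in $A$ is at most $m$, so $A \leq rm$.

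Combining these bounds and using non-negativity to multiply inequalities, $rB \geq r(n-r)m \geq (n-r)A$; adding $rA$ to both sides gives $rV = r(A+B) \geq nA$, hence $A \leq \frac{r}{n}V$. Substituting back, $\frac{V}{n}r + B = \frac{V}{n}r + V - A \geq \frac{V}{n}r + V - \frac{r}{n}V = V$, which is exactly the asserted inequality. I do not expect a genuine obstacle here: the only points requiring care are keeping the $r=0$ and $r=n$ cases separate so that the pivot $v_{n-r}$ is meaningful, and noting that all the $v_i$ are non-negative so that scaling the inequalities $B \geq (n-r)m$ and $A \leq rm$ by $r$ and $n-r$ respectively is valid.
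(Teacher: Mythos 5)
Your proof is correct, and it is worth noting that the paper itself offers no argument here: the authors introduce Lemma~\ref{lem_propertyaverage} as ``an obvious claim on the average of $n$ numbers that we state here without proof,'' so there is no official proof to compare against. Your pivot argument fills that gap cleanly. The reformulation $\sum_{i=n-r+1}^n v_i \le \frac{r}{n}V$ is the right way to see what is being claimed, and taking $m=v_{n-r}$ as the threshold separating the top $n-r$ from the bottom $r$ entries gives exactly the two-sided bounds $B \ge (n-r)m$ and $A \le rm$ needed to conclude $rV \ge nA$. You are also right to peel off $r=0$ and $r=n$ first so that the pivot index is in range, and to flag that the scalings by $r$ and $n-r$ preserve the inequalities because all quantities are non-negative. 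No gaps; the argument is complete.
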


\begin{claim}
\label{clm_allocunsoldgen}
Given any allocation $\vec{B}$ and some parameter $\alpha \geq 1$, consider the sequential posted pricing mechanism with arbitrary buyer arrival order for supply constraint $q_i = k_i(\vec{B})$, and price $p_i = \frac{U_i(\vec{B})}{\alpha q_i}$ for all $i \in \mathcal{I}$. Suppose that $\vec{S}^M$ denotes the output allocation of this mechanism, and $\vec{S}$ is the allocation returned by $\text{Alloc-Unsold}(\vec{B},\vec{S}^M)$. Then,

$$SW(\vec{B}) - SW(\vec{S}) \leq \alpha Rev(\vec{p},\vec{S}^M).$$
\end{claim}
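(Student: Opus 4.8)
\textbf{Proof plan for Claim~\ref{clm_allocunsoldgen}.}
The plan is to compare $SW(\vec{B})$ and $SW(\vec{S})$ good-by-good, using the fact that $\vec{S} = \text{Alloc-Unsold}(\vec{B},\vec{S}^M)$ keeps exactly the $q_i = k_i(\vec{B}) - k_i(\vec{S}^M)$ highest-value buyers for good $i$ (highest in the sense of $x_j^{B_j}(i)$), and that these buyers are allocated precisely their $\vec{B}$-bundles of goods — so their XoS clauses $x_j^{B_j}$ can be reused as a lower bound on their value in $\vec{S}$. First I would set up notation: for each good $i$ let $v_1^i \geq v_2^i \geq \cdots \geq v_{k_i(\vec{B})}^i$ be the sorted values $x_j^{B_j}(i)$ over $j \in N_i(\vec{B})$, so that $U_i(\vec{B}) = \sum_\ell v_\ell^i$, and observe that $\text{Alloc-Unsold}$ retains the top $q_i$ of these, giving a contribution of at least $\sum_{\ell=1}^{q_i} v_\ell^i$ toward $SW(\vec{S})$ (here I use fractional subadditivity / the XoS property so that each retained buyer's value on the bundle they keep is at least the sum of the relevant clause coordinates). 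Summing over the retained buyers, $SW(\vec{S}) \geq \sum_{i \in \mathcal{I}} \sum_{\ell=1}^{q_i} v_\ell^i$, and since $q_i = k_i(\vec{B}) - k_i(\vec{S}^M)$ this is $\sum_i \sum_{\ell=1}^{k_i(\vec{B}) - k_i(\vec{S}^M)} v_\ell^i$.

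Next I would apply Lemma~\ref{lem_propertyaverage} with $n = k_i(\vec{B})$, $r = k_i(\vec{S}^M)$, $V = U_i(\vec{B})$, which gives $\sum_{\ell=1}^{k_i(\vec{B})-k_i(\vec{S}^M)} v_\ell^i \geq U_i(\vec{B}) - \frac{U_i(\vec{B})}{k_i(\vec{B})} k_i(\vec{S}^M)$. Summing over all $i$ yields
$$SW(\vec{S}) \geq \sum_{i \in \mathcal{I}} U_i(\vec{B}) - \sum_{i \in \mathcal{I}} \frac{U_i(\vec{B})}{k_i(\vec{B})} k_i(\vec{S}^M).$$
Now $\sum_i U_i(\vec{B}) = SW(\vec{B})$ by definition of $U_i$ and the XoS clauses, and $\frac{U_i(\vec{B})}{k_i(\vec{B})} = \alpha p_i$ by the choice of prices, so the second sum is $\alpha \sum_i p_i k_i(\vec{S}^M) = \alpha\, Rev(\vec{p},\vec{S}^M)$. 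Rearranging gives exactly $SW(\vec{B}) - SW(\vec{S}) \leq \alpha\, Rev(\vec{p},\vec{S}^M)$, as desired. (One should be slightly careful about goods with $k_i(\vec{B}) = 0$, for which $p_i = \infty$ but $k_i(\vec{S}^M) = 0$ as well so the term vanishes and the good contributes nothing to either side; and about the degenerate case $q_i = 0$, handled cleanly by Lemma~\ref{lem_propertyaverage} with $r = n$.)

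The step I expect to be the main obstacle — or at least the one requiring the most care — is the first inequality $SW(\vec{S}) \geq \sum_i \sum_{\ell=1}^{q_i} v_\ell^i$. The subtlety is that a single buyer $j$ may be retained for several goods simultaneously, and we are reusing the \emph{same} clause $x_j^{B_j}$ for all of them; this is legitimate precisely because $x_j^{B_j}$ is additive and $j$ receives in $\vec{S}$ exactly the subset of $B_j$ consisting of goods for which $j$ lands in $Top_i(q_i,\vec{B})$, so $v_j(S_j) \geq x_j^{B_j}(S_j) = \sum_{i \in S_j} x_j^{B_j}(i)$ — which is the sum of exactly the coordinates we are crediting to $j$. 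Making sure that each $v_\ell^i$ term is charged to one and only one buyer-good pair, and that the per-buyer sums telescope correctly into $v_j(S_j)$, is the bookkeeping that needs to be done carefully; everything after that is the mechanical application of Lemma~\ref{lem_propertyaverage} and the price definition.
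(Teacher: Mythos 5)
Your proof is correct and follows essentially the same route as the paper's: lower-bound $SW(\vec{S})$ by the retained XoS-clause coordinates $\sum_i\sum_{j\in Top_i(\cdot,\vec{B})}x_j^{B_j}(i)$, then apply Lemma~\ref{lem_propertyaverage} good-by-good with $n=k_i(\vec{B})$, $r=k_i(\vec{S}^M)$, $V=U_i(\vec{B})$, and identify $U_i(\vec{B})/k_i(\vec{B})$ with $\alpha p_i$ to recover the revenue term. The only cosmetic difference is that the paper groups the two per-good terms and invokes the lemma once to show $\alpha Rev(\vec{p},\vec{S}^M)+SW(\vec{S})\ge\sum_i U_i(\vec{B})$, whereas you apply the lemma to bound $SW(\vec{S})$ alone and then rearrange; the bookkeeping concern you flag (that $v_j(S_j)\ge x_j^{B_j}(S_j)=\sum_{i\in S_j}x_j^{B_j}(i)$ correctly attributes each retained coordinate exactly once) is handled the same way in the paper, via the observation that $S_j\subseteq B_j$.
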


\subsubsection*{Specific Lemmas Pertaining to Core Algorithm}

Applying Claims~\ref{clm_case1gen} and~\ref{clm_allocunsoldgen} in the context of our core algorithm, we will later prove that $SW(\vec{B}^{(t)}) - SW(\vec{B}^{(t+1)}) \leq 2\gamma Rev(\vec{p}^{(t)},\vec{S}^{(t)}) + \frac{SW(\vec{B}^{(t)})}{2\gamma}$ for all $t \leq \gamma - 1$. In order to show that the charging property is obeyed, it only remains for us to show that $SW(\vec{B}^{(\gamma)}) \leq \alpha Rev(\vec{p}^{(\gamma)},\vec{S}^{(\gamma)})$. We will begin with the lemma that highlights the `item halving' nature of our algorithm. 

\begin{lemma}
\label{lem_itemhalving}
During the course of our algorithm, the total number of items allocated in successive benchmark solutions is at least halved, i.e., for any $t \leq \gamma - 1$,

$$\theta(\vec{B}^{(t)}) \geq 2\theta(\vec{B}^{(t+1)}).$$
\end{lemma}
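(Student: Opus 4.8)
The plan is to fix an index $t \le \gamma - 1$ and split on the two branches of the core algorithm (Figure~\ref{fig:coremain}) that define $\vec{B}^{(t+1)}$, since the two branches maintain the halving invariant for different reasons. Throughout I will use the basic fact that $\vec{S}^{(t)}$ is the allocation produced by the sequential mechanism run with supply vector $\vec{k}(\vec{B}^{(t)})$, so no good is oversold: $k_i(\vec{S}^{(t)}) \le k_i(\vec{B}^{(t)})$ for every good $i$, and therefore $\theta(\vec{S}^{(t)}) \le \theta(\vec{B}^{(t)})$.

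Case~I is immediate: this branch is entered exactly when $2\theta(\vec{S}^{(t)}) \le \theta(\vec{B}^{(t)})$, and it sets $\vec{B}^{(t+1)} = \vec{S}^{(t)}$; hence $\theta(\vec{B}^{(t+1)}) = \theta(\vec{S}^{(t)}) \le \tfrac{1}{2}\theta(\vec{B}^{(t)})$, which is the claimed bound.

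The substantive step is Case~II, where $2\theta(\vec{S}^{(t)}) > \theta(\vec{B}^{(t)})$ and $\vec{B}^{(t+1)} = \text{Alloc-Unsold}(\vec{B}^{(t)}, \vec{S}^{(t)})$. Here I would compute $\theta(\vec{B}^{(t+1)})$ directly from the definition in Figure~\ref{fig.unsold}: for each good $i$, that routine sets $q_i := k_i(\vec{B}^{(t)}) - k_i(\vec{S}^{(t)})$ and, when $q_i > 0$, allocates good $i$ to exactly the $q_i$ buyers of $Top_i(q_i, \vec{B}^{(t)})$ (and to no one when $q_i = 0$); since $0 \le q_i \le k_i(\vec{B}^{(t)})$ by the no-overselling fact, these top-sets are well-defined. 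Therefore $k_i(\vec{B}^{(t+1)}) = q_i$ for all $i$, and summing over goods yields the clean identity $\theta(\vec{B}^{(t+1)}) = \theta(\vec{B}^{(t)}) - \theta(\vec{S}^{(t)})$. Plugging in the Case~II hypothesis $\theta(\vec{S}^{(t)}) > \tfrac{1}{2}\theta(\vec{B}^{(t)})$ then gives $\theta(\vec{B}^{(t+1)}) < \tfrac{1}{2}\theta(\vec{B}^{(t)})$, and combining the two cases establishes $\theta(\vec{B}^{(t)}) \ge 2\theta(\vec{B}^{(t+1)})$.

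I do not expect any genuine obstacle here; the only point that needs a moment's care is the bookkeeping in Case~II — namely that Alloc-Unsold allocates precisely the ``unsold'' copies $k_i(\vec{B}^{(t)}) - k_i(\vec{S}^{(t)})$ of each good and nothing more — which is exactly where the no-overselling inequality $k_i(\vec{S}^{(t)}) \le k_i(\vec{B}^{(t)})$ enters.
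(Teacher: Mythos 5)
Your proof is correct and mirrors the paper's argument almost line for line: the same case split on which branch of the core algorithm sets $\vec{B}^{(t+1)}$, and the same bookkeeping identity $\theta(\vec{B}^{(t+1)}) + \theta(\vec{S}^{(t)}) = \theta(\vec{B}^{(t)})$ for the Alloc-Unsold branch. The only (welcome) addition is that you make explicit the no-overselling fact $k_i(\vec{S}^{(t)}) \le k_i(\vec{B}^{(t)})$, which the paper leaves implicit.
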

\begin{proof}
The proof is not hard to see, and we show this in two cases. Recall that at the end of round $t$ of our core algorithm, the choice of $\vec{B}^{(t+1)}$ depends on whether or not $\theta(\vec{B}^{(t)}) \geq 2\theta(\vec{S}^{(t)})$. If the condition holds, $\vec{B}^{(t+1)} := \vec{S}^{(t)}$ and the lemma follows directly.

Suppose that this is not the case, and that $\theta(\vec{B}^{(t)}) < 2\theta(\vec{S}^{(t)})$. In this case, our algorithm chooses $\vec{B}^{(t+1)}$ to be the output of $\text{Alloc-Unsold}(\vec{B}^{(t)}, \vec{S}^{(t)})$. We know that for the given choice of inputs, the function Alloc-Unsold allocates exactly $k_i(\vec{B}^{(t)}) - k_i(\vec{S}^{(t)})$ copies of good $i \in \mathcal{I}$ to the buyers. Summing this up over all the goods, we get that $\theta(\vec{B}^{(t+1)}) + \theta(\vec{S}^{(t)}) = \theta(\vec{B}^{(t)})$ in the case that $\theta(\vec{B}^{(t)}) < 2\theta(\vec{S}^{(t)})$. And so, we get that $2\theta(\vec{B}^{(t+1)}) < \theta(\vec{B}^{(t)}).$
\end{proof}

We know that in the initial allocation to our core algorithm $\vec{A}$ (and hence $\vec{B}^{(1)}$), at most $mk$ copies of goods are allocated since $k$ denotes the average supply. In every successive round, the number of items in the (next) benchmark solution is at least halved. Therefore, after $\log(mk) - 1$ rounds, i.e., in $\vec{B}^{(\gamma)}$, at most $2$ items are allocated to the buyers.

\begin{corollary}
The total number of items allocated in the solution $\vec{B}^{(\gamma)}$ is at most two.
\end{corollary}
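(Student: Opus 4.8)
\noindent\textbf{Proof proposal for the Corollary $\theta(\vec{B}^{(\gamma)})\le 2$.}
The plan is to obtain the statement as a one-line consequence of the item-halving Lemma~\ref{lem_itemhalving} together with a crude bound on the size of the starting allocation. First I would observe that $\vec{B}^{(1)}=\vec{A}$ is a feasible allocation, so it respects the supply constraints, $k_i(\vec{A})\le k_i$ for every good $i$, and hence $\theta(\vec{B}^{(1)})=\sum_{i\in\mathcal{I}}k_i(\vec{A})\le\sum_{i\in\mathcal{I}}k_i=mk$ by the definition $k=\tfrac1m\sum_i k_i$. Next I would apply Lemma~\ref{lem_itemhalving} for each $t=1,\dots,\gamma-1$, giving $\theta(\vec{B}^{(t+1)})\le\tfrac12\theta(\vec{B}^{(t)})$; telescoping these $\gamma-1$ inequalities yields $\theta(\vec{B}^{(\gamma)})\le 2^{-(\gamma-1)}\theta(\vec{B}^{(1)})\le mk\cdot 2^{-(\gamma-1)}$. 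Substituting $\gamma=\log m+\log k=\log(mk)$ gives $2^{\gamma-1}=mk/2$, so $\theta(\vec{B}^{(\gamma)})\le 2$. (When $\log(mk)$ is not an integer one takes $\gamma=\lceil\log(mk)\rceil$; then $2^{\gamma-1}\ge mk/2$ and the same conclusion holds.)

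There is no real obstacle at this level: the statement is just a telescoping of an inequality already proved. The only point needing a moment's care is that we genuinely use that $\vec{A}$ is feasible to bound $\theta(\vec{B}^{(1)})$ by $mk$, which is automatic in every application since the input allocation is always the (feasible) optimum or a sub-allocation of it. The substantive work behind the corollary sits entirely in Lemma~\ref{lem_itemhalving}: its Case~II, where $\vec{B}^{(t+1)}=\text{Alloc-Unsold}(\vec{B}^{(t)},\vec{S}^{(t)})$, uses that this function assigns exactly $k_i(\vec{B}^{(t)})-k_i(\vec{S}^{(t)})$ copies of each good, so $\theta(\vec{B}^{(t+1)})+\theta(\vec{S}^{(t)})=\theta(\vec{B}^{(t)})$, and this branch is entered precisely when $\theta(\vec{S}^{(t)})>\tfrac12\theta(\vec{B}^{(t)})$, forcing $\theta(\vec{B}^{(t+1)})<\tfrac12\theta(\vec{B}^{(t)})$; Case~I is immediate since there $\vec{B}^{(t+1)}=\vec{S}^{(t)}$ with $2\theta(\vec{S}^{(t)})\le\theta(\vec{B}^{(t)})$.

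Finally, I would record how this corollary closes the proof of Theorem~\ref{thm.XoScharging}: it is exactly what discharges condition $(ii)$ of the generalized charging property (Definition~\ref{defn_framework2}). Since $\vec{B}^{(\gamma)}$ allocates at most two items and XoS valuations are subadditive, $SW(\vec{B}^{(\gamma)})=\sum_{j}v_j(B^{(\gamma)}_j)\le\sum_j\sum_{i\in B^{(\gamma)}_j}v_j(\{i\})\le 2\,v_{j^*}(\{i^*\})$, where $(j^*,i^*)$ is the buyer–good pair maximizing single-item value; the tail construction of Figure~\ref{fig:tail} sells one copy of $i^*$ at price $v_{j^*}(\{i^*\})-\epsilon$, so $Rev(\vec{p}^{(\gamma)},\vec{S}^{(\gamma)})=v_{j^*}(\{i^*\})-\epsilon$, and for $\epsilon$ small enough (and $2\gamma\ge2$) this gives $SW(\vec{B}^{(\gamma)})\le 2\gamma\,Rev(\vec{p}^{(\gamma)},\vec{S}^{(\gamma)})$. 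Together with Corollary~\ref{corr_gencase1} (Case~I) and Claim~\ref{clm_allocunsoldgen} (Case~II), which give Equation~\ref{eqn_charging1} for every $t\le\gamma-1$ with $\alpha=\beta=2\gamma$, the generalized charging property holds; Claim~\ref{clm_frame2rev} with $1-\tfrac{\gamma-1}{\beta}\ge\tfrac12$ then yields $Rev(\vec{p}^{(\ell)},\vec{S}^{(\ell)})\ge\tfrac{1}{4\gamma^2}SW(\vec{A})$, i.e.\ $SW(\vec{A})\le 4(\log m+\log k)^2\,Rev(\vec{p}^{(\ell)},\vec{S}^{(\ell)})$.
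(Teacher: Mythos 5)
Your proof is correct and takes essentially the same route as the paper: bound $\theta(\vec{B}^{(1)})=\theta(\vec{A})\le mk$ using feasibility and the definition of $k$, then telescope Lemma~\ref{lem_itemhalving} over $\gamma-1=\log(mk)-1$ steps to get $\theta(\vec{B}^{(\gamma)})\le mk/2^{\gamma-1}=2$. The extra remarks on non-integer $\log(mk)$ and on how the corollary discharges condition $(ii)$ of Definition~\ref{defn_framework2} are accurate but beyond what the statement itself requires.
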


\begin{lemma}
\label{lem_swgammaupper}
Define $(i^*,j^*) := \argmax_{i \in \mathcal{I}, j \in \mathcal{N}}v_j(i)$. The social welfare of the allocation $\vec{B}^{(\gamma)}$ is at most $2v_{j^*}(i^*)$.
\end{lemma}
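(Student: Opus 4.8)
The plan is to show that every allocation $\vec{B}^{(\gamma)}$ encountered in the algorithm is a sub-allocation that assigns goods only to original buyers, and then to combine this with the preceding Corollary (at most two items are allocated in $\vec{B}^{(\gamma)}$) to bound its welfare. First I would observe that, by construction, each benchmark allocation $\vec{B}^{(t)}$ is obtained either as the output $\vec{S}^{(t)}$ of the sequential mechanism (Case I) or as the output of Alloc-Unsold (Case II); in both cases the goods are allocated to buyers in $\mathcal{N}$, each buyer $j$ getting some bundle $S_j$, with a total of at most two items handed out across all buyers in $\vec{B}^{(\gamma)}$.

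The key step is to bound $SW(\vec{B}^{(\gamma)}) = \sum_{j} v_j(S_j)$ when $\theta(\vec{B}^{(\gamma)}) \leq 2$. The idea is to use fractional subadditivity of XoS valuations: for any buyer $j$ with bundle $S_j$, if $x_j^{S_j}$ is the maximizing XoS clause then $v_j(S_j) = \sum_{i \in S_j} x_j^{S_j}(i) \leq \sum_{i \in S_j} v_j(\{i\}) \leq |S_j|\, v_{j^*}(i^*)$, where the middle inequality holds because for an additive clause $a$ maximizing $a(S_j)$ we have $a(i) \le v_j(\{i\})$ for each $i$ (the clause evaluated on the singleton $\{i\}$ cannot exceed the value $v_j(\{i\})$, as $v_j(\{i\})$ is itself a max over clauses). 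Summing over all buyers, $SW(\vec{B}^{(\gamma)}) \leq \sum_j |S_j|\, v_{j^*}(i^*) = \theta(\vec{B}^{(\gamma)})\, v_{j^*}(i^*) \leq 2 v_{j^*}(i^*)$, which is exactly the claimed bound.

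The main obstacle — though it is a mild one — is justifying the singleton inequality $x_j^{S_j}(i) \le v_j(\{i\})$ cleanly from the XoS definition given in the preliminaries: one must note that $v_j(\{i\}) = \max_{\ell} a_\ell(\{i\}) \ge x_j^{S_j}(\{i\}) = x_j^{S_j}(i)$, i.e., the clause that is optimal for the larger set $S_j$ is just one competitor for the singleton. Everything else is bookkeeping, so I expect the proof to be short. A secondary point to be careful about is that the Corollary bounding $\theta(\vec{B}^{(\gamma)})$ by two relies on $\gamma = \log m + \log k$ and the item-halving Lemma~\ref{lem_itemhalving}, which have already been established, so I would simply cite them rather than re-derive them.
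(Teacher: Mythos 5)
Your proof is correct and follows essentially the same route as the paper: both decompose $SW(\vec{B}^{(\gamma)})$ into XoS clause contributions over the at most two allocated items and bound each clause value $x_j^{S_j}(i)$ by $v_{j^*}(i^*)$. You spell out the intermediate singleton inequality $x_j^{S_j}(i)\le v_j(\{i\})\le v_{j^*}(i^*)$ more explicitly than the paper (which just asserts $x_j(i)\le v_{j^*}(i^*)$ as ``not hard to reason''), but the argument is the same.
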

\begin{proof}
Suppose that $i_1$ and $i_2$ are the identities of the two (not necessarily distinct) goods allocated to the buyers in $\vec{B}^{(\gamma)}$, and $j_1, j_2$ refer to the (not necessarily distinct) buyers receiving these goods. Reducing the social welfare in terms of the XoS clauses, we have that

$$SW(\vec{B}^{(\gamma)}) = x^{\vec{B}^{(\gamma)}_{j_1}}_{j_1}(i_1) + x^{\vec{B}^{(\gamma)}_{j_2}}_{j_2}(i_2).$$

However, it is not particularly hard to reason that for any buyer $j$, item $i$, and XoS clause $x_j$ corresponding to this buyer, $x_j(i) \leq v_{j^*}(i^*).$ Thus, $x^{\vec{B}^{(\gamma)}_{j_1}}_{j_1}(i_1) + x^{\vec{B}^{(\gamma)}_{j_2}}_{j_2}(i_2) \leq v_{j^*}(i^*) + v_{j^*}(i^*)$. The lemma follows.
\end{proof}


Thus, since the revenue obtained by $(\vec{p}^{(\gamma)},\vec{S}^{(\gamma)})$ is $v_{j^*}(i^*)-\epsilon$, we know that for a small enough epsilon it must be that $SW(\vec{B}^{(\gamma)}) \leq 2\gamma Rev(\vec{p}^{(\gamma)},\vec{S}^{(\gamma)})$, as desired.

\subsubsection*{Wrapping up the Proof}
Equipped with the various pieces, we can now complete the proof of our black-box reduction by showing that our solutions obey the conditions outlined in Equation~\ref{eqn_charging1} and hence Definition~\ref{defn_framework2}. Fix the input to the core algorithm, i.e., allocation $\vec{A}$, 
and $\gamma$ as defined previously. Consider any iteration $t \leq \gamma-1$ of our algorithm. Let us proceed in two cases depending on the algorithm's choice of $\vec{B}^{(t+1)}$.

In the first case, suppose that our algorithm identifies $\theta(\vec{B}^{(t)}) \geq 2\theta(\vec{S}^{(t)})$ and defines $\vec{B}^{(t+1)} = \vec{S}^{(t)}$. Then, we apply Corollary~\ref{corr_gencase1} with $\vec{B} = \vec{B}^{(t)}$, $\vec{S} = \vec{S}^{(t)} = \vec{B}^{(t+1)} $, and $\alpha = 2\gamma$ and get that

$$SW(\vec{B}^{(t)}) -SW(\vec{B}^{(t+1)}) \leq 2\gamma Rev(\vec{p}^{(t)},\vec{S}^{(t)}) + \frac{SW(\vec{B}^{(t)})}{2\gamma}.$$

In the second case, we have that $\theta(\vec{B}^{(t)}) < 2\theta(\vec{S}^{(t)})$ and so, we set $\vec{B}^{(t+1)} = \text{Alloc-Unsold}(\vec{B}^{(t)}, \vec{S}^{(t)})$. For this case, we refer to Claim~\ref{clm_allocunsoldgen} with $\vec{B} = \vec{B}^{(t)}, \vec{S}^M = \vec{S}^{(t)}$, and $\alpha = 2\gamma$ to obtain

$$SW(\vec{B}^{(t)}) -SW(\vec{B}^{(t+1)}) \leq 2\gamma Rev(\vec{p}^{(t)},\vec{S}^{(t)}) \leq 2\gamma Rev(\vec{p}^{(t)},\vec{S}^{(t)}) + \frac{SW(\vec{B}^{(t)})}{2\gamma}.$$

Finally, the last condition that $SW(\vec{B}^{(\gamma)}) \leq 2 Rev(\vec{p}^{(\gamma)},\vec{S}^{(\gamma)})$ was proven above. Therefore, our solutions satisfy Definition~\ref{defn_framework2} and as a consequence, we can apply Claim~\ref{clm_frame2rev} with $\alpha = \beta = 2\gamma$. According to the claim, there exists some $\ell \in [1,\gamma]$ such that $Rev(\vec{p}^{(\ell)},\vec{S}^{(\ell)}) \geq \frac{SW(\vec{A})}{2(\log m + \log k)^2}(1 - \frac{\gamma - 1}{2\gamma}) \geq \frac{SW(\vec{A})}{4(\log m + \log k)^2}.$ This completes the proof. \hfill $\qed$

\subsection{Applying the Item Halving Framework to Form Pricing with High Revenue}\label{sec:applying}

We are now in a position to leverage Theorem~\ref{thm.XoScharging} as a black-box mechanism to obtain computationally efficient posted price mechanisms with good revenue guarantees. Specifically, for instances having XoS valuations, we can use the $\frac{e}{e-1}$-approximation algorithm for the allocation  problem~\cite{dobzinskiNS10} to form an initial allocation $\vec{A}$ with high welfare. The item-halving algorithm gives us a pricing and allocation $(p,S)$ whose revenue is a $O((\log m + \log k)^2)$-approximation to the optimum social welfare, and therefore to the revenue obtained by any other pricing mechanism. However, $S$ is a specific allocation and only arises for some buyer arrival orders. By choosing the function $\Gamma$ appropriately, we can make sure that the revenue resulting from all other arrival orders is no worse than that of $S$.

\subsubsection{Adversarial Buyer Arrival}

\begin{theorem}
\label{thm_computationalxos}
\textbf{(Main Computational Result)} Given a set of arrival orders $\Pi$, we can compute in time polynomial in $N$, $m$, and $|\Pi|$ a set of prices $\vec{p}$ and supply constraints $\vec{q}$ such that the revenue guaranteed by the sequential posted pricing mechanism with these parameters for all $\pi\in\Pi$ is a $\frac{4e}{e-1}(\log m + \log k)^2$-approximation to the optimum social welfare, and thus to the optimum revenue as well.
\end{theorem}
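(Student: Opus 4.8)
The plan is to use Theorem~\ref{thm.XoScharging} essentially as a black box, feeding it a good starting allocation and an \emph{adversarially chosen} arrival-order selector $\Gamma$. First I would take $\vec{A}$ to be the allocation produced by the $\frac{e}{e-1}$-approximation algorithm of~\cite{dobzinskiNS10} for the welfare-maximizing allocation problem with XoS valuations, so that $SW(\vec{A}) \geq \frac{e-1}{e}\,SW(OPT)$; this step is poly-time given value and demand oracles. Then I would run the core algorithm of Figures~\ref{fig:coremain}--\ref{fig:tail} on input $(\vec{A},\gamma)$ with $\gamma = \log m + \log k$, instantiating the (so-far unspecified) function $\Gamma$ as follows: at iteration $t$, given the prices $\vec{p}^{(t)}$ and benchmark $\vec{B}^{(t)}$, simulate the sequential mechanism with prices $\vec{p}^{(t)}$ and supply vector $\vec{k}(\vec{B}^{(t)})$ for every arrival order $\pi \in \Pi$ and every valid outcome under $\pi$ (with a fixed deterministic demand oracle there is one outcome per $\pi$, making this run in time $O(|\Pi| \cdot N)$ demand queries), and let $\Gamma$ return the $\pi$ together with the resulting allocation $\vec{S}^{(t)}$ of \emph{minimum} revenue $Rev(\vec{p}^{(t)},\vec{S}^{(t)})$. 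Since Theorem~\ref{thm.XoScharging} holds for an arbitrary choice of $\Gamma$, it applies in particular to this one.

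Applying Theorem~\ref{thm.XoScharging} then yields an index $\ell \in [1,\gamma]$ with $SW(\vec{A}) \leq 4(\log m + \log k)^2\, Rev(\vec{p}^{(\ell)},\vec{S}^{(\ell)})$, and this $\ell$ can be found explicitly by evaluating the revenue of each of the $\gamma$ solutions the algorithm produced. I would then output the prices $\vec{p} := \vec{p}^{(\ell)}$ and supply constraints $\vec{q} := \vec{k}(\vec{B}^{(\ell)})$.

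The crux is that this single static pricing is robust over all of $\Pi$. Fix any $\pi \in \Pi$: the outcome of the sequential mechanism run with prices $\vec{p}$, supplies $\vec{q}$, and arrival order $\pi$ is one of the outcomes that $\Gamma$ examined at iteration $\ell$ (the mechanism instance is literally the same, since $\vec{q}=\vec{k}(\vec{B}^{(\ell)})$ and $\vec{p}=\vec{p}^{(\ell)}$), and $\vec{S}^{(\ell)}$ was chosen to have the smallest revenue among all of them. Hence the revenue obtained under $\pi$ is at least $Rev(\vec{p}^{(\ell)},\vec{S}^{(\ell)}) \geq \frac{SW(\vec{A})}{4(\log m + \log k)^2} \geq \frac{(e-1)\,SW(OPT)}{4e\,(\log m + \log k)^2}$, i.e.\ a $\frac{4e}{e-1}(\log m + \log k)^2$-approximation to $SW(OPT)$; since $SW(OPT)$ upper-bounds the revenue of any individually rational mechanism, this is also the claimed approximation to the optimum revenue. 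For the running time, the starting allocation is computed in poly-time, the core algorithm runs for $\gamma = O(\log m + \log k)$ iterations, each of which computes $\vec{p}^{(t)}$ from the XoS oracle, invokes $\Gamma$ (which simulates the sequential mechanism $|\Pi|$ times, $O(N)$ demand queries each), and calls $\text{Alloc-Unsold}$, while the tail step is a single maximization over $\mathcal{N}\times\mathcal{I}$; everything is polynomial in $N$, $m$, and $|\Pi|$.

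The step I expect to require the most care is the robustness argument of the third paragraph: one must check that the mechanism finally output is exactly the one run at stage $\ell$ (so no "drift" in prices or supplies occurs between the stage at which the revenue bound is certified and the stage whose parameters are returned), and that the worst-over-$\Pi$ (and, if demand ties are possible, worst-over-outcome) choice made by $\Gamma$ genuinely dominates every outcome the mechanism can reach for every $\pi \in \Pi$, including the degenerate tail case $\ell=\gamma$ where only $i^*$ is sold at price $v_{j^*}(i^*)-\epsilon$ for all arrival orders. Once this bookkeeping is in place, the theorem follows immediately from Theorem~\ref{thm.XoScharging} together with the welfare guarantee on $\vec{A}$, with no further combinatorial work.
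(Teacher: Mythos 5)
Your proposal is correct and follows essentially the same route as the paper's proof: take $\vec{A}$ from the $\frac{e}{e-1}$-approximate welfare algorithm of~\cite{dobzinskiNS10}, instantiate $\Gamma$ as the revenue-minimizing arrival order over $\Pi$, invoke Theorem~\ref{thm.XoScharging}, output $(\vec{p}^{(\ell)},\vec{k}(\vec{B}^{(\ell)}))$, and argue that any $\pi\in\Pi$ yields revenue at least $Rev(\vec{p}^{(\ell)},\vec{S}^{(\ell)})$ since $\vec{S}^{(\ell)}$ was the worst case. Your handling of the $\ell=\gamma$ tail case (all arrival orders give the same revenue under the prices of Figure~\ref{fig:tail}) also matches the paper's treatment.
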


For the case where the buyer arrival order is adversarial with $|\Pi|$ being of polynomial size, Theorem \ref{thm_computationalxos} immediately gives us an efficient algorithm for forming a sequential pricing mechanism which achieves a $O((\log m+\log k)^2)$ approximation for maximum revenue. For the case where $\Pi$ consists of all possible arrival orders (truly adversarial), this still gives us an interesting existence result: for every instance of XOS valuations, there exist {\em fixed} prices and supply constraints such that, no matter what order the buyers arrive, the revenue of the outcome is a $O((\log m + \log k)^2)$ approximation to the maximum possible welfare. 

\begin{proof}
Set $\vec{A}$ to be the allocation which yields a $\frac{e}{e-1}$-approximation to maximum social welfare; this can be obtained from \cite{dobzinskiNS10}. Set $\Gamma(\vec{p},\vec{B})$ to be as follows. For each $\pi\in \Pi$, run the sequential posted price mechanism with prices $\vec{p}$ and supply constraints $\vec{q}=k(\vec{B})$, and choose $\pi$ to be the arrival order that results in the smallest revenue. Set $\Gamma(\vec{p},\vec{B})$ to be this arrival order $\pi$.

Due to Theorem \ref{thm.XoScharging}, using the Item-Halving framework with this function $\Gamma$ and starting allocation $\vec{A}$ gives us prices and allocation $(\vec{p}^{(\ell)},\vec{S}^{(\ell)})$ so that $SW(\vec{A})\leq 4(\log m + \log k)^2 Rev(\vec{p}^{(\ell)}, \vec{S}^{(\ell)})$. However, note that $\vec{S}^{(\ell)}$ is the allocation that was obtained by running the sequential mechanism with prices $\vec{p}^{(\ell)}$, supply constraints $k(\vec{B}^{(\ell)})$, and arrival order $\pi=\Gamma(\vec{p}^{(\ell)},\vec{B}^{(\ell)})$; thus by our choice of $\Gamma$ any other arrival order with prices $\vec{p}^{(\ell)}$ and supply $k(\vec{B}^{(\ell)})$ should only result in better revenue than $Rev(\vec{p}^{(\ell)}, \vec{S}^{(\ell)})$. Therefore, running the sequential posted price mechanism with parameters $\vec{p}^{(\ell)}$ and $k(\vec{B}^{(\ell)})$ will always result in revenue which is at most a factor $4(\log m + \log k)^2$ away from $SW(\vec{A})$, no matter what arrival order from $\Pi$ is chosen by the adversary. Since $SW(\vec{A})$ is a $\frac{e}{e-1}$ approximation to the maximum social welfare, and since the maximum social welfare is always at least the maximum possible revenue, this gives us the desired result.

We must still argue that the above holds when $\ell=\gamma$, since the last allocation $\vec{S}^{(\ell)}$ is generated using the special tail condition, instead of using $\Gamma$. Notice, however, that with prices as defined in Figure \ref{fig:tail}, and supply constraints $q_{i^*}=1$, $q_{i}=0$ for all $i\neq i^*$, all arrival orders will result in the same revenue, and thus if $Rev(\vec{p}^{(\gamma)}, \vec{S}^{(\gamma)})$ is a good approximation to maximum welfare, then so is the revenue achieved by any arrival order with prices $\vec{p}^{(\gamma)}$ and supply constraints $\vec{q}$.
\end{proof}

\subsubsection{Random Arrival Orders}
We now consider a more realistic model~\cite{kesselheimKN15} of buyer arrival that is determined by an arbitrary distribution $D^{\Pi}$ over the set of arrival orders. We make the standard assumption that we have access to an oracle from which we can draw samples corresponding to the distribution $D^{\Pi}$. Building on the extensive machinery developed in the previous sections, we design a mechanism that with high probability achieves a $O((\log m + \log k)^2)$-approximation for revenue, which is the same asymptotic factor that was obtained for the partially adversarial model. On the surface, it may seem surprising that our approximation factors are completely independent of the nature of the distribution and unlike some other works, our algorithm is not finely tuned to the properties underlying the distribution. At a high level, the generality of our result for the random arrival model comes from the fact that, as argued above, there always exist prices that obtain the same poly-logarithmic approximation factors even for the more general fully adversarial arrival model, even if these prices are difficult to compute. In some senses, what we show in this section is that with high probability, we are able to compute prices that work well for fully adversarial arrivals or at least a significant portion of the set of all buyer arrival orders.

\begin{theorem}
\label{thm_xosrandom}
Given oracle access to a distribution $D^{\Pi}$ over the buyer arrival orders, we can design in poly-time a sequential posted pricing mechanism $\mathcal{M}$ such that with high probability,

$$SW(OPT) \leq \frac{8e}{e-1}(\log m + \log k)^2 E_{\pi \sim D^{\Pi}}[Rev(\mathcal{M,\pi})],$$

where $Rev(\mathcal{M,\pi})$ is the revenue achieved by the mechanism $\mathcal{M}$ when the buyer arrival order is given by $\pi$.
\end{theorem}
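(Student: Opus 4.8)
The plan is to combine the item-halving machinery of Theorem~\ref{thm.XoScharging} with a sampling-and-union-bound argument in the spirit of Theorem~\ref{thm_computationalxos}. First the mechanism draws a set $\Pi_s$ of $n$ i.i.d.\ arrival orders from $D^{\Pi}$, where $n$ is a polynomial fixed below. We then run the Item-Halving framework exactly as in the proof of Theorem~\ref{thm_computationalxos}: take $\vec{A}$ to be the $\frac{e}{e-1}$-approximate welfare allocation of~\cite{dobzinskiNS10}, and let $\Gamma(\vec{p},\vec{B})$ return the order in $\Pi_s$ minimizing the revenue of the sequential mechanism run with prices $\vec{p}$ and supply $k(\vec{B})$. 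By Theorem~\ref{thm.XoScharging} this produces an index $\ell$ and a pricing solution $(\vec{p}^{(\ell)},\vec{S}^{(\ell)})$ with $SW(\vec{A}) \le 4(\log m + \log k)^2 Rev(\vec{p}^{(\ell)},\vec{S}^{(\ell)})$; set $\vec{p}^{*}=\vec{p}^{(\ell)}$ and $\vec{q}^{*}=k(\vec{B}^{(\ell)})$, using the tail prices/supply of Figure~\ref{fig:tail} when $\ell=\gamma$. Exactly as in Theorem~\ref{thm_computationalxos}, the choice of $\Gamma$ forces $Rev(\vec{p}^{*},\vec{q}^{*},\pi) \ge Rev(\vec{p}^{(\ell)},\vec{S}^{(\ell)}) \ge \rho$ for every $\pi \in \Pi_s$, where $\rho := \frac{SW(\vec{A})}{4(\log m+\log k)^2} \ge \frac{(e-1)\,SW(OPT)}{4e(\log m+\log k)^2}$ (for $\ell=\gamma$ all arrival orders give the same revenue, so this bound in fact holds for every $\pi$). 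Thus every sampled arrival order is ``good'' for the output mechanism $\mathcal{M}=(\vec{p}^{*},\vec{q}^{*})$, and it only remains to transfer this from the sample to the distribution.

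For the transfer step, note that if $\Pr_{\pi\sim D^{\Pi}}[Rev(\vec{p}^{*},\vec{q}^{*},\pi)\ge \rho] \ge \tfrac12$, then $E_{\pi\sim D^{\Pi}}[Rev(\mathcal{M},\pi)] \ge \rho/2 \ge \frac{(e-1)\,SW(OPT)}{8e(\log m+\log k)^2}$, which is precisely the claimed inequality. So it suffices to show that, with high probability over $\Pi_s$, the output satisfies $\Pr_{\pi\sim D^{\Pi}}[Rev(\vec{p}^{*},\vec{q}^{*},\pi)< \rho] \le \tfrac12$. The difficulty is that $(\vec{p}^{*},\vec{q}^{*})$ is itself a function of $\Pi_s$, so I cannot apply concentration to it directly. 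I would circumvent this by observing that the entire execution of the core algorithm, started from the fixed allocation $\vec{A}$, is determined by the sequence of $\gamma-1$ arrival orders selected by $\Gamma$ together with the index $\ell$, since $Prices$, the sequential mechanism, and the Case~I/II rule are all deterministic. Hence the output lies in a \emph{fixed, sample-independent} candidate set $\mathcal{C}$ of pricing solutions with $|\mathcal{C}| \le \gamma\,(N!)^{\gamma-1}+1$ (all $(\gamma-1)$-tuples of permutations of $\mathcal{N}$, plus the single tail solution, which never depends on $\Pi_s$). For each fixed $(\vec{p},\vec{q})\in\mathcal{C}$ with $\Pr_{\pi\sim D^{\Pi}}[Rev(\vec{p},\vec{q},\pi)\ge \rho] < \tfrac12$, the probability that all $n$ independent samples are good for it is at most $2^{-n}$, so a union bound over $\mathcal{C}$ bounds the failure probability by $|\mathcal{C}|\,2^{-n}$.

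It remains to choose $n$. Since $\log|\mathcal{C}| = O\big((\log m + \log k)\,N\log N\big)$ is polynomial in the input size, taking $n = \Theta\big((\log m+\log k)\,N\log N\big)$ (or any larger polynomial, to push the failure probability to $2^{-\mathrm{poly}}$) makes $|\mathcal{C}|\,2^{-n}$ vanishingly small, while keeping $\mathcal{M}$ poly-time: we draw $n$ samples, and the core algorithm runs $\gamma=O(\log(mk))$ rounds, each of which evaluates the $n$ arrival orders in $\Pi_s$ using a demand/XoS oracle per buyer. On the complement of the failure event, every $(\vec{p},\vec{q})\in\mathcal{C}$ for which all samples are good necessarily has $\Pr_{\pi\sim D^{\Pi}}[Rev(\vec{p},\vec{q},\pi)\ge \rho]\ge \tfrac12$; since $(\vec{p}^{*},\vec{q}^{*})\in\mathcal{C}$ and all samples are good for it by the first paragraph, we conclude $E_{\pi\sim D^{\Pi}}[Rev(\mathcal{M},\pi)]\ge \rho/2$, which gives $SW(OPT)\le \frac{8e}{e-1}(\log m+\log k)^2\,E_{\pi\sim D^{\Pi}}[Rev(\mathcal{M},\pi)]$.

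I expect the main obstacle to be exactly this decoupling of the data-dependent output from the sample used to produce it — i.e., pinning down a clean, sample-independent superset of the pricing solutions the algorithm can possibly emit and verifying it is only exponentially large, so that a polynomial sample suffices. Everything else is either a routine Chernoff/union-bound computation or bookkeeping already carried out for Theorem~\ref{thm_computationalxos} (in particular the reduction to the worst arrival order via $\Gamma$, and the special handling of the $\ell=\gamma$ tail case where the revenue is arrival-order-independent).
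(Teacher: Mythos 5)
Your proposal is correct, but it takes a genuinely different route from the paper, and the difference is worth noting. The paper's proof, rather than drawing one large sample up front, has the function $\Gamma$ draw a \emph{fresh} batch of $T = \log\{(\log m + \log k)Nm\}$ arrival orders from $D^{\Pi}$ every time it is called, i.e., at each of the $\gamma-1$ iterations of the core algorithm. This automatically decouples the estimator at iteration $t$ from the history: conditional on the first $t-1$ iterations, the pair $(\vec{p}^{(t)}, k(\vec{B}^{(t)}))$ is fixed and the $T$ samples at step $t$ are independent of it, so Lemma~\ref{lem_gensamples} (a bare Markov bound) gives $\Pr[Rev^{(t)} > 2\overline{Rev}^{(t)}] \le 2^{-T}$ with no need for any candidate-set machinery. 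A union bound over the $\gamma$ iterations (not over allocations) then covers the data-dependent index $\ell$ and yields failure probability $\gamma \cdot 2^{-T} = 1/(Nm)$. The total number of oracle draws is only $(\gamma-1)T = O\big(\log(mk)\cdot\log((\log mk) Nm)\big)$. Your version instead fixes a single sample set $\Pi_s$, observes (correctly) that the algorithm's output then depends on $\Pi_s$, and repairs this with a uniform-convergence argument over the fixed superset $\mathcal{C}$ of all pricing solutions reachable for any choice of $\gamma-1$ arrival orders, with $|\mathcal{C}| \le \gamma(N!)^{\gamma-1}+1$. That is sound, but it forces $n = \Theta\big((\log m + \log k)\,N\log N\big)$ samples — still polynomial, so the theorem goes through, but considerably more than the paper needs — and it requires you to explicitly argue that $\mathcal{C}$ is sample-independent and enumerable. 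In short: the paper buys a cleaner, much more sample-efficient argument by re-randomizing at each iteration (a conditional Markov + union-bound-over-$t$ structure), whereas you pay a larger sample complexity to use a single static sample and an $\epsilon$-net-style decoupling. Both establish the stated bound; the paper's is the tighter and more elegant of the two, but your identification of the decoupling obstacle as the crux is exactly right.
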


We remark here that while the revenue achieved by our mechanism is in expectation over the arrival orders, the `high probability' refers to the randomization of the algorithm itself. The core algorithm that achieves the desired approximation guarantee is the same as before; the only difference is in the function $\Gamma$, which is shown in figure \ref{fig.randomGamma}.


\begin{figure}[h]
\begin{framed}
Set $T = \log\{(\log m + \log k)Nm\}$
\begin{enumerate}
\item For $r = 1$ to $T$
\item Draw a sample $\pi_r$ independently from $D^{\Pi}$
\item Suppose that $\vec{S}(\pi_r)$ is the allocation obtained by the sequential posted price mechanism with prices $\vec{p}$ and supply constraints $k(\vec{B})$ when the buyer arrival order is $\pi_r$.
\item Let $Rev(\pi_r)$ be the revenue of the same mechanism.
\end{enumerate}
Return the arrival order $\pi_r$ with the smallest revenue $Rev(\pi_r)$.
\end{framed}
\caption{Function $\Gamma(\vec{p}, \vec{B})$ specifically for random arrival orders}
\label{fig.randomGamma}
\end{figure}

Instead of iterating over a finite set of arrival orders $\Pi$ as we did previously, the new sequential mechanism function draws a polynomial number of arrival orders according to the input distribution, and for each of these orders simulates the sequential posted pricing mechanism with the same set of prices and supply constraints. Finally, the function returns the outcome of the posted pricing mechanism (corresponding to some arrival order), which has the worst revenue. 

\begin{proof}
%
By once again using the initial allocation $\vec{A}$ to be the same as for the adversarial arrival case, we know that the Item-Halving Framework once again results in a solution $(\vec{p}^{(\ell)},\vec{S}^{(\ell)})$ so that $Rev(\vec{p}^{(\ell)}, \vec{S}^{(\ell)})$ is a $\frac{4e}{e-1}(\log m + \log k)^2$ approximation to the optimum welfare.

Of course, it is not sufficient to just prove that $Rev(\vec{p}^{(\ell)}, \vec{S}^{(\ell)})$ is a good approximation to the optimum welfare. We actually need to show that for our choice of price and supply, i.e., $\vec{p}^{(\ell)}, k(\vec{B}^{(\ell)})$, the mechanism $\mathcal{M}$ with these parameters results in the following guarantee: $SW(\vec{A}) \leq 8(\log m + \log k)^2 E_{\pi \sim D^{\Pi}}[Rev(\mathcal{M,\pi})].$ Towards this end, we define $\overline{Rev}^{(t)}$ to be the expected revenue of the sequential mechanism with parameters $\vec{p}^{(t)}, k(\vec{B}^{(t)})$, where the expectation is taken over the entire distribution $D^{\Pi}$. Note that ideally, we would like to identify the index that maximizes $\overline{Rev}^{(t)}$; however, we do not have access to this statistic and therefore, will settle for `a strong estimator' $Rev(\vec{p}^{(\ell)}, \vec{S}^{(\ell)})$, which is actually the minimum revenue over several samples.

\begin{lemma}
\label{lem_gensamples}
Suppose that $x_1, x_2, \ldots, x_n$ are $n$ samples that are drawn i.i.d from some distribution $D$. Then for any arbitrary function $f$,

$$Pr\{\min_{i=1}^n f(x_i) > 2 E_{x \sim D}[f(x)]\} \leq \frac{1}{2^n}.$$
\end{lemma}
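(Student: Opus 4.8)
The plan is to reduce the statement to a single application of Markov's inequality together with independence. First I would observe that the event $\{\min_{i=1}^n f(x_i) > 2E_{x\sim D}[f(x)]\}$ is exactly the intersection of the $n$ events $A_i := \{f(x_i) > 2E_{x\sim D}[f(x)]\}$, since the minimum of finitely many real numbers exceeds a threshold precisely when every one of them does.

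Next I would bound $\Pr[A_i]$ for a single index. Write $\mu := E_{x\sim D}[f(x)]$ and use the fact that $f$ is non-negative (which holds in all our applications, where $f$ is a revenue function, so Markov's inequality applies). Markov gives $\Pr[f(x_i) > 2\mu] \le \Pr[f(x_i) \ge 2\mu] \le \mu/(2\mu) = 1/2$ whenever $0 < \mu < \infty$; in the degenerate case $\mu = 0$ we have $f(x_i) = 0$ almost surely so the probability is $0 \le 1/2$, and if $\mu = \infty$ the claim is vacuous. Since each $x_i$ is distributed as $D$, this bound of $1/2$ holds uniformly in $i$.

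Finally I would invoke independence of $x_1,\dots,x_n$, which implies independence of the events $A_1,\dots,A_n$ (each $A_i$ depends only on $x_i$), so that
$$\Pr\Big[\bigcap_{i=1}^n A_i\Big] = \prod_{i=1}^n \Pr[A_i] \le \Big(\tfrac{1}{2}\Big)^n = \frac{1}{2^n},$$
which is exactly the claimed inequality. There is no genuine obstacle here; the only points meriting a word of care are the implicit non-negativity of $f$ needed for Markov's inequality and the two degenerate cases $\mu \in \{0,\infty\}$, all of which are dispatched as above.
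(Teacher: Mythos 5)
Your proof is correct and follows essentially the same route as the paper's: bound each event $\{f(x_i) > 2\mu\}$ by $1/2$ via Markov's inequality, then multiply by independence. You are somewhat more careful than the paper in flagging the implicit non-negativity of $f$ and the degenerate cases $\mu\in\{0,\infty\}$, but this is the same argument.
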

\begin{proof}
By Markov's inequality, we know that for any $1 \leq i \leq n$, $Pr(f(x_i) \geq 2E[f(x)]) \leq \frac{1}{2}$. Therefore,
the total probability is at most $\frac{1}{2^n}$.
\end{proof}

Thus, we actually need to show that $SW(\vec{A}) \leq O(\log m + \log k)^2 \overline{Rev}^{(\ell)}$. We claim that this holds with high probability. Specifically, if $\vec{p}^{(\ell)}, k(\vec{B}^{(\ell)})$ denote the prices and supply constraints computed by our core algorithm, then with high probability, the expected revenue generated by the sequential posted pricing mechanism with these parameters is within a $O(\gamma^2)$-factor of the social welfare of $\vec{A}$.

Consider $\overline{Rev}^{(\ell)}$ and how well is this statistic estimated by $Rev^{(\ell)}=Rev(\vec{p}^{(\ell)}, \vec{S}^{(\ell)})$. We claim that with high probability $\overline{Rev}^{(\ell)} \geq \frac{Rev^{(\ell)}}{2}$. Consider some iteration $t$, applying Lemma~\ref{lem_gensamples} with $n=T$ and the function $f$ denoting the revenue of the mechanism, we get the following bound on the minimum revenue taken over $T$ sample arrival orders: $Pr\{Rev^{(t)} > 2 \overline{Rev}^{(t)}\} \leq \frac{1}{2^T} = \frac{1}{Nm \log(mk)}.$

Therefore, we can apply the union bound to finish off the proof,

\begin{align*}
Pr\{\overline{Rev}^{(\ell)} & < \frac{Rev^{(\ell)}}{2} \} \leq Pr\{\bigcup_{t=1}^{\gamma} \overline{Rev}^{(t)} < \frac{Rev^{(t)}}{2} \}\\
& \leq \sum_{t=1}^{\gamma} Pr\{\overline{Rev}^{(t)}  < \frac{Rev^{(t)}}{2} \}  \leq \sum_{t=1}^{\gamma} \frac{1}{Nm \log(mk)} = \frac{1}{Nm}.
\end{align*}

So, with probability $(1 - \frac{1}{Nm})$, $\overline{Rev^{(\ell)}}$ is a $8(\log m + \log k)^2$ approximation to the social welfare of $\vec{A}$, which in turn is a $\frac{e}{e-1}$-approximation to the optimum welfare, if one takes $\vec{A}$ to be the allocation output by the algorithm in~\cite{dobzinskiNS10}.
\end{proof}


\subsubsection{Bicriteria Approximations}
In Section \ref{sec:pricedoubling}, we leveraged the simple charging property to design bi-criteria algorithms that simultaneously maximize revenue and welfare, culminating in mechanisms that guarantee a constant factor of the optimum social welfare without sacrificing much revenue. Can we provide similar guarantees here? The level of generality at which our second framework operates precludes unconditional constant factor approximations for welfare if we also require good revenue approximations. That said, we are still able to obtain interesting bicriteria results for our sequential mechanisms when buyers have XoS valuations, with the exact revenue-welfare trade-off depending on the instance. At a high level, our bicriteria result can be summed up as follows: either our mechanism obtains a constant factor social welfare in addition to the log-squared revenue guarantees provided by Theorem \ref{thm_computationalxos} or both the revenue and welfare are a logarithmic approximation to the optimum welfare (as opposed to log-squared).

\begin{theorem}
\label{thm_bicriteriaxos}
Given a polynomially bounded set of arrival orders $\Pi$, we can compute in poly-time a set of prices and supply constraints such that the sequential posted pricing mechanism with these parameters provides one of the following two guarantees:
\begin{enumerate}
\item The mechanism's revenue is a $\Theta(\log m + \log k)^2$-approximation to the optimum welfare and its social welfare is a $O(1)$-approximation to the same objective.\\
\textbf{OR}
\item Both the social welfare and revenue of our mechanism are a $O(\log m + \log k)$-factor smaller than the optimum social welfare.
\end{enumerate}

\end{theorem}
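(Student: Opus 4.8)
The plan is to instantiate the abstract bicriteria statement, Claim~\ref{clm_frame2bicrit}, on exactly the sequence of benchmark allocations and pricing solutions produced by the Core Algorithm of Section~\ref{sec:itemhalving} and already analyzed in Theorem~\ref{thm.XoScharging}. Concretely, I would take the starting allocation $\vec A$ to be the $\tfrac{e}{e-1}$-approximate welfare allocation of~\cite{dobzinskiNS10}, and — as in the proof of Theorem~\ref{thm_computationalxos} — choose the arrival-order function $\Gamma(\vec p,\vec B)$ to return, at each intermediate stage, the order in $\Pi$ that minimizes the revenue of the sequential mechanism run with prices $\vec p$ and supply $k(\vec B)$. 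Running the Core Algorithm with $\gamma=\log m+\log k$ then yields $\{\vec B^{(t)}\}$ and $\{(\vec p^{(t)},\vec S^{(t)})\}$ satisfying the generalized charging property with $\alpha=\beta=2\gamma$ (Theorem~\ref{thm.XoScharging}). The output of the algorithm will be a single pair $(\vec p^{(\ell)},k(\vec B^{(\ell)}))$, and the algorithm can decide at the end which of the two advertised guarantees it has attained.

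The first step is to check the one extra hypothesis required by Claim~\ref{clm_frame2bicrit}, namely $SW(\vec B^{(t)})(1-\tfrac1\beta)\le \alpha\,Rev(\vec p^{(t)},\vec S^{(t)})+Surp(\vec p^{(t)},\vec S^{(t)})$ for every $t$. For $t\le\gamma-1$ this is immediate from Claim~\ref{clm_case1gen} applied with $\vec B=\vec B^{(t)}$ and $\alpha=2\gamma$, since $\vec p^{(t)}=Prices(\vec B^{(t)},\gamma)$ and $\vec S^{(t)}$ is precisely the output of the sequential mechanism with those prices and supply $k(\vec B^{(t)})$; crucially this holds regardless of whether the Core Algorithm subsequently falls into Case~I or Case~II, because the hypothesis only mentions $\vec S^{(t)}$, not $\vec B^{(t+1)}$. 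For $t=\gamma$ it follows from part~(ii) of the charging property, $SW(\vec B^{(\gamma)})\le 2\gamma\,Rev(\vec p^{(\gamma)},\vec S^{(\gamma)})$, together with $Surp\ge 0$.

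Next I would invoke Claim~\ref{clm_frame2bicrit} with $\alpha=\beta=2\gamma$ to get an index $\ell$ and an instance-dependent constant $c\ge 1$ satisfying (i) and (ii). Substituting $\beta=\alpha=2\gamma$ and using $1-\tfrac{\gamma-1}{2\gamma}\ge\tfrac12$ and $\tfrac12(1-\tfrac1{2\gamma})\ge\tfrac14$, the two bounds simplify to $Rev(\vec p^{(\ell)},\vec S^{(\ell)})\ge \tfrac{c}{8\gamma^{2}}\,SW(\vec A)$ and $SW(\vec S^{(\ell)})\ge \tfrac12\big(\tfrac14-\tfrac{c}{2\gamma}\big)SW(\vec A)$. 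Now split on the size of $c$: if $c\le\tfrac{\gamma}{4}$ then $\tfrac14-\tfrac{c}{2\gamma}\ge\tfrac18$, so $SW(\vec S^{(\ell)})=\Omega(SW(\vec A))$ while $Rev(\vec p^{(\ell)},\vec S^{(\ell)})=\Omega(SW(\vec A)/\gamma^{2})$ — this is the first alternative; if $c>\tfrac{\gamma}{4}$ then $Rev(\vec p^{(\ell)},\vec S^{(\ell)})=\Omega(SW(\vec A)/\gamma)$, and since $SW\ge Rev$ always, the welfare of the same solution is likewise $\Omega(SW(\vec A)/\gamma)$ — this is the second alternative. In both cases $SW(\vec A)=\Omega(SW(OPT))$, so the approximation factors are as claimed.

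Finally I would transfer the bounds to the sequential mechanism run under an arbitrary order from $\Pi$, exactly as in Theorem~\ref{thm_computationalxos}: by the choice of $\Gamma$, for every $\pi\in\Pi$ the revenue of the mechanism with parameters $(\vec p^{(\ell)},k(\vec B^{(\ell)}))$ is at least $Rev(\vec p^{(\ell)},\vec S^{(\ell)})$ (and if $\ell=\gamma$, the tail prices of Figure~\ref{fig:tail} make the outcome order-independent), so the revenue guarantee is robust over $\Pi$; and because welfare always dominates revenue, the $\Omega(SW(OPT)/\gamma)$ welfare bound of the second alternative is robust as well. The delicate point, and the step I expect to be the main obstacle, is obtaining the $O(1)$-welfare guarantee of the first alternative robustly over \emph{all} arrival orders rather than only for the particular allocation $\vec S^{(\ell)}$ realized by $\Gamma(\vec p^{(\ell)},\vec B^{(\ell)})$: the scaled-down prices $p^{(\ell)}_i=\tfrac{U_i(\vec B^{(\ell)})}{2\gamma k_i(\vec B^{(\ell)})}$ are small, so the standard XoS unsold-good argument (as in Claim~\ref{clm_case1gen} and~\cite{feldmanGL15}) forces every arrival order to leave surplus, hence welfare, comparable to $SW(\vec B^{(\ell)})$; quantifying this, and reconciling it with the fact that these same low prices only certify a $\Theta(\gamma)$-fraction of $SW(\vec B^{(\ell)})$ in the worst case while the small-$c$ regime needs a constant fraction of $SW(\vec A)$, is the crux of the argument and where the case analysis on $c$ must be pushed a bit further.
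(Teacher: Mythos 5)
Your proposal mirrors the paper's argument almost exactly: instantiate Claim~\ref{clm_frame2bicrit} with $\alpha=\beta=2\gamma$ on the Core Algorithm's sequence, verify the extra surplus hypothesis via Claim~\ref{clm_case1gen} (for $t<\gamma$) and the tail bound (for $t=\gamma$), then case-split on the instance-dependent $c$. Your constants differ slightly (you cut at $c=\gamma/4$ where the paper uses $c=\gamma/2-1/2$), but the structure and the conclusions are the same.

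The one step you flag as "the main obstacle" — extending the $O(1)$-welfare bound of the first alternative from the specific order $\Gamma(\vec p^{(\ell)},\vec B^{(\ell)})$ to every $\pi\in\Pi$ — is indeed the place where a little more work is required, and the paper resolves it exactly the way you anticipate by "pushing the case analysis on $c$ a bit further." Concretely: Claim~\ref{clm_case1gen} is arrival-order-agnostic, so for \emph{every} $\pi\in\Pi$ one has
$SW(\vec B^{(\ell)})(1-\tfrac{1}{2\gamma})\le 2\gamma\,Rev^{(\ell)}(\pi)+Surp^{(\ell)}(\pi)$.
Separately, the lower bound $SW(\vec B^{(\ell)})\ge\tfrac14 SW(\vec B^{(1)})$ (extracted from the telescoping inside the proof of Claim~\ref{clm_frame2bicrit} using the definition of $\ell$) is a property of the benchmark sequence alone, with no dependence on the arrival order. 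Chaining these gives $\tfrac14 SW(\vec B^{(1)})(1-\tfrac{1}{2\gamma})\le 2\gamma\,Rev^{(\ell)}(\pi)+Surp^{(\ell)}(\pi)$ for all $\pi$, and then one defines a per-order constant $c(\pi)\ge c$ (using $Rev^{(\ell)}(\pi)\ge Rev(\vec p^{(\ell)},\vec S^{(\ell)})$, by the choice of $\Gamma$ as the revenue-minimizing order) and repeats the $c$-dichotomy for each $\pi$ individually. The upshot is precisely the resolution to the tension you identify: when $Rev^{(\ell)}(\pi)$ is too large for the scaled-down-prices argument to certify constant surplus, that revenue itself is $\Omega(SW(\vec A)/\gamma)$, so the order falls into alternative~2 rather than alternative~1. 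Your proposal is correct in approach and correctly locates the last gap; it just stops short of supplying the order-independence observation and the per-$\pi$ redefinition of $c$ that close it.
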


Depending on the instance, our approximations will fall either in the first realm (constant factor welfare) or the second (log factor revenue). Even if we wish to ignore the dichotomy, the social welfare guaranteed by our mechanism is at most a $O(\log m + \log k)$-factor away from the optimum welfare in the worst case. The exact algorithm that achieves the above bicriteria approximation is as follows: \emph{Run the core algorithm with $(i)$ An allocation $\vec{A}$ that is a $(1-\frac{1}{e})$-approximation to the optimum welfare~\cite{dobzinskiNS10}, $(ii)$ Set of arrival orders $\Pi$, and $(iii)$ $\gamma = \log m + \log k$. Use the same $\Gamma$ function as the one for Theorem \ref{thm_computationalxos}, i.e., the one that returns the arrival order yielding the worst possible revenue. Let $\ell$ be the smallest index such that $Rev(\vec{p}^{(\ell)},\vec{S}^{(\ell)}) \geq \frac{1}{8(\log m + \log k)^2}SW(\vec{A})$. The sequential mechanism corresponding to $\vec{p}^{(\ell)}, \vec{q}^{(\ell)}=k(\vec{B}^{(\ell)})$ achieves the desired bicriteria guarantee. }

The actual proof of the theorem is deferred to the appendix.

\section{Item Halving Framework applied to Multi-unit Markets}\label{sec:multi-item}
We now highlight the versatility of our item halving framework by showing that it leads to good approximations for revenue even when buyer valuations may not belong to the class of XoS functions. Specifically, we consider the popular multi-unit market setting~\cite{branzeiFMZ16,feldmanFLS12} where buyers only care about the number of items they receive, and show that a small variant of our previously developed algorithm yields a $O(\log m)$-approximation for revenue. Previously ~\cite{balcanBM08} provided a sequential mechanism with a $O(\log m + \log(N))$ guarantee for revenue; not only do we improve upon the approximation factor by removing the dependence on $N$, but we provide a {\em simultaneous} mechanism with the given approximation factor. One can therefore infer that it is possible to achieve good guarantees in this setting without giving up on envy-freeness. Finally, our result also implies that the item halving framework is not specifically tied to just one type of mechanism (sequential) or a specific class of valuations (XoS).

A multi-unit market consists of a single good and each buyer's valuation depends only on the number of copies of this good that she receives. For the purpose of notational consistency, we assume that the market consists of a set $\mathcal{I}$ of $m$ goods with one copy each and every buyer's valuation function depends only on the cardinality of the set of goods, i.e., for any $j \in \mathcal{N}$, and any $S, T \subseteq \mathcal{I}$, $v_j(T) = v_j(S)$ if $|S| = |T|$. 
One can interpret this setting as a market comprising of a single good with $m$ copies therein. Note that such valuations can also exhibit complementarities, and therefore do not fall under the XoS class. Additionally, we make the fairly standard assumption (e.g., see~\cite{balcanBM08}) that no single buyer wishes to buy out more than half of the entire supply of goods in the market:

\begin{definition}{(No Overwhelming Buyer Assumption)} We say that a given instance with multi-unit valuations satisfies the {\em no overwhelming buyer assumption} if for every buyer $j$, $v_j(\frac{m}{2} + q) = v_j(\frac{m}{2})$ for any non-negative integer $q$.
\end{definition}

Based on the assumption, we will only consider allocations where no single buyer consumes more than $\frac{m}{2}$ copies of the good.

\begin{theorem}\label{thm:multi-unit}
Given any instance of a multi-unit market that satisfies the no overwhelming buyer assumption and an input allocation $\vec{A}$, we can compute in poly-time a single price $\tilde{p}$ such that the simultaneous mechanism with price $\tilde{p}$ on all copies of the good achieves a $O(\log m)$-revenue approximation to $SW(\vec{A})$ as long as $m \geq 2$.
\end{theorem}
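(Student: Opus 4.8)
The plan is to run a lightly modified version of the Item Halving core algorithm of Section~\ref{sec:itemhalving}, exploiting the fact that in a multi-unit market all $m$ copies are interchangeable, so each stage only ever needs a single common price. Starting from the input allocation $\vec{A}$ (which, by the no overwhelming buyer assumption, we may take to give at most $m/2$ copies to any one buyer), I would iterate $\gamma=\lceil\log m\rceil$ stages: given the benchmark $\vec{B}^{(t)}$ with $\theta_t=\theta(\vec{B}^{(t)})$ allocated copies, post a uniform price proportional to the average per-copy value $SW(\vec{B}^{(t)})/\theta_t$, run the \emph{simultaneous} mechanism with supply capped at $\theta_t$, and branch exactly as in Figure~\ref{fig:coremain} — if at most half the copies sell, carry the realized allocation forward as $\vec{B}^{(t+1)}$, and otherwise return the unsold copies to the highest per-copy value buyers of $\vec{B}^{(t)}$. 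The halving lemma (Lemma~\ref{lem_itemhalving}) goes through verbatim, so after $\gamma$ stages at most a constant number of copies remain, and the tail step (sell a single copy just below the maximum single-item value) is a constant-factor approximation to the residual welfare, exactly as before.

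The crux is to show this sequence of benchmarks and priced allocations satisfies the generalized charging property (Definition~\ref{defn_framework2}) with a \emph{constant} $\alpha$ and $\beta=\Theta(\gamma)$: plugging $\alpha=O(1)$, $\beta=\Theta(\gamma)$, and $\gamma=\Theta(\log m)$ into Claim~\ref{clm_frame2rev} then yields a solution with revenue $\Omega(SW(\vec{A})/\log m)$, which is the whole point of the improved factor over the XoS bound. For the "few copies sold" branch I would \emph{not} invoke Claim~\ref{clm_case1gen}/Corollary~\ref{corr_gencase1} as stated, since there the error term $SW(\vec{B})/\alpha$ is tied to the price-scaling factor; instead I would prove a multi-unit-specific charging bound. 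Because there is effectively one good, the welfare a buyer $j$ forgoes when she fails to buy in stage $t$ can be charged copy-by-copy, along the marginal-value decomposition of $v_j$, to the payments she makes across the (geometric) sequence of stage prices — this is the classical single-parameter telescoping over per-copy threshold prices. Restricting to buyers whose benchmark value exceeds $SW(\vec{A})/(2m)$ — of which there are at most $m$, since only $m$ copies are allocated in total — discards only a $\tfrac12$ fraction of $SW(\vec{A})$ while confining all relevant marginal values to a $\mathrm{poly}(m)$ window, which is precisely what makes $\gamma=O(\log m)$ stages enough. For the Alloc-Unsold branch, Claim~\ref{clm_allocunsoldgen}'s bound (no error term) applies, so nothing new is needed there.

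The last step is to package the extracted stage $(\vec{p}^{(\ell)},\vec{S}^{(\ell)})$ as an honest simultaneous mechanism at a single price $\tilde{p}$. When stage $\ell$ is a "few copies sold" stage, total demand was strictly below the supply cap $\theta_\ell\le m$, so the outcome is automatically feasible and envy-free; the tail stage sells exactly one copy and is trivially envy-free. The one place the no overwhelming buyer assumption really earns its keep is the "many copies sold" stages and, more generally, ruling out a degenerate over-demand situation: since no single buyer can absorb a constant fraction of the supply and the posted price is below the average per-copy value, raising to the market-clearing price (which only increases revenue) restores a valid envy-free outcome without destroying the charging analysis. Combining, the simultaneous mechanism at price $\tilde{p}=p^{(\ell)}$ collects revenue $\Omega(SW(\vec{A})/\log m)$, i.e. an $O(\log m)$-approximation, for every $m\ge 2$.

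I expect the main obstacle to be the constant-$\alpha$ charging bound: the generic Item Halving analysis couples the price-scaling factor to the additive error, so one must replace it with a bespoke multi-unit accounting that charges lost welfare to a buyer's own payments across the price sequence, while checking that throwing away the low-value buyers plus the boundedness of the per-copy-value window genuinely keeps the stage count at $O(\log m)$ rather than $O(\log^2 m)$. A secondary but real subtlety — and the only use of the no overwhelming buyer assumption — is verifying feasibility and envy-freeness of the final uniform-price simultaneous outcome when aggregate demand is large.
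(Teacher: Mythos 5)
Your high-level skeleton (uniform price proportional to per-copy average value, $\gamma=\lceil\log m\rceil$, same branching into halving vs.\ Alloc-Unsold) matches the paper's. But the route you plan for the charging bound has a genuine gap, and the paper's own argument is considerably simpler than what you are trying to invent.

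The gap: you want to establish Definition~\ref{defn_framework2} with constant $\alpha$ and then invoke Claim~\ref{clm_frame2rev}, and you say that for the Alloc-Unsold branch ``Claim~\ref{clm_allocunsoldgen}'s bound (no error term) applies, so nothing new is needed there.'' But Claim~\ref{clm_allocunsoldgen} is stated with $\alpha$ equal to the price-scaling factor, which with prices $p^{(t)} = SW(\vec{B}^{(t)})/(2\gamma\,\theta(\vec{B}^{(t)}))$ is $\alpha = 2\gamma$, not $O(1)$. So that branch gives $SW(\vec{B}^{(t)}) - SW(\vec{B}^{(t+1)}) \le 2\gamma\,Rev^{(t)}$, and plugging $\alpha = \Theta(\gamma)$ into Claim~\ref{clm_frame2rev} would deliver $\Omega(SW/\gamma^2)$, i.e.\ the $O(\log^2 m)$ XoS bound, not $O(\log m)$. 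Your plan to repair the other branch via a ``classical single-parameter telescoping over per-copy threshold prices'' (with a cut-off at $SW(\vec{A})/(2m)$ and marginal-value decomposition) is both unjustified as written and beside the point, because even if it succeeded on the ``few copies sold'' branch, the Alloc-Unsold branch would still poison the constant $\alpha$.

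What the paper actually does is avoid Claim~\ref{clm_frame2rev} and any averaging over $\gamma$ stages altogether, and this is where the extra $\log$ factor is saved. The observation (Proposition~\ref{prop_surpdiff}) is one line of algebra: at uniform price $p^{(t)}$, envy-freeness for each buyer $j$ gives $v_j(B_j^{(t)}) - v_j(S_j^{(t)}) \le p^{(t)}(|B_j^{(t)}| - |S_j^{(t)}|) \le p^{(t)}|B_j^{(t)}|$, so summing over buyers, $SW(\vec{B}^{(t)}) - SW(\vec{S}^{(t)}) \le p^{(t)}\theta(\vec{B}^{(t)}) = SW(\vec{B}^{(t)})/(2\gamma)$ --- a pure welfare-loss bound with \emph{no} revenue term and no marginal-value accounting, no low-value-buyer pruning, no poly($m$) window argument. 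It follows by telescoping that the benchmark welfare stays above $SW(\vec{A})/2$ until the \emph{first} step $\ell$ where more than half of $\theta(\vec{B}^{(\ell)})$ copies sell (i.e.\ the first time Alloc-Unsold would be invoked), and at that single step the revenue is already $p^{(\ell)}\theta(\vec{S}^{(\ell)}) \ge \frac{SW(\vec{B}^{(\ell)})}{2\gamma\theta(\vec{B}^{(\ell)})}\cdot\frac{\theta(\vec{B}^{(\ell)})}{2} \ge \frac{SW(\vec{A})}{8\gamma}$. One never needs to analyze the post-Alloc-Unsold trajectory, so the $\alpha=2\gamma$ issue never arises. The no-overwhelming-buyer assumption is used earlier and more narrowly than in your account: it handles the degenerate case where, at the very first price $p'$, demand already exceeds $m$; one raises to the market-clearing price $p^{(0)}$ and the assumption guarantees at least $m/2$ copies still sell (Proposition~\ref{prop_buyerdropbundle}). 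Once demand is at most $m$, Proposition~\ref{prop_priceincrease} shows the prices only increase, so feasibility is automatic in all later stages and no further market-clearing repair is needed. I'd recommend abandoning the constant-$\alpha$ charging program and proving Proposition~\ref{prop_surpdiff}'s clean per-stage bound directly.
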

Notice that in multi-unit settings, it is essential for a non-discriminatory mechanism to post a single price on all of the goods since these are identical from the buyers' perspective. One can easily couple the above theorem with any of the known algorithms for computing welfare maximizing allocations~\cite{dobzinskiN10} to obtain a simultaneous mechanism whose revenue is a $O(\log m)$-approximation to the optimum welfare, and hence revenue. The algorithm that achieves this approximation guarantee is based on the same item halving core algorithm as the XoS case and therefore, the proof bears overarching similarities to our previous proofs, and therefore, is presented in Appendix~\ref{app:sec5}. However, a few constraints that arise for the multi-unit case are quite different compared to the XoS setting and we carefully handle these in our algorithm.

\section{Conclusion and Open Questions}
The contributions made by this work can be evaluated at two levels. The first and more fundamental contribution is an intuitive but general framework for revenue maximization based on the commonly used \emph{price doubling} approach, and our new technique of \emph{item halving}. The price doubling framework captures the essence of a number of revenue maximization algorithms, previously developed in an ad-hoc fashion for specific settings. We unify many of these results under a common banner, and using our framework as a sledgehammer, transform all of these results into bicriteria approximation algorithms with only a constant factor loss in revenue. We believe that the design of simple frameworks that unify existing results, and then go on to improve upon them, is a highly important research agenda in computer science. 
We expect that the framework developed in this paper, especially the new insights obtained via item halving, will be useful for revenue maximization in other settings.

Our second contribution is more technical: we present the first known algorithm with a log-squared approximation factor for revenue maximization with XoS buyers. In doing so, we partially resolve a decade long open question~\cite{balcanBM08,chakrabortyHK09} about the existence of polylogarithmic approximations for revenue maximization based on static item pricing. Our result is surprisingly powerful: it implies that even for quite general combinatorial valuations (which are known to have hidden complementarities~\cite{feldmanGL16}), we can use arguably the simplest pricing scheme known to us (one number per good) and approximate an all-encompassing benchmark, that is the optimum social welfare. 
We do however remark that the previously known guarantees of \cite{balcanBM08} hold for the more general class of subadditive valuations, and it would be interesting to see if our results extend in that direction with an additional logarithmic loss, since subadditive functions are known to be log-approximate XoS functions.

Our results are shown for a full information model, which appears to be in some senses necessary, since our algorithm relies crucially on computing a series of allocations based on simulating the mechanism. Of course, a natural question to ask is whether our results extend to a more realistic Bayesian information setting, where the seller is only aware of the distribution from which buyers' valuations are drawn. In moving to the Bayesian framework, we immediately run into difficulties. For instance, even for single-item Bayesian auctions, it is known that no posted price obtains a reasonable revenue in comparison to the optimum welfare~\cite{alaeiHNPY15}. One natural way to circumvent this barrier would be to develop benchmarks that are weaker than the optimum welfare as in~\cite{caiDW16,chawlaM16}; alternatively, we could assume some natural properties on the distributions themselves such as monotone hazard rate. Extending our framework and techniques to Bayesian and other partial information settings is perhaps the most clear direction for future work.

\bibliographystyle{plain}

\bibliography{bibliography}

\appendix
\newpage

\section{Formal Definitions of Buyer Valuations}
\label{app:definitions}
In this section, we formally define all the classes of valuation functions that we refer to in our work. Recall that our model comprises of monotone combinatorial valuation functions $v:2^{\mathcal{\mathcal{I}}} \to \mathbb{R}^{+} \cup \{0\}$. Finally, given a valuation $v$ and prices $\vec{p}$, we say that a set $S \subseteq \mathcal{I}$ belongs to the demand set of $v$ with respect to $\vec{p}$ if $S$ is a utility-maximizing bundle under these prices, i.e., $S = \argmax_{T \subseteq \mathcal{I}}(v(T) - \sum_{i \in T}p_i)$. 
\begin{description}

\item [Unit-Demand] In a unit-demand valuation, there exists a set of values $(v_i)_{i \in \mathcal{I}}$ such that for every $T \subseteq \mathcal{I}$, $v(T) = \max_{i \in T}v_i$.

\item [Gross Substitutes] A valuation satisfies the gross substitutes property if for every two vectors of prices $\vec{p_1}$, $\vec{p_2}$ such that $\vec{p_2} \geq \vec{p_1}$, and every bundle $S$ in the demand set of $v$ given $\vec{p_1}$, there exists a bundle $T$ in the demand set of $v$ with respect to $\vec{p_2}$ which contains every good $i \in S$ whose price in $\vec{p_2}$ is unchanged from its price in $\vec{p_1}$. 

\item [Submodular] For any two sets $S, T$ with $T \subseteq S \subseteq \mathcal{I}$, and any item $i \in \mathcal{I}$, $v(S \cup \left\lbrace i \right\rbrace) - v(S) \leq v(T \cup \left\lbrace i \right\rbrace) - v(T)$.

\item [XoS or Fractionally Subadditive] $\exists$ a set of additive functions $(a_1, \ldots, a_r)$ such that for any $T \subseteq \mathcal{N}$, $v(T) = \max_{j=1}^{r} a_j(T)$. These additive functions are referred to as {\em clauses.} Recall that an additive function $a_j$ has a single value $a_j(i)$ for each $i \in \mathcal{N}$ so that for a set $T$ of agents, $a_j(T) = \sum_{i \in T}a_j(i)$.

\item [Subadditive] A valuation is said to be subadditive if any two sets $S,T \subseteq \mathcal{I}$, $v(S \cup T) \leq v(S) + v(T)$.

\item [Multi-Unit] For any two sets $S,T \subseteq \mathcal{I}$ such that $|S| = |T|$, we have $v(S) = v(T)$.

\item [Single-Minded] A valuation is said to be single-minded if there exists a single set $T$ and value $x$ such that $v(S) = x$ if and only if $T \subseteq S$ and $v(S) = 0$ otherwise.
\end{description}

The valuation functions defined above follow a nice hierarchy of complement-free valuations, namely: Unit-Demand $\subseteq$ Gross Substitutes $\subseteq$ Submodular $\subseteq$ XoS $\subseteq$ Subadditive. On the other single-minded valuations represent among the simplest class of valuations where buyers exhibit complements.

\section{Proofs from Section \ref{sec:model}}

\subsubsection*{Proof of Claim~\ref{clm_simplecharging}}

%
%
\begin{proof}
Suppose that $\ell$ is the index that maximizes revenue, i.e., for all $1 \leq t \leq \gamma$, $Rev(\vec{p}^{(\ell)}, \vec{S}^{(\ell)}) \geq Rev(\vec{p}^{(t)}, \vec{S}^{(t)})$. Performing the telescoping summation on the following system of inequalities $SW(\vec{S}^{(t)}) - SW(\vec{S}^{(t+1)}) \leq \alpha Rev(\vec{p}^{(t)}, \vec{S}^{(t)}),$ from $t=1$ to $t=\gamma-1$ and adding the inequality $SW(\vec{S}^{(\gamma)}) \leq \alpha Rev(\vec{p}^{(\gamma)}, \vec{S}^{(\gamma)})$ to the result of the summation, we get:

$$SW(\vec{S^{(1)}}) \leq \alpha \sum_{t=1}^{\gamma} Rev(\vec{p}^{(t)}, \vec{S}^{(t)}) \leq \alpha \sum_{t=1}^{\gamma} Rev(\vec{p}^{(\ell)}, \vec{S}^{(\ell)}) = \gamma \alpha Rev(\vec{p}^{(\ell)}, \vec{S}^{(\ell)}).$$

The revenue bound follows.

\paragraph{Proof of Bicriteria Bound} Suppose that $\ell_2$ is the smallest index at which $Rev(\vec{p}^{(\ell_2)}, \vec{S}^{(\ell_2)}) \geq (1-\frac{1}{c})\frac{SW(\vec{S^{(1)}})}{\gamma \alpha}.$ Since $c \geq 1$, we know from the first part of this claim that there must always exist an $\ell_2$ satisfying the above inequality. Now, we can directly use the charging property to bound $SW(\vec{S}^{(\ell_2)})$. Let us add up the inequality  $SW(\vec{S}^{(t)}) - SW(\vec{S}^{(t+1)}) \leq \alpha Rev(\vec{p}^{(t)}, \vec{S}^{(t)}),$ from $t=1$ to $t=\ell_2 - 1$. We get that

\begin{align*}
SW(\vec{S^{(1)}}) - SW(\vec{S^{(\ell_2)}}) & \leq \alpha \sum_{t=1}^{\ell_2 - 1}Rev(\vec{p}^{(t)}, \vec{S}^{(t)}) \\
& \leq \alpha \sum_{t=1}^{\ell_2 - 1} (1-\frac{1}{c})\frac{SW(\vec{S^{(1)}})}{\gamma \alpha} \\
& \leq \gamma \alpha (1-\frac{1}{c})\frac{SW(\vec{S^{(1)}})}{\gamma \alpha}\\
& = (1-\frac{1}{c}) SW(\vec{S^{(1)}}).
\end{align*}

The second inequality comes from the definition of $\ell_2$ according to which $Rev(\vec{p}^{(t)}, \vec{S}^{(t)}) < (1-\frac{1}{c})\frac{SW(\vec{S^{(1)}})}{\gamma \alpha}$ for all $t \leq \ell_2 - 1$. This completes the proof.
\end{proof}
%

%
%

\subsubsection*{Proof of Claim~\ref{clm_frame2rev}}
%
%

\begin{proof}
Suppose that for some $r \geq 1$, $\vec{B}^{(r)}$ represents the allocation with the maximum social welfare among the sequence of benchmark allocations, i.e., $r = \argmax_{t=1}^\gamma SW(\vec{B}^{(t)})$. Define $\ell$ to be the index in the range $[1,\gamma]$ at which the pricing solution obtains the maximum revenue. So, for all $1 \leq t \leq \gamma$, $Rev(\vec{p}^{(\ell)}, \vec{S}^{(\ell)}) \geq Rev(\vec{p}^{(t)}, \vec{S}^{(t)})$. As per Definition~\ref{defn_framework2}, we know that for any $r \leq t \leq \gamma-1$, the following property holds,

$$SW(\vec{B}^{(t)}) - SW(\vec{B}^{(t+1)}) \leq \alpha Rev(\vec{p}^{(t)}, \vec{S}^{(t)}) + \frac{SW(\vec{B}^{(t)})}{\beta}.$$

Summing up the above sequence of inequalities from $t=r$ to $t=\gamma-1$ along with the inequality $SW(\vec{B}^{(\gamma)}) \leq \alpha Rev(\vec{p}^{(\gamma)}$, we get that

\begin{align*}
SW(\vec{B}^{(r)}) 
& \leq \alpha \sum_{t=1}^{\gamma} Rev(\vec{p}^{(t)}, \vec{S}^{(t)}) + \sum_{t=1}^{\gamma-1} \frac{SW(\vec{B}^{(t)})}{\beta}\\
& \leq \gamma\alpha  Rev(\vec{p}^{(\ell)}, \vec{S}^{(\ell)}) + (\gamma-1) \frac{SW(\vec{B}^{(r)})}{\beta}\\
\end{align*}

\noindent Therefore, after transposition, we have that $Rev(\vec{p}^{(\ell)}, \vec{S}^{(\ell)}) \geq \frac{1}{\gamma\alpha}SW(\vec{B}^{(r)})\{1-\frac{\gamma-1}{\beta}\} \geq \frac{1}{\gamma\alpha}SW(\vec{B}^{(1)})\{1-\frac{\gamma-1}{\beta}\}.$
\end{proof}
%
%

\subsubsection*{Proof of Claim~\ref{clm_frame2bicrit}}
%
%
\begin{proof}
We already know from Claim~\ref{clm_frame2rev} that there exists at least one pricing solution whose revenue is at least $\frac{SW(\vec{B}^{(1)})}{\gamma\alpha}(1-\frac{\gamma-1}{\beta})$. Without loss of generality, suppose that $\vec{B}^{(1)}$ is the benchmark allocation that maximizes social welfare. Let $\ell$ denote the smallest index $t$ such that $Rev(\vec{p}^{(t)},\vec{S}^{(t)}) \geq \frac{1}{2\gamma\alpha }SW(\vec{B}^{(1)})\{1-\frac{\gamma-1}{\beta}\}.$

Applying the telescoping summation argument as in the proof of Claim~\ref{clm_frame2rev} (from $t=1$ to $t=\ell -1$), and using the fact that $Rev(\vec{p}^{(t)},\vec{S}^{(t)}) < \frac{SW(\vec{B}^{(1)})}{2\gamma\alpha }(1-\frac{\gamma-1}{\beta})$ for all $t < \ell$, we get that

\begin{align*}
SW(\vec{B}^{(1)}) - SW(\vec{B}^{(\ell)}) & \leq \alpha \sum_{t=1}^{\ell-1}Rev(\vec{p}^{(t)},\vec{S}^{(t)}) +  \sum_{t=1}^{\ell-1} \frac{SW(\vec{B}^{(t)})}{\beta}\\
& \leq \frac{(\ell-1)}{2\gamma}SW(\vec{B}^{(1)})\{1-\frac{\gamma-1}{\beta}\} + \frac{\ell-1}{\beta}SW(\vec{B}^{(1)})\\
& \leq \frac{1}{2}SW(\vec{B}^{(1)})\{1-\frac{\gamma-1}{\beta}\} + \frac{\gamma-1}{\beta}SW(\vec{B}^{(1)}).
\end{align*}

Recall from the claim statement that $SW(\vec{B}^{(\ell)}) \leq  [\alpha Rev(\vec{p}^{(\ell)},\vec{S}^{(\ell)}) + Surp(\vec{p}^{(\ell)},\vec{S}^{(\ell)})](\frac{\beta}{\beta-1}).$ Adding this inequality to the above expression and moving all of the terms containing $SW(\vec{B}^{(1)})$ to the LHS, we get that

\begin{equation}
\frac{1}{2}SW(\vec{B}^{(1)})(1-\frac{\gamma-1}{\beta}) \leq [\alpha Rev(\vec{p}^{(\ell)},\vec{S}^{(\ell)}) + Surp(\vec{p}^{(\ell)},\vec{S}^{(\ell)})](\frac{\beta}{\beta-1}).
\label{eqn_f2bicriteriaintermed}
\end{equation}

By definition, $Rev(\vec{p}^{(\ell)},\vec{S}^{(\ell)}) \geq \frac{1}{2\alpha \gamma}SW(\vec{B}^{(1)})\{1-\frac{\gamma-1}{\beta}\}$. Suppose that $Rev(\vec{p}^{(\ell)},\vec{S}^{(\ell)}) = \frac{c}{2\alpha \gamma}SW(\vec{B}^{(1)})\{1-\frac{\gamma-1}{\beta}\}$ for some $c \geq 1$. Substituting this into Equation~\ref{eqn_f2bicriteriaintermed}, we get that

$$Surp(\vec{p}^{(\ell)},\vec{S}^{(\ell)}) \geq SW(\vec{B}^{(1)})(1-\frac{\gamma-1}{\beta})\{\frac{1}{2}(1-\frac{1}{\beta}) - \frac{c}{2\gamma}\}.$$

We know that $SW(\vec{S}^{(\ell)}) = Rev(\vec{p}^{(\ell)},\vec{S}^{(\ell)}) + Surp(\vec{p}^{(\ell)},\vec{S}^{(\ell)})$. Inserting the exact form for $Rev(\vec{p}^{(\ell)},\vec{S}^{(\ell)})$ in terms of $c$ and using the above lower bound for $Surp(\vec{p}^{(\ell)},\vec{S}^{(\ell)})$ gives us the claim.
\end{proof}

\section{Proof of Theorem \ref{thm.black-box}}
In order to keep the notation consistent, we define $(\vec{p}^{(0)}, \vec{S}^{(0)})$ to be the outcome of the algorithm on the original instance (or equivalently with reserve price $r=0$). Without loss of generality, we assume that $\gamma$ is an integer. We begin some easy propositions concerning the solutions returned by the algorithm in the face of reserve prices.

\begin{proposition}
\label{prop_pricelowerbound}
Given an instance $G$ and reserve price $r$, consider the outcome $(\vec{p}(r), \vec{S}(r))$ of $Alg$ for the instance $G(r)$ minus the dummy buyers. For every good $i \in \mathcal{I}$, $p_i(r) \geq r$.
\end{proposition}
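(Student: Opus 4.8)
The plan is to argue by contradiction, using nothing more than the fact that $Alg$, when run on $G(r)$, must return a \emph{valid outcome} of the simultaneous mechanism — in particular, an allocation that is envy-free with respect to \emph{every} buyer of $G(r)$, the dummy buyers included. Suppose, for contradiction, that $p_i(r) < r$ for some good $i \in \mathcal{I}$ (if $p_i(r) = \infty$ the bound is immediate). Recall that $G(r)$ contains $k_i+1$ dummy buyers associated with good $i$, each having valuation $r$ for every set containing $i$ and $0$ for every set not containing $i$, and that $G(r)$ inherits the supply constraint of $k_i$ copies of good $i$ from $G$.

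First I would pin down the demand set of such a dummy buyer under the output prices $\vec{p}(r)$. For any bundle $T$ with $i \notin T$, the buyer's utility is $-\sum_{i' \in T} p_{i'}(r) \le 0$, whereas the bundle $\{i\}$ yields utility $r - p_i(r) > 0$. Hence every utility-maximizing bundle for this dummy buyer must contain good $i$: bundles omitting $i$ are strictly suboptimal, and padding $\{i\}$ with price-zero goods does not change utility. Since the outcome returned by $Alg$ is envy-free, each of the $k_i+1$ dummy buyers for good $i$ is allocated a utility-maximizing bundle, and therefore each is allocated a copy of good $i$. This assigns $k_i+1$ copies of good $i$, violating its supply constraint, which contradicts the validity of the outcome. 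Thus $p_i(r) \ge r$ for every $i \in \mathcal{I}$.

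There is no genuinely hard step here; the only points needing a line of care are the boundary cases. The inequality is non-strict precisely because $p_i(r) = r$ is consistent with the claim — the dummy buyers are then indifferent between $\{i\}$ and $\emptyset$, so allocating good $i$ to only $k_i$ of them (or fewer) keeps the outcome envy-free. The case $p_i(r) = \infty$ is trivial, and the possible presence of other goods priced at $0$ does not enlarge the demand set beyond bundles containing $i$. I expect the whole proof to take just a few lines, and note that the ``locally welfare maximizing'' hypothesis on $Alg$ plays no role here — it is only needed later in the proof of Theorem~\ref{thm.black-box}.
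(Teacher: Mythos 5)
Your proof is correct and follows essentially the same route as the paper's: assume $p_i(r) < r$, observe that all $k_i+1$ dummy buyers associated with good $i$ would then demand a bundle containing $i$ under the envy-freeness requirement, and conclude that the supply constraint of $k_i$ is violated, contradicting the validity of $Alg$'s output on $G(r)$. The only difference is that you spell out the dummy buyers' demand sets and the boundary cases ($p_i(r)=r$ and $p_i(r)=\infty$) in more detail, which the paper leaves implicit.
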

\begin{proof}
Assume by contradiction that $p_i(r) < r$ for some $i \in \mathcal{I}$. We know that at this price, all $k_i+1$ dummy buyers corresponding to this good would receive strictly positive utility by purchasing the good but only $k_i$ units of the good are available to be consumed. Therefore, the solution returned by $Alg$ would not be a valid outcome of the simultaneous mechanism for instance $G(r)$, which is a contradiction.
\end{proof}

\begin{proposition}
\label{prop_saturatedsoldout}
Given an instance $G$ and reserve price $r$, consider the outcome $(\vec{p}(r), \vec{S}(r))$ of $Alg$ for the instance $G(r)$ minus the dummy buyers.
Suppose that for some good $i \in \mathcal{I}$, $p_i(r) > r$. Then, the good is sold out in $\vec{S}(r)$, i.e., $k_i(\vec{S}(r)) = k_i$.
\end{proposition}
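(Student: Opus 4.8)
The plan is to exploit two features of the construction $G(r)$: the presence of a dummy buyer whose entire value comes from good $i$, and the \emph{locally welfare maximizing} guarantee of $Alg$. Throughout I will take $r>0$, which is the only case in which this statement is invoked — the reserve prices $r_t = 2^{t-1}\frac{SW^0}{2mk}$ appearing in the reduction are strictly positive (and if $SW^0 = 0$ the whole problem is trivial).

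First I would show that in the full allocation returned by $Alg$ on $G(r)$, every dummy buyer associated with good $i$ receives the empty bundle. Fix such a dummy buyer $d$; recall $v_d(\{i\}) = r$ and $v_d(S) = 0$ for every $S$ not containing $i$. By Proposition~\ref{prop_pricelowerbound} every good has price at least $r>0$, so every nonempty bundle not containing $i$ gives $d$ strictly negative utility, while by the hypothesis $p_i(r) > r$ every bundle containing $i$ gives $d$ utility at most $r - p_i(r) < 0$. Hence the empty bundle is $d$'s unique utility-maximizing bundle, and since $Alg$ outputs a valid (envy-free) outcome of the simultaneous mechanism, $d$ is allocated nothing. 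In particular no dummy buyer of good $i$ (nor of any other good, whose dummies value good $i$ at $0$ against a positive price) consumes a copy of good $i$, so the number of copies of good $i$ sold in $Alg$'s full allocation equals $k_i(\vec{S}(r))$.

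Next I would combine this with local welfare maximization to conclude. Suppose toward a contradiction that $k_i(\vec{S}(r)) < k_i$; then fewer than $k_i$ copies of good $i$ are sold overall, so good $i$ is not sold out. Applying the locally-welfare-maximizing property to $d$ — who is allocated no goods — with the one-element set $T=\{i\}$ of not-sold-out goods yields $v_d(\{i\}) = 0$, contradicting $v_d(\{i\}) = r > 0$. Therefore $k_i(\vec{S}(r)) = k_i$, i.e., good $i$ is sold out.

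The only delicate point is the first step: ruling out that $d$ is handed some zero-priced goods, in which case $d$ would not count as ``allocated no goods'' and the locally-welfare-maximizing hypothesis could not be invoked for her. This is exactly where Proposition~\ref{prop_pricelowerbound} does the work — since all prices are at least $r>0$ there are no zero-priced goods, forcing $d$'s utility-maximizing bundle to be empty. Everything else is immediate from the definitions, with no real computation involved.
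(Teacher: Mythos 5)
Your proof is correct and follows essentially the same route as the paper's: show that the dummy buyers of good $i$ receive null allocations and that no dummy buyer takes a copy of good $i$, then invoke local welfare maximization with $T=\{i\}$ to force good $i$ to be sold out. You are a bit more careful than the paper's terse proof, which merely asserts the null allocations in a parenthetical, by invoking Proposition~\ref{prop_pricelowerbound} to rule out zero-priced goods and by explicitly noting that $r>0$ is needed for the contradiction to bite.
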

\begin{proof}
First of all observe that since $p_i(r) > r$, the algorithm would not allocate this good to any of the dummy buyers. However, since the algorithm is \emph{locally welfare maximizing}, we know that if there is some good $i'$ that is not sold out, then the algorithm cannot improve its welfare by greedily allocating this good to some buyer with a null allocation. Therefore, the contrapositive of this property implies that if the algorithm has not allocated good $i$ to any of the dummy buyers (who have null allocations), then it is sold out.
\end{proof}

Suppose that $(\vec{p}(r), \vec{S}(r))$ represents the outcome of the augmented problem for some instance $G$ and reserve $r$. With respect to this outcome, we say that a good $i$ is \emph{saturated} if its price $p_i(r) > r$. We define $Sat(r) \subseteq \mathcal{I}$, to define the set of saturated goods in the pricing solution $(\vec{p}(r), \vec{S}(r))$.

\begin{proposition}
\label{prop_revsaturated}
Given an instance $G$ and price $r$, consider the outcome $(\vec{p}(r), \vec{S}(r))$ of $Alg$ for the instance $G(r)$ minus the dummy buyers. Then,
$Rev(\vec{p}(r), \vec{S}(r)) \geq \sum_{i \in Sat(r)}p_ik_i(\vec{S}({(r)})) = \sum_{i \in Sat(r)}p_ik_i$.
\end{proposition}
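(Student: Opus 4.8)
The plan is to observe that this proposition follows immediately from the two preceding propositions together with the nonnegativity of prices, so I would not expect any genuine difficulty here. First I would unpack the definition $Rev(\vec{p}(r),\vec{S}(r)) = \sum_{i \in \mathcal{I}} p_i(r)\, k_i(\vec{S}(r))$. Every summand is a product of a price and an allocation count; by Proposition~\ref{prop_pricelowerbound} we have $p_i(r) \geq r \geq 0$ for every good (so in particular all prices are nonnegative), and trivially $k_i(\vec{S}(r)) \geq 0$. Hence we may discard all the terms corresponding to goods outside $Sat(r)$ without increasing the sum, which yields
$$Rev(\vec{p}(r),\vec{S}(r)) \;\geq\; \sum_{i \in Sat(r)} p_i(r)\, k_i(\vec{S}(r)),$$
the inequality in the statement.

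For the claimed equality, I would invoke Proposition~\ref{prop_saturatedsoldout}: by the definition of $Sat(r)$, any good $i \in Sat(r)$ satisfies $p_i(r) > r$, and that proposition then guarantees $i$ is sold out in $\vec{S}(r)$, i.e.\ $k_i(\vec{S}(r)) = k_i$. Substituting this into each term of the sum over $Sat(r)$ replaces $k_i(\vec{S}(r))$ by $k_i$, giving $\sum_{i \in Sat(r)} p_i(r)\, k_i(\vec{S}(r)) = \sum_{i \in Sat(r)} p_i(r)\, k_i$, which completes the proof.

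The only points requiring minor care are to have nonnegativity of the prices in hand (supplied by Proposition~\ref{prop_pricelowerbound}) before dropping terms from the revenue sum, and to correctly match the definition of a \emph{saturated} good with the hypothesis $p_i(r) > r$ of Proposition~\ref{prop_saturatedsoldout}; there is no substantive obstacle beyond bookkeeping.
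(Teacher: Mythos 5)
Your proof is correct and follows the same route as the paper, which simply remarks that the proposition ``follows from the fact that all of the saturated goods are sold out'' (i.e.\ Proposition~\ref{prop_saturatedsoldout}); you just spell out the bookkeeping of dropping the nonnegative terms for unsaturated goods. Nothing substantive differs.
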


The proposition follows from the fact that all of the saturated goods are sold out. Our final proposition concerns the sequence of solutions generated by our algorithm, specifically the ratio of prices of goods at successive values of $t$.

\begin{proposition}
\label{prop_consecutivepricedoubling}
Consider the sequence of solutions $(\vec{p}^{(t)}, \vec{S}^{(t)})_{t=1}^{\gamma}$. For a given $1 \leq t \leq \gamma - 1$, suppose that some good $i \in \mathcal{I}$ is not saturated with respect to $(\vec{p}^{(t+1)}, \vec{S}^{(t+1)})$. Then, we have that $p^{(t+1)}_i \leq 2p^{(t)}_i$.
\end{proposition}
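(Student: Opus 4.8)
The plan is to deduce the inequality by chaining three elementary observations; no real computation is required. First I would unwind what it means for good $i$ to fail to be saturated with respect to $(\vec{p}^{(t+1)},\vec{S}^{(t+1)})$: by the definition of saturation this says precisely that $p^{(t+1)}_i \leq r_{t+1}$, where $r_{t+1}$ is the reserve price used in iteration $t+1$ of the black-box reduction (recall $(\vec{p}^{(t+1)},\vec{S}^{(t+1)}) = (\vec{p}(r_{t+1}),\vec{S}(r_{t+1}))$).

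Second, I would invoke the specific choice of reserve prices in the reduction, namely $r_t = 2^{t-1}\frac{SW^0}{2mk}$, which was set up so that successive reserves differ by exactly a factor of two: $r_{t+1} = 2 r_t$. Hence $p^{(t+1)}_i \leq r_{t+1} = 2 r_t$. Third, I would apply Proposition~\ref{prop_pricelowerbound} to the outcome $(\vec{p}^{(t)},\vec{S}^{(t)}) = (\vec{p}(r_t),\vec{S}(r_t))$, which guarantees $p^{(t)}_i \geq r_t$ for every good $i \in \mathcal{I}$. Combining the three bounds gives $p^{(t+1)}_i \leq 2 r_t \leq 2 p^{(t)}_i$, which is the claim.

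There is essentially no obstacle here: the statement is precisely the reason the reserve prices were chosen to form a geometric sequence of ratio two in the first place. The only points worth noting are that the hypothesis ``$i$ is not saturated at step $t+1$'' is genuinely needed (otherwise $p^{(t+1)}_i$ is uncontrolled), and that Proposition~\ref{prop_pricelowerbound} is a uniform statement over all goods, so it supplies the lower bound $p^{(t)}_i \geq r_t$ even though we have no information about whether $i$ is saturated at step $t$.
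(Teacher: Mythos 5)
Your proof is correct and follows the paper's argument essentially verbatim: unsaturation at step $t+1$ caps $p^{(t+1)}_i$ by $r_{t+1}$, the reserves double so $r_{t+1}=2r_t$, and Proposition~\ref{prop_pricelowerbound} gives $p^{(t)}_i \geq r_t$. The only cosmetic difference is that the paper states the unsaturation bound as an equality $p^{(t+1)}_i = r_{t+1}$ (using Proposition~\ref{prop_pricelowerbound} at index $t+1$ as well), whereas you use only the one-sided inequality, which is all that is needed.
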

\begin{proof}
The proof is not hard to see. Since this good is unsaturated with respect to index $t+1$, we know that $p^{(t+1)}_i = r_{t+1}$. However, from Proposition~\ref{prop_pricelowerbound}, we also know that $p^{(t)}_i \geq r_t$. The claim follows since $r_{t+1}= 2r_t$.
\end{proof}

Now, we are ready to prove the charging property, namely $(i)$ for all $t \in [1,\gamma-1]$, $SW(S^{(t)}) - SW(\vec{S}^{(t+1)}) \leq  2 Rev(\vec{p}^{(t)}, \vec{S}^{(t)})$, and $(ii)$ $SW(S^{(\gamma)}) \leq 2 Rev(\vec{p}^{(\gamma)}, \vec{S}^{(\gamma)})$. Fix some value of $t < \gamma$ and consider some buyer $j$. At prices $\vec{p}^{(t+1)}$, this buyer receives her maximum utility from bundle $S^{(t+1)}_j$ as opposed to any other bundle including $S^{(t)}_j$. That is, we have that $v_j(S^{(t+1)}_j) - \sum_{i \in S^{(t+1)}_j}p^{(t+1)}_i \geq v_j(S^{(t)}_j) - \sum_{i \in S^{(t)}_j}p^{(t+1)}_i.$
Rearranging the inequality, and adding up the difference in welfare over all of the buyers, we get that

\begin{align*}
SW(\vec{S}^{(t)}) - SW(\vec{S}^{(t+1)}) 
& \leq \sum_{j \in \mathcal{N}}[\sum_{i \in S^{(t)}_j}p^{(t+1)}_i - \sum_{i \in S^{(t+1)}_j}p^{(t+1)}_i]\\
& = \sum_{i \in \mathcal{I}}p^{(t+1)}_i k_i(\vec{S}^{(t)}) - Rev(\vec{p}^{(t+1)}, \vec{S}^{(t+1)}) \\
& = \sum_{i \in \mathcal{I} \setminus Sat(r_{t+1})}p^{(t+1)}_i k_i(\vec{S}^{(t)}) + \sum_{i \in Sat(r_{t+1})}p^{(t+1)}_i k_i(\vec{S}^{(t)}) - Rev(\vec{p}^{(t+1)}, \vec{S}^{(t+1)}). \\
& \leq \sum_{i \in \mathcal{I} \setminus Sat(r_{t+1})}2p^{(t)}_i k_i(\vec{S}^{(t)}) + Rev(\vec{p}^{(t+1)}, \vec{S}^{(t+1)}) - Rev(\vec{p}^{(t+1)}, \vec{S}^{(t+1)})\\
& \leq 2Rev(\vec{p}^{(t)}, \vec{S}^{(t)}).
\end{align*}

Recall that $Sat(r_{t+1})$ refers to the set of saturated goods in the solution $(\vec{p}^{(t+1)}, \vec{S}^{(t+1)})$. Look at the penultimate step in the above set of inequalities. The upper bound for the goods in $\mathcal{I} \setminus Sat(r_{t+1})$ comes from Proposition~\ref{prop_consecutivepricedoubling}, which highlights the price doubling aspect of our reserve prices. The upper bound for the saturated goods is a bit more subtle. We know from Proposition~\ref{prop_saturatedsoldout} that these goods are fully sold out in $(\vec{p}^{(t+1)}, \vec{S}^{(t+1)})$. Using this in conjunction with Proposition~\ref{prop_revsaturated}, we get that $\sum_{i \in Sat(r_{t+1})}p^{(t+1)}_i k_i(\vec{S}^{(t)}) \leq \sum_{i \in Sat(r_{t+1})}p^{(t+1)}_i k_i \leq Rev(\vec{p}^{(t+1)}, \vec{S}^{(t+1)}).$

We have proved the first of the charging properties. Consider the $t=\gamma$ case, where the reserve price $r_{\gamma} = 2^{\log (mk\alpha)} \frac{SW(\vec{S}^{(0)})}{2 mk} \geq \frac{SW(OPT)}{2}$ since $SW(\vec{S}^{(0)})$ is an $\alpha$-approximation to $SW(OPT)$ by definition of $Alg$. For this case, we need to prove that $SW(\vec{S}^{(\gamma)}) \leq 2 Rev(\vec{p}^{(\gamma)}, \vec{S}^{(\gamma)})$. Without loss of generality, suppose that $\vec{S}^{(\gamma)}$ is not a null allocation. Since at least one good (say good $i$) is purchased by the buyers, we have that $Rev(\vec{p}^{(\gamma)}, \vec{S}^{(\gamma)}) \geq p^{(\gamma)}_i \geq r_{\gamma} \geq \frac{SW(OPT)}{2} \geq \frac{SW(\vec{S}^{(\gamma)})}{2}$. The charging property follows.

Now, we can directly apply Claim~\ref{clm_simplecharging} with $c=\frac{3}{2}$ to get the following lower bounds on the social welfare of the solution returned by our black-box algorithm,

$$Rev(\vec{p}^{(\ell)}, \vec{S}^{(\ell)}) \geq \frac{SW(\vec{S}^{(1)})}{6\gamma} \quad \text{and} \quad SW(\vec{S}^{(\ell)}) \geq \frac{2SW(\vec{S}^{(1)})}{3}.$$

In order to complete the theorem, we only need to show a lower bound on $SW(\vec{S}^{(1)})$ in terms of the optimum social welfare, which we show in the next lemma.

\begin{lemma}
$$SW(OPT) \leq 2\alpha SW(\vec{S}^{(1)}).$$
\end{lemma}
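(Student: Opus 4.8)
The plan is to exploit two facts: that $(\vec{p}^{(1)},\vec{S}^{(1)})$ is the original-buyer part of a genuine envy-free outcome produced by $Alg$ on the augmented instance $G(r_1)$, and that the reserve price $r_1 = \tfrac{SW^0}{2mk}$ is tiny. First I would dispose of the degenerate case: if $SW^0=0$ then, since $Alg$ is an $\alpha$-approximation on $G$, $SW(OPT)\le\alpha\,SW^0=0$ and the inequality is vacuous; so assume $r_1>0$ from now on.

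Next I would invoke envy-freeness. Since $G(r_1)$ has exactly the same supply constraints as $G$ and $\vec{S}^{(1)}$ is the restriction to $\mathcal N$ of the valid (hence feasible and envy-free) outcome of $Alg$ on $G(r_1)$, the pair $(\vec{p}^{(1)},\vec{S}^{(1)})$ is a valid outcome of the simultaneous mechanism for $G$ as well. In particular, writing $OPT=(O_1,\dots,O_N)$ for the optimal allocation, every original buyer $j$ weakly prefers $S_j^{(1)}$ to $O_j$ at prices $\vec{p}^{(1)}$; summing these utility inequalities over $j\in\mathcal N$ and rearranging yields
$$SW(\vec{S}^{(1)})\ \ge\ SW(OPT)\ -\ \sum_{i\in\mathcal I} p_i^{(1)}\, k_i(OPT)\ +\ Rev(\vec{p}^{(1)},\vec{S}^{(1)}).$$

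The crux is to bound $\sum_i p_i^{(1)} k_i(OPT)$. Using $k_i(OPT)\le k_i$ I would write $\sum_i p_i^{(1)} k_i = Rev(\vec{p}^{(1)},\vec{S}^{(1)}) + \sum_i p_i^{(1)}\bigl(k_i - k_i(\vec{S}^{(1)})\bigr)$, and then observe that every saturated good ($p_i^{(1)}>r_1$) is sold out to the original buyers by Proposition~\ref{prop_saturatedsoldout}, so it contributes $0$ to the second sum, while every remaining good has price exactly $r_1$; hence the second sum is at most $r_1\sum_i k_i = r_1\, mk = \tfrac{SW^0}{2}$. Substituting $\sum_i p_i^{(1)} k_i(OPT)\le Rev(\vec{p}^{(1)},\vec{S}^{(1)}) + \tfrac{SW^0}{2}$ into the displayed inequality (the revenue terms cancel) gives $SW(\vec{S}^{(1)})\ge SW(OPT) - \tfrac{SW^0}{2}$; and since $SW^0 = SW(Alg(G))\le SW(OPT)$, this already yields $SW(\vec{S}^{(1)})\ge \tfrac12 SW(OPT)$, which implies the claimed bound because $\alpha\ge1$ (in fact the constant $2\alpha$ could be replaced by $2$).

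The only delicate point is the leftover term $\sum_i p_i^{(1)}\bigl(k_i - k_i(\vec{S}^{(1)})\bigr)$: the saturated goods may carry arbitrarily large prices, so bounding it by $\sum_i p_i^{(1)} k_i$ is hopeless, and one genuinely needs Proposition~\ref{prop_saturatedsoldout} (which rests on local welfare maximality) to argue that those expensive goods never appear in it — leaving only goods priced at the negligible reserve $r_1$, whose total unsold supply is at most $mk$, so the whole term is absorbed into $\tfrac{SW^0}{2}\le\tfrac12 SW(OPT)$.
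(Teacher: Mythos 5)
Your proof is correct. It rests on exactly the two ingredients the paper uses — envy-freeness of $(\vec{p}^{(1)},\vec{S}^{(1)})$ as an outcome on $G$, and Proposition~\ref{prop_saturatedsoldout} (saturated goods are sold out) to make the leftover term bounded by $r_1\cdot mk$ — but you deploy them against a different benchmark. The paper compares $\vec{S}^{(1)}$ against $\vec{S}^{(0)}=Alg(G)$, obtaining $SW(\vec{S}^{(1)})\geq \tfrac12 SW(\vec{S}^{(0)})\geq \tfrac{1}{2\alpha}SW(OPT)$, whereas you compare $\vec{S}^{(1)}$ directly against $OPT$, decompose $\sum_i p_i^{(1)}k_i(OPT)\leq \sum_i p_i^{(1)}k_i = Rev(\vec{p}^{(1)},\vec{S}^{(1)}) + \sum_i p_i^{(1)}(k_i - k_i(\vec{S}^{(1)}))$, kill the saturated part of the residual via Proposition~\ref{prop_saturatedsoldout}, and bound the rest by $r_1 mk = \tfrac{SW^0}{2}\leq\tfrac12 SW(OPT)$. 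This is a genuine (if small) improvement: you prove $SW(\vec{S}^{(1)})\geq\tfrac12 SW(OPT)$, so the factor $2\alpha$ in the lemma statement could indeed be replaced by $2$. The trade-off is cosmetic — the paper's choice of $\vec{S}^{(0)}$ as benchmark reuses the exact same telescoping inequality that drives the rest of Theorem~\ref{thm.black-box}, so it is notationally uniform with what follows, but your version is tighter and no harder. Both correctly rely on the local-welfare-maximality assumption through Proposition~\ref{prop_saturatedsoldout}, which you rightly flag as the point without which the saturated goods (whose prices can be arbitrarily large) would be uncontrollable.
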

\begin{proof}
Recall that $(\vec{p}^{(0)}, \vec{S}^{(0)})$ is the solution returned by $Alg$ on the original instance. Since the algorithm returns an $\alpha$-approximation to OPT, we also know that $SW(\vec{S}^{(0)}) \geq \frac{SW(OPT)}{\alpha}$. Now, consider the prices $\vec{p}^{(1)}$. At these prices, each buyer $j$ receives more utility from the bundle $S^{(1)}_j$ than she would from any other bundle including $S^{(0)}_j$. Also, recall that the reserve price corresponding to the solution $(\vec{p}^{(1)}, \vec{S}^{(1)})$ is $r_1 = \frac{SW(\vec{S}^{(0)})}{2 mk}$. Using the same upper bound for the difference in welfare in terms of prices as we did before, we get that

\begin{align*}
SW(\vec{S}^{(0)}) - SW(\vec{S}^{(1)}) & \leq \sum_{j \in \mathcal{N}}[\sum_{i \in S^{(0)}_j}p^{(1)}_i - \sum_{i \in S^{(1)}_j}p^{(1)}_i]\\
& = \sum_{i \in \mathcal{I} \setminus Sat(r_{1})}p^{(1)}_i k_i(\vec{S}^{(0)}) + \sum_{i \in Sat(r_{1})}p^{(1)}_i k_i(\vec{S}^{(0)}) - Rev(\vec{p}^{(1)}, \vec{S}^{(1)}). \\
& \leq \sum_{i \in \mathcal{I} \setminus Sat(r_{1})}r_1 k_i(\vec{S}^{(0)}) + Rev(\vec{p}^{(1)}, \vec{S}^{(1)}) - Rev(\vec{p}^{(1)}, \vec{S}^{(1)})\\
& \leq \frac{SW(\vec{S}^{(0)})}{2 mk} mk = \frac{SW(\vec{S}^{(0)})}{2}.
\end{align*}

Since $SW(\vec{S}^{(0)}) - SW(\vec{S}^{(1)}) \leq \frac{SW(\vec{S}^{(0)})}{2}$, we get that $SW(\vec{S}^{(1)}) \geq \frac{SW(\vec{S}^{(0)})}{2} \geq \frac{SW(OPT)}{2\alpha}$. Note that in the above series of inequalities, we used the fact that for all of the unsaturated goods, $p^{(1)}_i = r_1 = \frac{SW(\vec{S}^{(0)})}{2 mk}$. The upper bound for the saturated goods comes from Proposition~\ref{prop_revsaturated}.
\end{proof}

The theorem follows.

\section{Proof of Theorem~\ref{thm:implications}}

\begin{proof}
\textbf{General Pricing Mechanism:}  
Since the results in the final statement are not based on envy-free item pricing, we carefully redefine our simple charging framework to include a broader class of mechanisms although it is not hard to see from Definition~\ref{defn_framework2} and Claims~\ref{clm_frame2rev},~\ref{clm_simplecharging} that the property and its implications are not really tied to any specific type of mechanism. Specifically, consider a general mechanism $\mathcal{M}$ whose output $(\vec{P},\vec{S})$ is an allocation of goods and prices charged to each buyer, i.e., $P_j$ denotes the price that the mechanism charges to buyer $j \in \mathcal{N}$. For such mechanisms, we can still define $Rev(\vec{P}, \vec{S})$ analogously.

We now provide a very brief extension of the charging properties to more general mechanisms. However, since the exact same proofs carry over to this setting, we do not reprove entire claims. First of all, consider a sequence of solutions $(\vec{S}^{(t)}, \vec{P}^{(t)})_{t=1}^{\gamma}$, which could be the outcomes of several different mechanisms or the same mechanism for different instances. We say that this sequence satisfies the simple charging property if $(i)$ for all $t \leq \gamma - 1$, $SW(\vec{S}^{(t)}) - SW(\vec{S}^{(t+1)}) \leq \alpha Rev(\vec{P}^{(t)}, \vec{S}^{(t)}),$ and $(ii)$ $SW(\vec{S}^{(\gamma)}) \leq \alpha Rev(\vec{P}^{(\gamma)}, \vec{S}^{(\gamma)})$ for some $\alpha \geq 1$. It is then not hard to see that the following still hold simply by reproving Claims~\ref{clm_frame2rev} and~\ref{clm_simplecharging} with the new notation.
\begin{enumerate}
\item If a sequence of solutions satisfies the charging property, then there exists $\ell_1 \in [1,\gamma]$ such that $Rev(\vec{P}^{(\ell_1)}, \vec{S}^{(\ell_1)}) \geq \frac{SW(\vec{S^{(1)}})}{\gamma \alpha}$

\item If a sequence of solutions satisfies the charging property, then there exists $\ell_2 \in [1,\gamma]$ such that $Rev(\vec{P}^{(\ell_2)}, \vec{S}^{(\ell_2)}) \geq \frac{SW(\vec{S^{(1)}})}{3\gamma \alpha}$ and at the same time $SW(\vec{S}^{(\ell_2)}) \geq \frac{2SW(\vec{S^{(1)}})}{3}.$
\end{enumerate}

We now prove the actual theorem in three cases.

\begin{enumerate}
\item For settings with gross substitutes valuations, it is well known~\cite{gul1999walrasian} that there exists a optimal item pricing based simultaneous mechanism for welfare maximization, i.e., $\alpha=1$. Plugging this into Theorem~\ref{thm.black-box} directly yields the desired bound.

\item The proof of the central result in~\cite{cheungS08} proceeds by computing a sequence of $\log k_{max}$ pricing solutions that satisfy the simple charging property for a constant $\alpha$. The statement follows from a direct application of Claim~\ref{clm_simplecharging} for $c=2$.

\item The final statement is a bit tricky so we carefully prove both parts. First, consider the bundle pricing mechanism. The algorithm from~\cite{feldmanGL16} computes a sequence of $\gamma= 2+\log(N)$ pricing solutions $(\vec{S}^{(t)}, \vec{P}^{(t)})_{t=1}^{\gamma}$ satisfying $(i)$ $\frac{SW(\vec{S}^{(1)})}{2} - SW(\vec{S}^{(2)}) \leq Rev(\vec{P}^{(1)}, \vec{S}^{(1)}),$, $(ii)$ $SW(\vec{S}^{(t)}) - SW(\vec{S}^{(t+1)}) \leq 2Rev(\vec{P}^{(t)}, \vec{S}^{(t)})$ for $t=1$ to $t=\gamma-1$, and $(iii)$ $SW(\vec{S}^{(\gamma)}) \leq 2Rev(\vec{P}^{(\gamma)}, \vec{S}^{(\gamma)})$. Finally, suppose that we are given an $\alpha^*$-approximation algorithm for the given class of valuations, it is known from~\cite{feldmanGL16} that $SW(\vec{S}^{(1)}) \geq \frac{SW(OPT)}{2\alpha^*}$. The statement follows from a slight variant of Claim~\ref{clm_simplecharging}.

Now, consider the non envy-free item pricing mechanism. Once again, from~\cite{elbassioniFS10}, we get a sequence of $\gamma=k_{max}$ (fractional) pricing solutions $(\vec{S}^{(t)}, \vec{P}^{(t)})_{t=1}^{\gamma}$ such that $(i)$ for all $1 \leq t \leq \gamma -1$, we have that $SW(\vec{S}^{(t)}) - SW(\vec{S}^{(t+1)}) \leq 4(1+\epsilon) Rev(\vec{P}^{(t+1)}, \vec{S}^{(t+1)})$ for some suitable choice of $\epsilon$. Secondly $SW(\vec{S}^{(\gamma)}) \leq 2Rev(\vec{P}^{(\gamma)}, \vec{S}^{(\gamma)})$. Suppose that $\ell$ is the smallest index where $Rev(\vec{P}^{(\ell)}, \vec{S}^{(\ell)}) \geq \frac{SW(\vec{S}^{(1)})}{12\gamma(1+\epsilon)}$. Using the same ideas as Claim~\ref{clm_frame2bicrit}, we perform a telescoping summation and get that,

$$SW(\vec{S}^{(1)}) - SW(\vec{S}^{(\ell)}) \leq \frac{SW(\vec{S}^{(1)})}{3} + Rev(\vec{P}^{(\ell)}, \vec{S}^{(\ell)}).$$

Since, $Rev(\vec{P}^{(\ell)}, \vec{S}^{(\ell)}) \leq SW(\vec{S}^{(\ell)})$, we finally get that $SW(\vec{S}^{(\ell)}) \geq \frac{SW(\vec{S}^{(1)})}{3}$, where $SW(\vec{S}^{(1)})$ is the social welfare of the optimum fractional allocation. Finally, the authors in the paper provide a technique to convert any fractional solution $(\vec{S}^{(t)}, \vec{P}^{(t)})$ to an integral one with an additional factor $2$ loss in both profit and welfare. Therefore, the integral solution obtained upon rounding  $\vec{S}^{(\ell)})$ has a social welfare of at least $\frac{SW(\vec{S}^{(1)})}{6}$.

\end{enumerate}
%
%

\end{proof}

\section{Proofs from Section~\ref{sec:itemhalving}}
\label{app:sec4}

\subsubsection*{Proof of Claim~\ref{clm_case1gen}}
\begin{proof}
By definition, the social welfare of the allocation $\vec{B}$ is $\sum_{i \in \mathcal{I}}U_i(\vec{B})$. We will show lower bounds on both the revenue and buyer surplus achieved by the sequential posted pricing mechanism defined in the statement of this claim. Suppose that $Sold$ denotes the set of goods that are sold out during the course of the mechanism, i.e., $k_i(\vec{S}) = q_i$. The total revenue obtained by this mechanism is at least the revenue due to the goods in $Sold$, and so we have that

\begin{equation}
Rev(\vec{p},\vec{S}) \geq \sum_{i \in Sold}p_iq_i = \sum_{i \in Sold}\frac{U_i(\vec{B})}{\alpha q_i} q_i = \sum_{i \in Sold}\frac{U_i(\vec{B})}{\alpha}.
\label{eqn_revlowerbound}
\end{equation}

Moving on to buyer utility, consider any buyer $j$. We know that this buyer purchases a utility-maximizing bundle from the set of goods available in the round where she arrives. Clearly, this buyer's surplus (call it $Surp_j(\vec{p},\vec{S})$) cannot be smaller than the utility that she could have derived by purchasing any other subset of goods offered to her by the mechanism. Specifically, we know that at least one copy of each of the goods in $\mathcal{I} \setminus Sold$ is available during every round of the mechanism including that where buyer $j$ arrives. Therefore, this buyer's surplus is at least her utility for the goods in $B'_j:=(\mathcal{I} \setminus Sold) \cap B_j$. So, we have that $Surp_j(\vec{p},\vec{S}) \geq v_j(B'_j) - \sum_{i \in B'_j}p_i \geq \sum_{i \in B'_j}[x^{B_j}_j(i) - p_i]$. Here we used the XoS property that $v_j(B'_j) \geq x^{B_j}_j(B'_j) = \sum_{i \in B'_j}x^{B_j}_j(i)$. Summing this quantity over all buyers, we get that

\begin{align*}
Surp (\vec{p},\vec{S}) & \geq \sum_{j \in \mathcal{N}} \sum_{i \in (\mathcal{I} \setminus Sold) \cap B_j}[x_j^{B_j}(i) - p_i]\\
& = \sum_{i \notin Sold}[\sum_{j \in N_i(\vec{B})}x^{B_j}_j(i) - p_i q_i]\\
& = \sum_{i \notin Sold}U_i(\vec{B}) - \frac{U_i(\vec{B})}{\alpha}\\
& = \sum_{i \notin Sold}U_i(\vec{B})(1-\frac{1}{\alpha}).
\end{align*}

Combining our lower bounds for revenue and surplus, we are now ready to complete our proof.
\begin{align*}
SW(\vec{B}) - \frac{SW(\vec{B})}{\alpha} & = \sum_{i \in \mathcal{I}}U_i(\vec{B})(1-\frac{1}{\alpha})\\
& = \sum_{i \in Sold}U_i(\vec{B})(1-\frac{1}{\alpha}) + \sum_{i \notin Sold}U_i(\vec{B})(1-\frac{1}{\alpha}).\\
& \leq \alpha Rev(\vec{p},\vec{S})  + Surp(\vec{p},\vec{S}).
\end{align*}

After some mild re-arranging, we get the claim.
\end{proof}

\subsubsection*{Proof of Claim~\ref{clm_allocunsoldgen}}

\begin{proof}
For each good $i \in \mathcal{I}$, define $r_i := k_i(\vec{S}^M).$ By definition, $r_i \leq q_i$ since the mechanism cannot sell more than $q_i$ copies of good $i$. The revenue provided by the mechanism is given by

$$ Rev(\vec{p},\vec{S}^M) = \sum_{i \in \mathcal{I}}p_i k_i(\vec{S}^M) = \sum_{i \in \mathcal{I}}\frac{U_i(\vec{B})}{\alpha q_i} r_i.$$

We know that in the allocation $\vec{S}$, each good $i$ is allocated to the buyers in $Top_i(q_i - r_i, \vec{B})$, where we take $Top_i(0,\vec{B})$ to be the null set. Therefore, for each buyer $j$, we have that $S_j \subseteq B_j$. We now derive a lower bound on the social welfare of the allocation $\vec{S}$ using the XoS property.

\begin{align*}
SW(\vec{S}) & = \sum_{j \in \mathcal{N}}v_j(S_j) \\
& \geq \sum_{j \in \mathcal{N}}x^{B_j}_j(S_j) \\
& = \sum_{j \in \mathcal{N}} \sum_{i \in S_j} x^{B_j}_j(i)\\
& = \sum_{i \in \mathcal{I}}\sum_{j \in Top_i(q_i - r_i, \vec{B})} x^{B_j}_j(i).\\
\end{align*}

Combining our lower bound for the social welfare of $\vec{S}$ with the exact expression for the mechanism's revenue, we get that
\begin{align*}
\alpha Rev(\vec{p},\vec{S}^M) + SW(\vec{S}) & \geq \sum_{i \in \mathcal{I}}\frac{U_i(\vec{B})}{q_i} r_i + \sum_{i \in \mathcal{I}}\sum_{j \in Top_i(q_i - r_i, \vec{B})} x^{B_j}_j(i)\\
& = \sum_{i \in \mathcal{I}}\{\frac{U_i(\vec{B})}{q_i} r_i + \sum_{j \in Top_i(q_i - r_i, \vec{B})} x^{B_j}_j(i)\}.
\end{align*}

We now resort to Lemma~\ref{lem_propertyaverage} to bound the RHS in terms of $U_i(\vec{B})$. Specifically, we know that $U_i(\vec{B}) = \sum_{j \in N_i(\vec{B})}x^{B_j}_j(i)$ and that $Top_i(q_i-r_i,\vec{B}) \subseteq N_i(\vec{B})$ represents the $q_i - r_i$ buyers who contribute the most to $U_i(\vec{B})$. Therefore, applying Lemma~\ref{lem_propertyaverage} with $n=k_i(\vec{B})$\footnote{We simply ignore goods for which $k_i(\vec{B}) = 0$ since they do not contribute to the welfare of any of the solutions that we consider here} and $r = r_i$, $V=U_i(\vec{B})$, we get that for every $i \in \mathcal{I}$,

$$\frac{U_i(\vec{B})}{q_i} r_i + \sum_{j \in Top_i(q_i - r_i, \vec{B})} x^{B_j}_j(i) \geq U_i(\vec{B}).$$

In conclusion, we have that $\alpha Rev(\vec{p},\vec{S}^M) + SW(\vec{S}) \geq \sum_{i \in \mathcal{I}} U_i(\vec{B}) = SW(\vec{B})$ and the claim follows. \end{proof}

\subsubsection*{Proof of Theorem~\ref{thm_bicriteriaxos}}
\begin{proof}
The proof can be divided into two high-level segments. Consider the sequence of solutions returned by our core algorithm for the above mentioned inputs: $\vec{B}^ {(1)}, \ldots, \vec{B}^ {(\gamma)}$ and $(\vec{p}^ {(1)}, \vec{S}^{(1)}) \ldots, (\vec{p}^ {(\gamma)},\vec{S}^{(\gamma)})$. We have already proved that these solutions satisfy the conditions described in Definition~\ref{defn_framework2}. Recall that $\vec{S}^{(t)}$ is the allocation obtained upon running the sequential mechanism with the parameters $(\vec{p}^{(t)}, k(\vec{B}^{(t)}))$ and order of arrival $\Gamma(\vec{p}^{(t)}, \vec{B}^{(t)})$, and let $Rev^{(t)}$, $Surp^{(t)}$ refer to the revenue and surplus achieved by this mechanism, i.e., $Rev^{(t)}=Rev(\vec{p}^{(t)},\vec{S}^{(t)})$ and $Surp^{(t)}=Surp(\vec{p}^{(t)},\vec{S}^{(t)})$. In the first half of the proof, we will show that in addition to the generalized charging property, our solutions also satisfy the property mentioned in Claim~\ref{clm_frame2bicrit}, i.e., for all $1 \leq t \leq \gamma$ and $\alpha = \beta = 2\gamma$,

$$SW(\vec{B}^{(t)})(1-\frac{1}{\beta}) \leq \alpha Rev^{(t)} + Surp^{(t)}.$$

One can then apply Claim~\ref{clm_frame2bicrit} to obtain a continuous bicriteria approximation parameterized on an instance-specific measure $c \geq 1$. In the second part, we will select appropriate threshold values for this parameter $c$ and partition our bicriteria approximation space into disjoint regions to yield the guarantees made in the claim. We begin by showing that our solutions satisfy the new property mentioned in Claim~\ref{clm_frame2bicrit}.

\begin{lemma}
\label{lem_spec_bicrit_property}
For all $1 \leq t \leq \gamma$, the following inequality holds:

$$SW(\vec{B}^{(t)})(1-\frac{1}{2\gamma}) \leq 2\gamma Rev^{(t)} + Surp^{(t)}.$$

\end{lemma}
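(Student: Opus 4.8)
The plan is to recognise that the inequality in Lemma~\ref{lem_spec_bicrit_property} is, for all but the boundary index, a verbatim restatement of Claim~\ref{clm_case1gen} with the scaling parameter set to $\alpha = 2\gamma$. The key observation is that for every $t \in [1,\gamma-1]$ the prices $\vec{p}^{(t)} = Prices(\vec{B}^{(t)},\gamma)$ are exactly the ``scaled-down'' prices $p_i = \frac{U_i(\vec{B}^{(t)})}{2\gamma\, k_i(\vec{B}^{(t)})}$ from that claim (goods with $k_i(\vec{B}^{(t)}) = 0$ are priced at $\infty$ and have $U_i(\vec{B}^{(t)}) = 0$, so they contribute nothing and may be ignored), and that $\vec{S}^{(t)}$ is precisely the output of the sequential posted pricing mechanism run with these prices, supply $q_i = k_i(\vec{B}^{(t)})$, and the arrival order returned by $\Gamma$. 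Since Claim~\ref{clm_case1gen} is stated for an \emph{arbitrary} buyer arrival order, it applies no matter how $\Gamma$ is instantiated. First I would apply it with $\vec{B} = \vec{B}^{(t)}$ and $\alpha = 2\gamma$ to get
$$SW(\vec{B}^{(t)}) - Surp^{(t)} \leq 2\gamma\, Rev^{(t)} + \frac{SW(\vec{B}^{(t)})}{2\gamma},$$
and then transpose the last term to the left-hand side, yielding $SW(\vec{B}^{(t)})(1 - \frac{1}{2\gamma}) \leq 2\gamma\, Rev^{(t)} + Surp^{(t)}$, which is exactly the claim for $t \leq \gamma - 1$.

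Next I would dispatch the boundary index $t = \gamma$ separately, since there $(\vec{p}^{(\gamma)}, \vec{S}^{(\gamma)})$ is produced by the tail condition (Figure~\ref{fig:tail}) rather than by the Prices function, so Claim~\ref{clm_case1gen} does not apply. However, in the ``Wrapping up'' portion of the proof of Theorem~\ref{thm.XoScharging} it was already established that, for $\epsilon$ small enough, $SW(\vec{B}^{(\gamma)}) \leq 2\gamma\, Rev^{(\gamma)}$: by Lemma~\ref{lem_swgammaupper} the allocation $\vec{B}^{(\gamma)}$ allocates at most two items each worth at most $v_{j^*}(i^*)$, while $Rev^{(\gamma)} = v_{j^*}(i^*) - \epsilon$. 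Since $Surp^{(\gamma)} \geq 0$ and $1 - \frac{1}{2\gamma} < 1$, it follows that
$$SW(\vec{B}^{(\gamma)})\Bigl(1 - \tfrac{1}{2\gamma}\Bigr) \leq SW(\vec{B}^{(\gamma)}) \leq 2\gamma\, Rev^{(\gamma)} \leq 2\gamma\, Rev^{(\gamma)} + Surp^{(\gamma)},$$
completing the case $t = \gamma$ and hence the lemma.

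I do not expect a genuine obstacle here: the argument is a bookkeeping exercise that simply reuses Claim~\ref{clm_case1gen}. The one point meriting care is noticing that the surplus lower bound inside the proof of Claim~\ref{clm_case1gen} hinges on the XoS property (a buyer's realised surplus dominates her surplus for the sub-bundle of her benchmark set consisting of goods that are never sold out, evaluated through her benchmark XoS clause), and that this bound is insensitive to the specific arrival order or to the choice of $\Gamma$; hence the same inequality is available for the allocation $\vec{S}^{(t)}$ produced by the core algorithm. With that in hand, the rest is pure rearrangement, and the tail case is immediate from bounds already proved.
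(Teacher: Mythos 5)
Your proposal is correct and takes essentially the same route as the paper: for $t < \gamma$ it instantiates Claim~\ref{clm_case1gen} with $\vec{B} = \vec{B}^{(t)}$, $\alpha = 2\gamma$ and rearranges, and for $t = \gamma$ it falls back to the tail bound $SW(\vec{B}^{(\gamma)}) \leq 2\gamma\, Rev^{(\gamma)}$ via Lemma~\ref{lem_swgammaupper}. Your extra remarks (ignoring goods with $k_i = 0$, noting that Claim~\ref{clm_case1gen} holds for arbitrary arrival order hence any $\Gamma$, and spelling out $Surp^{(\gamma)} \geq 0$ in the boundary case) are accurate clarifications of steps the paper leaves implicit.
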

\begin{proof}
The proof mostly follows from Claim~\ref{clm_case1gen} of Theorem~\ref{thm.XoScharging}. First suppose that $t < \gamma$, and apply the claim for $\vec{B} = \vec{B}^{(t)}$, $\alpha = 2\gamma$. We know that the output of the mechanism as per the claim corresponds to the allocation $\vec{S}^{(t)}$ here and its revenue, surplus correspond to $Rev^{(t)}, Surp^{(t)}$ respectively. Therefore, from the claim, we get that

$$SW(\vec{B}^{(t)}) - Surp^{(t)} \leq 2\gamma Rev^{(t)} + \frac{SW(\vec{B}^{(t)}) }{2\gamma}.$$

A little rearrangement gives us the inequality claimed in the lemma. Next, suppose that $t = \gamma$. In this case, due to Lemma \ref{lem_swgammaupper}, and the revenue obtained by $(\vec{p}^{(\gamma)},\vec{S}^{(\gamma)})$ is $v_{j^*}(i^*)-\epsilon$ (see Figure \ref{fig:tail}), we know that for a small enough epsilon it must be that $SW(\vec{B}^{(\gamma)}) \leq 2\gamma Rev^{(\gamma)}$, as desired.
\end{proof}

Now that we have proved the lemma, we can directly apply Claim~\ref{clm_frame2bicrit} substituting $\gamma = \log m + \log k$ and $\alpha = \beta = 2\gamma$ to obtain the following continuous bicriteria guarantees dependent on an instance-specific parameter $c \geq 1$. Here $\ell$ is the index defined in the algorithm.

\begin{align}
\label{eqn_swbic} (i) SW(\vec{S}^{(\ell)}) & \geq \frac{SW(\vec{A})}{4}\{(1-\frac{1}{2\gamma}) - \frac{c}{\gamma} + \frac{c}{2\gamma^2}\}\\
\label{eqn_revbic} (ii)  Rev(\vec{p}^{(\ell)},\vec{S}^{(\ell)}) & \geq \frac{c}{8\gamma^2}SW(\vec{A}).
\end{align}

To avoid messy notation, we have expressed some of the terms as a function of $\gamma$. By definition $c \geq 1$. Without loss of generality, we also take $\gamma$ to be greater than $2$. The rest of the proof is presented in two cases depending on the ratio of $c$ and $\gamma$ in the given instance,

\begin{enumerate}

\item When \textit{$c \leq \frac{\gamma}{2} - \frac{1}{2}$.} In this case, substituting $c \geq 1$, gives us the following trivial bound for revenue, $Rev(\vec{p}^{(\ell)},\vec{S}^{(\ell)}) \geq \frac{1}{8(\log m + \log k)^2}SW(\vec{A}).$ Next, since $c \leq \frac{\gamma}{2} - \frac{1}{2}$, we also have that $1-\frac{1}{2\gamma} - \frac{c}{\gamma} + \frac{c}{2\gamma^2} \geq \frac{1}{2}$, and so $SW(\vec{S}^{(\ell)}) \geq \frac{SW(\vec{A})}{8}$.

\item When \textit{$c \geq \frac{\gamma}{2} - \frac{1}{2}$.}  Here we directly use the expression for $Rev(\vec{p}^{(\ell)},\vec{S}^{(\ell)})$ and the fact that social welfare cannot be smaller than revenue to get

$$ Rev(\vec{p}^{(\ell)},\vec{S}^{(\ell)})  \geq (\frac{1}{2} - \frac{1}{2\gamma})\frac{1}{8\gamma}SW(\vec{A}).$$

Therefore, $SW(\vec{A}) \leq \Theta(\log m + \log k)Rev(\vec{p}^{(\ell)},\vec{S}^{(\ell)}) \leq \Theta(\log m + \log k)SW(\vec{S}^{(\ell)}).$
\end{enumerate}

\noindent \textbf{Other Arrival Orders} Observe that $SW(\vec{S}^{(\ell)})$ is the social welfare of the outcome returned by the sequential posted pricing mechanism on a specific arrival order, i.e., the arrival order $\Gamma(\vec{p}^{(\ell)}, \vec{B}^{(\ell)})$ that results in the worst revenue in $\Pi$. Does the bicriteria dichotomy also hold for other arrival orders in $\Pi$? In order to show this, first fix some $\pi \in \Pi$. Suppose that $Rev^{(\ell)}(\pi), Surp^{(\ell)}(\pi)$, and $SW^{(\ell)}(\pi)$ denote the revenue, surplus, and social welfare respectively for the sequential posted pricing mechanism with parameters $(\vec{p}^{(\ell)} , \vec{q}^{(\ell)}=k(\vec{B}^{(\ell)}))$, when the buyer arrival order is given by $\pi$. Since Claim~\ref{clm_case1gen} does not depend on a specific arrival order, we know that as long as $\ell < \gamma$, $SW(\vec{B}^{(\ell)})(1-\frac{1}{2\gamma}) \leq 2\gamma Rev^{(\ell)}(\pi) + Surp^{(\ell)}(\pi).$ In the case that $\ell = \gamma$, it is not hard to deduce from the proof of Theorem~\ref{thm.XoScharging} that $SW(\vec{B}^{(\gamma)}) \leq 2\gamma Rev^{(\gamma)}(\pi)$. Therefore, when $\ell = \gamma$, the following inequality is trivially true: $SW(\vec{B}^{(\ell)})(1-\frac{1}{2\gamma}) \leq 2\gamma Rev^{(\ell)}(\pi) + Surp^{(\ell)}(\pi).$

We are now in a position to sketch the rest of the proof, which mostly follows from Claim~\ref{clm_frame2bicrit}. Looking at the proof of that claim, with $\beta = 2\gamma$, we get that $\frac{1}{4}SW(\vec{B}^{(1)}) \leq SW(\vec{B}^{(\ell)})$. Using the newly obtained upper bound for $SW(\vec{B}^{(\ell)})$ in terms of the revenue and surplus of the mechanism with arrival order $\pi$, we get that

$$\frac{1}{4}SW(\vec{B}^{(1)})(1-\frac{1}{2\gamma}) \leq 2\gamma Rev^{(\ell)}(\pi) + Surp^{(\ell)}(\pi).$$

Proceeding exactly as we did in the proof of Claim~\ref{clm_frame2bicrit}, we can conclude that Equations~\ref{eqn_swbic} and~\ref{eqn_revbic} are still applicable even when their left hand sides are replaced by $SW^{(\ell)}(\pi)$ and $Rev^{(\ell)}(\pi)$ respectively. The bicriteria trade-off for the arbitrary arrival order $\pi$ follows from the same dichotomous analysis based on the value of $c$, as in the current theorem.
\end{proof}

\section{Proofs from Section~\ref{sec:multi-item}}
\label{app:sec5}

\subsection*{Proof of Theorem \ref{thm:multi-unit}}

The algorithm that achieves this approximation guarantee is based on the same item halving core algorithm as the XoS case. The input to the core algorithm still consists of an allocation $\vec{A}$, and a parameter $\gamma \geq 1$ (in our case, $\gamma = \log m$). However, before we get to the main algorithm, there is a simple case which gives us an immediate revenue guarantee. Set $p' = \frac{SW(\vec{A})}{2\gamma \theta(\vec{A})}$ (this is exactly the function $Prices(\vec{A}, \gamma)$ as defined before). Let $\vec{S'}$ be the output allocation of the simultaneous mechanism with price $p'$ on all items. If $S'$ is not a valid solution (i.e., if too many buyers want the items, and thus result in taking more than $m$ items), then we argue below that we can construct a solution with high revenue easily.

\subsubsection*{Case I: If $\theta(\vec{S'}) > m$}
Define $p^{(0)} > p'$ to be the smallest price at which less than or equal to $m$ copies are purchased by buyers. Define $\vec{S}^{(0)}$ to be the output allocation of the simultaneous mechanism with price $p^{(0)}$ on all items; thus $(p^{(0)}, \vec{S}^{(0)})$ is a valid solution. Moreover, we want to make sure that $\vec{S}^{(0)}$ is locally maximal in the following sense: if a buyer $j$ is indifferent between a bundle of size $|S_j^{(0)}|$ which $j$ receives in $\vec{S}^{(0)}$ and a bundle of larger size $q>|S_j^{(0)}|$ at prices $p^{(0)}$, then it must be that the allocation resulting from altering $\vec{S}^{(0)}$ by giving $j$ a set of $q$ items instead of $|S_j^{(0)}|$ items would allocate strictly more than $m$ items, and thus would not be a valid solution (i.e., $\theta(\vec{S}^{(0)}) - |S_j^{(0)}| + q > m$). It is easy to see that such an allocation $\vec{S}^{(0)}$ can be found efficiently, by simply finding an arbitrary $\vec{S}^{(0)}$, and then considering each buyer $j$ and giving them a bigger bundle if this still results in a valid solution to the prices $p^{(0)}$.

In the case that $\theta(\vec{S'}) > m$, we argue here that $(p^{(0)}, \vec{S}^{(0)})$ has high revenue. Before proving the revenue bound for this case, we will state a simple claim about how buyers behave as the price increases. For a given non-negative integer $q$, we abuse notation and take $v_j(q)$ to be a buyer's valuation for purchasing $q$ units of the good.

\begin{proposition}
\label{prop_buyerdropbundle}
Suppose that there exists a multi-unit buyer $j$ and a price $p$ such that the buyer maximizes her utility at this price by purchasing $q_2$ units of the good, but for every sufficiently small $\epsilon> 0$, the buyer maximizes her utility at price $p - \epsilon$ by purchasing $q_1 \neq q_2$ units of the good. Then,
\begin{enumerate}
\item $q_1 > q_2$,

\item $v_j(q_1) - p|q_1| = v_j(q_2) - p|q_2|$.
\end{enumerate}
\end{proposition}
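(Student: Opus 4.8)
The plan is to argue entirely from revealed preference. Write the buyer's utility for purchasing $q$ units at per-unit price $\rho$ as $U_j(q,\rho) = v_j(q) - \rho q$. The hypothesis supplies two optimality statements. Since $q_2$ maximizes utility at price $p$, we have $v_j(q_2) - p q_2 \ge v_j(q) - p q$ for every feasible $q$, and in particular $v_j(q_2) - p q_2 \ge v_j(q_1) - p q_1$. Since $q_1$ maximizes utility at price $p - \epsilon$ for all sufficiently small $\epsilon > 0$, we similarly have $v_j(q_1) - (p-\epsilon) q_1 \ge v_j(q_2) - (p-\epsilon) q_2$ for all such $\epsilon$.

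First I would rearrange these two inequalities into statements about the single quantity $\Delta := v_j(q_1) - v_j(q_2) - p(q_1 - q_2)$. The first inequality says exactly $\Delta \le 0$, and the second says $\Delta \ge -\epsilon(q_1 - q_2)$. Chaining them gives $-\epsilon(q_1 - q_2) \le \Delta \le 0$ for all sufficiently small $\epsilon > 0$. Because $\epsilon > 0$, the left inequality forces $q_1 - q_2 \ge 0$; combined with $q_1 \ne q_2$ this yields $q_1 > q_2$, which is the first claim. (Intuitively this is just monotonicity of demand: lowering the price can only make the buyer want weakly more units, so if she switches bundles she switches upward.)

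For the second claim I would take $\epsilon \to 0^+$ in the sandwich $-\epsilon(q_1 - q_2) \le \Delta \le 0$: the left-hand side tends to $0$, so $\Delta = 0$, i.e. $v_j(q_1) - p q_1 = v_j(q_2) - p q_2$, which is the claimed indifference at price $p$ (recalling that $|q_1| = q_1$ and $|q_2| = q_2$ since the $q_i$ are non-negative integers).

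I do not anticipate any real obstacle here; the only points needing care are bookkeeping. One should note that the maximum utility is attained, since the number of units lies in the finite set $\{0,1,\dots,\lfloor m/2\rfloor\}$ under the no-overwhelming-buyer convention, so an optimal bundle exists at every price. One should also use that the hypothesis gives the same $q_1$ for all sufficiently small $\epsilon$, which is precisely what licenses the $\epsilon \to 0$ limit in the last step; if one instead only knew that for each small $\epsilon$ some optimum differs from $q_2$, finiteness of the feasible set would still let one extract a fixed $q_1$ along a sequence $\epsilon \to 0$, and the argument would go through unchanged.
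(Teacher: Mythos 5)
Your proof is correct and uses the same underlying revealed-preference inequalities as the paper. The paper treats the two claims separately (declaring part~(1) trivial by demand monotonicity and proving part~(2) by contradiction), whereas you extract both from a single sandwich $-\epsilon(q_1-q_2)\le \Delta \le 0$, which is a slightly cleaner, unified presentation of the same argument and has the small added virtue of actually deriving part~(1) rather than asserting it.
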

\begin{proof}
The first part is trivial since as we increase prices, a buyer's consumption cannot increase strictly. For the second part, assume by contradiction that $v_j(q_1) - p|q_1| < v_j(q_2) - p|q_2|$ (the inequality has to hold in this direction since $q_2$ is this buyer's utility maximizing consumption). Then, it is not hard to construct a sufficiently small $\epsilon > 0$, such that $v_j(q_1) - (p-\epsilon)|q_1| < v_j(q_2) - (p-\epsilon)|q_2|$, which contradicts the fact that at price $p - \epsilon$, the buyer maximizes her utility by purchasing quantity $q_1$.
\end{proof}

In other words, the proposition establishes that as we increase the price, each buyer encounters a point of indifference between a bundle of a larger size and one of smaller size. The following establishes our main result for this case.

\begin{claim}
The solution $(p^{(0)}, \vec{S}^{(0)})$, which is a valid outcome of the simultaneous mechanism, obtains a revenue that is within a logarithmic factor of the welfare of $\vec{A}$. Specifically,

$$Rev(p^{(0)}, \vec{S}^{(0)}) \geq \frac{SW(\vec{A})}{4\gamma}.$$
\end{claim}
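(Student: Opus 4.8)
The plan is to lower-bound $Rev(p^{(0)},\vec{S}^{(0)}) = p^{(0)}\,\theta(\vec{S}^{(0)})$ by combining two elementary facts: by construction $p^{(0)} > p' = \frac{SW(\vec{A})}{2\gamma\,\theta(\vec{A})}$, and $\theta(\vec{A}) \le m$ because each of the $m$ goods has a single copy. Hence it suffices to prove that the locally maximal valid allocation $\vec{S}^{(0)}$ still sells more than half of the $m$ copies, i.e. $\theta(\vec{S}^{(0)}) > \frac{m}{2}$: once this is in hand, $Rev(p^{(0)},\vec{S}^{(0)}) > p'\cdot\frac{m}{2} = \frac{SW(\vec{A})\,m}{4\gamma\,\theta(\vec{A})} \ge \frac{SW(\vec{A})}{4\gamma}$, which is exactly the claimed bound. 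So the whole proof reduces to the cardinality statement $\theta(\vec{S}^{(0)}) > \frac{m}{2}$.

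To establish that, I would first use the defining property of $p^{(0)}$: since $p^{(0)}$ is the \emph{smallest} price at which at most $m$ copies are demanded, at every price strictly below $p^{(0)}$ (in particular at $p^{(0)}-\epsilon$ for small $\epsilon>0$) every selection of utility-maximizing bundles allocates strictly more than $m$ items in total. Let $q_j^{\max}$ denote the largest size of a utility-maximizing bundle for buyer $j$ at price $p^{(0)}$ (this is at most $\frac{m}{2}$ by the no-overwhelming-buyer assumption, since $v_j$ is flat beyond $\frac{m}{2}$ and $p^{(0)}>0$). Since there are only finitely many possible bundle sizes $0,\dots,m$, any strict utility gap holding at $p^{(0)}$ persists at $p^{(0)}-\epsilon$ for $\epsilon$ small enough; consequently, for all sufficiently small $\epsilon$ every bundle a buyer would purchase at price $p^{(0)}-\epsilon$ is also utility-maximizing at price $p^{(0)}$, hence of size at most $q_j^{\max}$ (this is essentially the content of Proposition~\ref{prop_buyerdropbundle}). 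Summing the ``just-below-$p^{(0)}$'' demand over all buyers then gives $\sum_{j\in\mathcal{N}} q_j^{\max} > m$. But $\theta(\vec{S}^{(0)}) = \sum_j |S_j^{(0)}| \le m$, so there is a buyer $j^\dagger$ with $q_{j^\dagger}^{\max} > |S_{j^\dagger}^{(0)}|$ — a buyer who is indifferent at price $p^{(0)}$ between her assigned bundle and a strictly larger one. Local maximality of $\vec{S}^{(0)}$ then forces $\theta(\vec{S}^{(0)}) - |S_{j^\dagger}^{(0)}| + q_{j^\dagger}^{\max} > m$, and since $q_{j^\dagger}^{\max} \le \frac{m}{2}$ and $|S_{j^\dagger}^{(0)}| \ge 0$ this yields $\theta(\vec{S}^{(0)}) > m - \frac{m}{2} = \frac{m}{2}$, completing the argument.

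I expect the main obstacle to be the limiting argument in the second step: making precise that the demand ``just below $p^{(0)}$'' is well-defined, that it exceeds $m$, and that it is captured by utility-maximizers at $p^{(0)}$ — in particular handling ties in buyers' demand correspondences and selecting a single $\epsilon>0$ that works simultaneously for all (finitely many) buyers. This is precisely what Proposition~\ref{prop_buyerdropbundle} is designed to handle, so the effort is in invoking it cleanly rather than in any genuinely new idea. Two minor points to record along the way: that $p^{(0)}$ and the locally maximal allocation $\vec{S}^{(0)}$ are well-defined and efficiently computable (already argued in the surrounding text), and that $\gamma=\log m \ge 1$, which is why the statement assumes $m\ge 2$.
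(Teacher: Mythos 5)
Your proposal is correct and follows essentially the same route as the paper: both reduce the claim to the cardinality bound $\theta(\vec{S}^{(0)}) > m/2$, both establish it by taking a buyer $j^\dagger$ who is indifferent at $p^{(0)}$ between her assigned bundle and a strictly larger one of size at most $m/2$ (invoking the no-overwhelming-buyer assumption and Proposition~\ref{prop_buyerdropbundle}), and both use local maximality of $\vec{S}^{(0)}$ to force $\theta(\vec{S}^{(0)})-|S^{(0)}_{j^\dagger}|+q_{j^\dagger}^{\max}>m$. Your step establishing $\sum_j q_j^{\max}>m$ and then pigeonholing to exhibit $j^\dagger$ is slightly more explicit than the paper's ``let $X$ be the set of buyers whose consumption dropped'' phrasing, but it is the same idea.
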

\begin{proof}
By definition, no more than $m$ copies are allocated in this allocation so the outcome is valid. Second, we know that by definition $p^{(0)} \geq p' = \frac{SW(\vec{A})}{2\gamma\theta(\vec{A})} \geq \frac{SW(\vec{A})}{2\gamma m}$. Finally, we will use the no overwhelming buyer assumption and prove that at least $\frac{m}{2}$ copies are allocated in $\vec{S}^{(0)}$. To see why, observe that by definition of $p^{(0)}$, for sufficiently small $\epsilon > 0$, the simultaneous mechanism with price $p^{(0)} - \epsilon$ will allocate more than $m$ units of good, whereas at $p^{(0)}$, the mechanism allocates less than or equal to $m$ units. How small can $\theta(\vec{S}^{(0)})$ be?

Let $X$ be the set of buyers whose consumption dropped at price $p^{(0)}$ in order to meet the supply constraints, i.e., the amount consumed by these buyers at price $p^{(0)} - \epsilon$ is strictly larger than their consumption at price $p^{(0)}$. Applying Proposition~\ref{prop_buyerdropbundle}, we have that each buyer $j \in X$ is indifferent between her earlier consumption (call it $\bar{S}_j$) and consuming $|S^{(0)}_j|$ units at price $p^{(0)}$.

Recall our definition of $\vec{S}^{(0)}$ from earlier: we defined $\vec{S}^{(0)}$ to be locally maximal, meaning that for every buyer $j \in X$, we have that $\theta(\vec{S}^{(0)}) + |\bar{S}_j| - |S^{(0)}_j| > m$. If this were not the case, then we could have allocated this buyer $|\bar{S}_j|$ copies without violating the supply constraints, which contradicts our definition of $\vec{S}^{(0)}$. Therefore, we have that $\theta(\vec{S}^{(0)})  > m + |S^{(0)}_j| - |\bar{S}_j| \geq m - |\bar{S}_j| \geq \frac{m}{2}$. The last inequality comes from the no overwhelming buyer assumption, according to which $|\bar{S}_j| \leq \frac{m}{2}$. To bound the revenue, consider the following,

$$Rev(p^{(0)}, \vec{S}^{(0)}) = p^{(0)} \theta(\vec{S}^{(0)}) \geq  \frac{SW(\vec{A})}{2 \gamma m} \frac{m}{2} \geq \frac{SW(\vec{A})}{4\gamma}.$$
\end{proof}

\subsubsection*{Case II: If $\theta(\vec{S'}) \leq m$}
This is the more difficult case, in which we take advantage of the Item Halving ideas. The core algorithm is the same as before with a few minor changes. We only list the modifications and addenda to the algorithm.
\begin{description}

\item [Function Prices$(\vec{S}, \gamma)$] At each stage, the algorithm simply posts a single price on all copies of the good. Therefore, for the given inputs, the prices function can be interpreted as returning the price as $p =  \frac{SW(\vec{S})}{2\gamma \theta(\vec{S})}$. Suppose that $p^{(t)}$ is the price during iteration $t$.

\item [Simultaneous Mechanism] We no longer require the arrival function $\Gamma$ or a call to the sequential mechanism. Instead $\vec{S}^{(t)}$ is now the allocation returned by the simultaneous mechanism with price $p^{(t)}$ on all of the goods. It is possible that $\vec{S}^{(t)}$ may not be a valid outcome as more than $m$ copies of the good may be allocated: the mechanism simply gives each buyer their most preferred bundle independently. We show that the ultimate solution returned by our algorithm never violates the supply constraints.

\item [Function Alloc-Unsold] The allocation is constructed by providing each buyer $j$, $\max(|B_j| - |S_j|, 0)$ copies of goods. We will later see that the choice of the function does not really matter and in fact, the algorithm obtains a good revenue before the first call to this function. Moreover, we slightly alter the if-else definitions in the core algorithm, so that Alloc-Unsold is called when $\theta(\vec{B}^{(t)}) \leq 2\theta(\vec{S}^{(t)})$.
\end{description}

We begin by restating some notation. Recall that for the multi-unit case, the outcome of the simultaneous mechanism is simply a single price along with an allocation. In that regard, our algorithm computes a sequence of $\gamma$ allocations $(p^{(t)}, \vec{S}^{(t)})_{t=1}^{\gamma}$. We do remark that some of these allocations may actually not be \emph{valid outcomes} since more than $m$ copies may be allocated to buyers. However, we will show that for each instance, there exists at least one valid outcome with the desired revenue guarantee. 


Suppose that $\ell$ is the first time step at which our core algorithm invokes the AllocUnsold function, i.e., $\ell$ is the smallest index at which $\theta(\vec{B}^{(t)}) \leq 2\theta(\vec{S}^{(t)})$. In the boundary event that the function is never invoked, we define $\ell = \gamma$. The proof for this case will proceed as follows: first, we show that the sequence of solutions $(p^{(t)}, \vec{S}^{(t)})_{t=1}^{\ell}$ are all valid outcomes of the simultaneous mechanism\footnote{Note that by definition, each buyer is purchasing her utility maximizing set of goods, so that part of the definition of the simultaneous mechanism is never violated.}. That is, at each $1 \leq t \leq \ell$, $\theta(\vec{S}^{(t)}) \leq m$. Secondly, we will prove that $SW(\vec{B}^{(\ell)}) \geq \frac{SW(\vec{A})}{2}$. Finally, since at least half the number of goods allocated in $\vec{B}^{(\ell)}$ are sold at price $p^{(\ell)}$, we get the desired revenue bound. We show each of these parts in a separate lemma along with some useful propositions.

For the sake of completeness, our actual algorithm goes as follows: \emph{Run the core algorithm with input allocation $\vec{A}$ and $\gamma=\log(m)$. If $\theta(S') > m$, then return $p^{(0)}$ as the price for the simultaneous mechanism and allocation $\vec{S}^{(0)}$; else return $p^{(\ell)}$ and $\vec{S}^{(\ell)}$.}

\begin{proposition}
\label{prop_surpdiff}
For every $1 \leq t \leq \ell-1$, we have that

$$SW(\vec{B}^{(t)}) - SW(\vec{B}^{(t+1)}) \leq \frac{1}{2 \gamma} SW(\vec{B}^{(t)}).$$
\end{proposition}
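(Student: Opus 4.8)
The plan is to reduce the claimed inequality, for each $t\le \ell-1$, to a single application of the simultaneous mechanism at price $p^{(t)}$. First I would unwind the definition of $\ell$: it is the smallest index at which the Alloc-Unsold branch is taken, i.e.\ the smallest $t$ with $\theta(\vec{B}^{(t)})\le 2\theta(\vec{S}^{(t)})$. Hence for every $t\le\ell-1$ the \emph{else} branch of the core algorithm is executed, so $\vec{B}^{(t+1)}=\vec{S}^{(t)}$, and it suffices to bound $SW(\vec{B}^{(t)})-SW(\vec{S}^{(t)})$ by $\tfrac{1}{2\gamma}SW(\vec{B}^{(t)})$.

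Next I would invoke the defining property of the simultaneous mechanism at the uniform price $p^{(t)}=Prices(\vec{B}^{(t)},\gamma)=\frac{SW(\vec{B}^{(t)})}{2\gamma\,\theta(\vec{B}^{(t)})}$. Since all copies are identical, each buyer $j$ effectively selects a cardinality maximizing her quasilinear utility, so $v_j(|S_j^{(t)}|)-p^{(t)}|S_j^{(t)}|\ \ge\ v_j(|B_j^{(t)}|)-p^{(t)}|B_j^{(t)}|$. Rearranging and dropping the nonnegative term $p^{(t)}|S_j^{(t)}|$ gives $v_j(|B_j^{(t)}|)-v_j(|S_j^{(t)}|)\le p^{(t)}|B_j^{(t)}|$ for every $j$. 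Summing over all buyers, the left side becomes $SW(\vec{B}^{(t)})-SW(\vec{S}^{(t)})$ and the right side becomes $p^{(t)}\sum_j|B_j^{(t)}|=p^{(t)}\theta(\vec{B}^{(t)})=\frac{SW(\vec{B}^{(t)})}{2\gamma}$; together with $\vec{B}^{(t+1)}=\vec{S}^{(t)}$ this is exactly the asserted bound.

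There is essentially no real obstacle here: the argument never uses whether $\vec{S}^{(t)}$ is a valid outcome, because each buyer is taking a utility-maximizing bundle regardless of whether the aggregate respects the supply $m$. The only points that need a moment's care are the identification $\vec{B}^{(t+1)}=\vec{S}^{(t)}$ for $t\le\ell-1$ (immediate from the choice of $\ell$) and the degenerate case $\theta(\vec{B}^{(t)})=0$, which can be dismissed at the outset since then $SW(\vec{B}^{(t)})=0$ (valuations are normalized with $v_j(\emptyset)=0$) and the inequality holds trivially.
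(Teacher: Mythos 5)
Your proposal is correct and follows essentially the same route as the paper: note that $\vec{B}^{(t+1)}=\vec{S}^{(t)}$ for $t\le\ell-1$, invoke the buyer's utility-maximizing choice at price $p^{(t)}$ to get $v_j(S_j^{(t)})-p^{(t)}|S_j^{(t)}|\ge v_j(B_j^{(t)})-p^{(t)}|B_j^{(t)}|$, and sum over buyers to obtain $SW(\vec{B}^{(t)})-SW(\vec{S}^{(t)})\le p^{(t)}\theta(\vec{B}^{(t)})=\tfrac{1}{2\gamma}SW(\vec{B}^{(t)})$. Your extra remarks (the degenerate $\theta(\vec{B}^{(t)})=0$ case and the observation that validity of $\vec{S}^{(t)}$ is irrelevant here) are sound additions, though note that the branch setting $\vec{B}^{(t+1)}=\vec{S}^{(t)}$ is Case~I (the \emph{if} branch) in the paper's presentation, not the \emph{else} branch.
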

\begin{proof}
Fix some value of $t$ in the given range. We know that $p^{(t)} = \frac{SW(\vec{B}^{(t)})}{2\gamma \theta(\vec{B}^{(t)})}$. At this price, each buyer $j$ maximizes her utility from bundle $S^{(t)}_j$, which in turn implies that $v_j(S^{(t)}_j) - p^{(t)}|S^{(t)}_j| \geq v_j(B^{(t)}_j) - p^{(t)}|B^{(t)}_j|$. Rearranging this inequality, and summing up the difference in welfare over all buyers, we get that

$$SW(\vec{B}^{(t)}) - SW(\vec{S}^{(t)}) \leq \sum_{j \in \mathcal{N}}p^{(t)} |B^{(t)}_j| = \frac{SW(\vec{B}^{(t)})}{2\gamma \theta(\vec{B}^{(t)})} \theta(\vec{B}^{(t)}).$$

The proof follows, since for all $t < \ell$, $\vec{B}^{(t+1)} = \vec{S}^{(t)}$.
\end{proof}

\begin{proposition}
\label{prop_priceincrease}
For every $1 \leq t \leq \ell-1$, if $t+1 \neq \gamma$, we have that

$$p^{(t+1)}\geq p^{(t)}.$$
\end{proposition}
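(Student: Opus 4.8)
The plan is to read off $p^{(t+1)}$ from the \emph{Prices} formula, bound its numerator from below using Proposition~\ref{prop_surpdiff}, and bound its denominator from above using the item‑halving condition that is in force at every step before $\ell$. First I would fix $t$ with $1\le t\le\ell-1$ and $t+1\ne\gamma$. Since $t\le\ell-1$, iteration $t$ does \emph{not} trigger Alloc‑Unsold, so by the modified if‑else condition we have $\theta(\vec{B}^{(t)})>2\,\theta(\vec{S}^{(t)})$ and $\vec{B}^{(t+1)}=\vec{S}^{(t)}$; in particular $\theta(\vec{B}^{(t+1)})<\tfrac12\,\theta(\vec{B}^{(t)})$. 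Moreover $t+1\le\ell\le\gamma$ together with $t+1\ne\gamma$ gives $t,t+1\le\gamma-1$, so both prices are produced by the \emph{Prices} function, i.e.\ $p^{(s)}=SW(\vec{B}^{(s)})/\bigl(2\gamma\,\theta(\vec{B}^{(s)})\bigr)$ for $s\in\{t,t+1\}$ (the tail condition, which would govern index $\gamma$, is never reached here).

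Next I would apply Proposition~\ref{prop_surpdiff} to obtain $SW(\vec{B}^{(t+1)})\ge\bigl(1-\tfrac{1}{2\gamma}\bigr)SW(\vec{B}^{(t)})$, and then simply substitute the two facts above into the formula for $p^{(t+1)}$:
\[
p^{(t+1)}=\frac{SW(\vec{B}^{(t+1)})}{2\gamma\,\theta(\vec{B}^{(t+1)})}
\;\ge\;\frac{\bigl(1-\tfrac{1}{2\gamma}\bigr)SW(\vec{B}^{(t)})}{2\gamma\cdot\tfrac12\,\theta(\vec{B}^{(t)})}
\;=\;\Bigl(2-\tfrac{1}{\gamma}\Bigr)\,p^{(t)}\;\ge\;p^{(t)},
\]
where the final inequality uses $\gamma=\log m\ge 1$ (valid since $m\ge 2$). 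This is the whole argument.

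I do not expect a real obstacle; the only point needing a word of care is that $p^{(t+1)}$ is well defined, i.e.\ $\theta(\vec{B}^{(t+1)})=\theta(\vec{S}^{(t)})\ge 1$. This holds because the $\theta(\vec{B}^{(t)})$‑weighted average of $v_j(|B^{(t)}_j|)/|B^{(t)}_j|$ over the buyers equals $SW(\vec{B}^{(t)})/\theta(\vec{B}^{(t)})=2\gamma p^{(t)}>p^{(t)}$, so some buyer strictly profits from her bundle in $\vec{B}^{(t)}$ at price $p^{(t)}$ and therefore buys a nonempty bundle in $\vec{S}^{(t)}$. Beyond this, the proof is just the bookkeeping of which indices receive prices from \emph{Prices} versus the tail condition, plus keeping the strict/weak inequalities straight.
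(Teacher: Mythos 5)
Your proof is correct and takes essentially the same route as the paper's: you read $p^{(t+1)}$ off the \emph{Prices} formula, use Proposition~\ref{prop_surpdiff} to bound the numerator $SW(\vec{B}^{(t+1)})$ from below, and use the halving condition $\theta(\vec{B}^{(t+1)}) < \tfrac{1}{2}\theta(\vec{B}^{(t)})$ (forced by $t<\ell$) to bound the denominator. The paper phrases the final step as $(1-\tfrac{1}{2\log m}) \geq \tfrac12$ rather than your $(2-\tfrac1\gamma)\geq 1$, but these are the same inequality. Your added remark about well-definedness (showing $\theta(\vec{B}^{(t+1)})\geq 1$ via the averaging argument at price $p^{(t)}$) is a small point of extra care the paper leaves implicit.
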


\begin{proof}
By definition of $\ell$, we know that for every $t < \ell$, $\theta(\vec{B}^{(t)}) > 2\theta(\vec{S}^{(t)}) = 2\theta(\vec{B}^{(t+1)})$.

Next, let us consider the case when $t < \ell < \gamma$. Then,

$$p^{(t+1)} = \frac{SW(\vec{B}^{(t+1)})}{2\gamma \theta(\vec{B}^{(t+1)})} \geq (1-\frac{1}{2\log m})\frac{SW(\vec{B}^{(t)})}{2\gamma \frac{\theta(\vec{B}^{(t)})}{2}} \geq \frac{SW(\vec{B}^{(t)})}{2\gamma \theta(\vec{B}^{(t)})} = p^{(t)}.$$

The first inequality follows from Proposition~\ref{prop_surpdiff}, and the second one comes from the fact that $(1-\frac{1}{2\log m}) \geq \frac{1}{2}$.
\end{proof}

Note that the reason we do not include $p^{(\gamma)}$ in the above proposition is that the price $p^{(\gamma)}$ is selected according to a special tail selection rule (see Figure \ref{fig:tail}), instead of the usual $Prices$ function.

\begin{lemma}
\label{lem_validitysoln}
For every $1 \leq t \leq \ell$, we have that $\theta(\vec{S}^{(t)}) \leq m$, and thus $(p^{(t)}, \vec{S}^{(t)})$ is a valid outcome of the simultaneous mechanism.

\end{lemma}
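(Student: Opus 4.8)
The plan is to exploit two monotonicity facts: that the prices $p^{(1)}, p^{(2)}, \dots, p^{(\ell)}$ are nondecreasing, and that the total number of items demanded under a \emph{uniform} price is a nonincreasing function of that price. Since the Case~II hypothesis is precisely $\theta(\vec{S'}) = \theta(\vec{S}^{(1)}) \le m$ (note $p^{(1)} = Prices(\vec{A},\gamma) = p'$, so $\vec{S}^{(1)}$ is the simultaneous-mechanism output at price $p'$, i.e., $\vec{S'}$), combining the two facts immediately gives $\theta(\vec{S}^{(t)}) \le \theta(\vec{S}^{(1)}) \le m$ for every $t \le \ell$, which is the claim: each buyer already receives a utility-maximizing bundle by construction, so respecting the supply bound is all that is needed for validity.

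First I would record the demand-monotonicity fact. Because multi-unit valuations need not be concave this is not a convexity statement, but it follows from a one-line exchange argument: if $q$ maximizes $v_j(q) - pq$ and $q'$ maximizes $v_j(q') - p'q'$ with $p < p'$, then summing the two optimality inequalities and cancelling the value terms yields $(p - p')(q' - q) \ge 0$, hence $q' \le q$. Thus at any uniform price $\ge p^{(1)}$ each buyer demands a bundle no larger than the one she demands at $p^{(1)}$, and the aggregate demand drops accordingly. (This is the global form of Proposition~\ref{prop_buyerdropbundle}.)

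Then I would finish by a short induction on $t$. The base case $t = 1$ is the Case~II hypothesis above. For $2 \le t \le \ell$ with $t < \gamma$, iterating Proposition~\ref{prop_priceincrease} over $s = 1, \dots, t-1$ (the side condition $s+1 \ne \gamma$ holds since $s+1 \le t < \gamma$) gives $p^{(t)} \ge p^{(t-1)} \ge \dots \ge p^{(1)}$, and the demand-monotonicity fact then yields $|S_j^{(t)}| \le |S_j^{(1)}|$ for each buyer $j$, so $\theta(\vec{S}^{(t)}) \le m$. The only remaining index is $t = \gamma$, which occurs only when $\ell = \gamma$; there $\vec{S}^{(\gamma)}$ is generated by the tail rule of Figure~\ref{fig:tail}, which allocates exactly one copy, so $\theta(\vec{S}^{(\gamma)}) = 1 \le m$ because $m \ge 2$.

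The step I expect to need the most care is the boundary behaviour. When $p^{(t)} = p^{(1)}$ exactly, the exchange argument only gives a weak conclusion, so I would either adopt a fixed tie-breaking rule under which the simultaneous mechanism's demanded bundle sizes are monotone in the price, or observe that the proof of Proposition~\ref{prop_priceincrease} actually delivers the \emph{strict} inequality $p^{(t)} > p^{(1)}$ once $SW(\vec{A}) > 0$ — the halving $\theta(\vec{B}^{(t+1)}) < \frac{1}{2}\theta(\vec{B}^{(t)})$ for $t < \ell$ is strict and $2(1 - \frac{1}{2\gamma}) \ge 1$. The degenerate case $SW(\vec{A}) = 0$ makes the theorem vacuous and can be discarded up front. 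Everything else is bookkeeping.
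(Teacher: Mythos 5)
Your proof takes essentially the same route as the paper's: identify $\vec{S}^{(1)} = \vec{S'}$ so the Case~II hypothesis gives $\theta(\vec{S}^{(1)}) \le m$, chain Proposition~\ref{prop_priceincrease} to get nondecreasing prices up to $p^{(\ell)}$, invoke monotonicity of demand in a uniform price, and dispatch $\ell = \gamma$ via the tail rule. You spell out the exchange argument and the strictness/tie-breaking caveat that the paper leaves implicit, but the structure and key facts are the same.
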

\begin{proof}
Since we are in the $\theta(\vec{S'}) \leq m$ regime, we know that at price $p' =  Prices(\vec{A}, \gamma) = p^{(1)}$, the demand for the good is not larger than $m$. Since due to Proposition \ref{prop_priceincrease} the prices are only increasing, then the demand cannot go up since the valuations are monotone, and thus for all $1 \leq t \leq \ell-1$, $\theta(\vec{S}^{(t)}) = \theta(\vec{B}^{(t+1)}) \leq m$. The same also holds for $\theta(\vec{S}^{(\ell)})$ if $\ell<\gamma$, since then Proposition \ref{prop_priceincrease} states that $p^{(\ell)}\geq p^{(\ell-1)}$.


%
%

To conclude, suppose that $\ell=\gamma$. In this case, $p^{(\gamma)}$ is selected according to a special tail selection rule (see Figure \ref{fig:tail}), instead of the usual $Prices$ function: it is the largest single-item valuation of any buyer. Note that since by definition of $\ell$, the number of items sold is at most half of what it was in the previous iteration (i.e., $\theta(\vec{B}^{(t+1)}) <\frac{\theta(\vec{B}^{(t)})}{2}$), and since $\gamma = \log_2 m$, then it must be that when $\ell=\gamma$ then $\theta(\vec{B}^{(\gamma)}) = 1$. In this event, it is not hard to verify that running the simultaneous mechanism with price $p^{(\gamma)}$ will not yield an invalid solution, and the proof follows.
\end{proof}

\begin{lemma}
\label{lemma_lowerboundBell}
$$SW(\vec{B}^{(\ell)}) \geq \frac{SW(\vec{A})}{2} > 0.$$
\end{lemma}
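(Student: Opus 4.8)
The plan is to push the multiplicative welfare-loss estimate of Proposition~\ref{prop_surpdiff} down the chain of benchmark allocations from $\vec{B}^{(1)}=\vec{A}$ all the way to $\vec{B}^{(\ell)}$. First I would record that, by the very definition of $\ell$ as the smallest index with $\theta(\vec{B}^{(t)})\le 2\theta(\vec{S}^{(t)})$, every iteration $t<\ell$ falls into the ``less than half sold'' branch, so $\vec{B}^{(t+1)}=\vec{S}^{(t)}$; this is exactly the regime in which Proposition~\ref{prop_surpdiff} is stated, and rearranging it gives $SW(\vec{B}^{(t+1)})\ge (1-\tfrac{1}{2\gamma})\,SW(\vec{B}^{(t)})$ for each $1\le t\le \ell-1$.

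Next I would multiply these $\ell-1$ inequalities together to obtain $SW(\vec{B}^{(\ell)})\ge (1-\tfrac{1}{2\gamma})^{\ell-1}\,SW(\vec{A})$. Since $\ell\le\gamma$ and the base lies strictly between $0$ and $1$, the exponent can be replaced by $\gamma$, giving $SW(\vec{B}^{(\ell)})\ge (1-\tfrac{1}{2\gamma})^{\gamma}\,SW(\vec{A})$. To finish I would invoke the elementary inequality $1-x\ge 4^{-x}$ valid for $x\in[0,\tfrac12]$ (it follows from concavity of $x\mapsto\ln(1-x)$, since both sides agree at $x=0$ and $x=\tfrac12$), applied with $x=\tfrac{1}{2\gamma}$, which is at most $\tfrac12$ because $\gamma=\log_2 m\ge 1$ when $m\ge 2$; this yields $(1-\tfrac{1}{2\gamma})^{\gamma}\ge 4^{-\gamma/(2\gamma)}=4^{-1/2}=\tfrac12$, hence $SW(\vec{B}^{(\ell)})\ge \tfrac12 SW(\vec{A})$. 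The strict positivity is then immediate: we may assume $SW(\vec{A})>0$, since otherwise the instance is trivial (all valuations vanish) and there is nothing to price.

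I do not expect a genuine obstacle here — the argument is a short telescoping estimate. The only points needing care are: (i) verifying that Proposition~\ref{prop_surpdiff} really covers the full range $t=1,\dots,\ell-1$, which hinges on the fact that Alloc-Unsold has not yet been triggered so that $\vec{B}^{(t+1)}=\vec{S}^{(t)}$ for each such $t$; (ii) the boundary case $\ell=\gamma$, where the exponent is $\gamma-1$ rather than smaller, but this is still $\le\gamma$ and so the same bound applies, and moreover the fact that $p^{(\gamma)}$ is chosen by the special tail rule is irrelevant because the estimate only involves welfares of benchmarks $\vec{B}^{(t)}$ with $t<\gamma$; and (iii) confirming $\gamma\ge 1$ so that $\tfrac{1}{2\gamma}\le\tfrac12$ and the elementary inequality can be used, which is exactly where the hypothesis $m\ge 2$ enters.
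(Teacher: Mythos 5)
Your argument is correct, but it organizes the estimate differently from the paper, and the difference is worth noting. The paper treats Proposition~\ref{prop_surpdiff} as an \emph{additive} telescoping bound: it sums $SW(\vec{B}^{(t)}) - SW(\vec{B}^{(t+1)}) \leq \frac{SW(\vec{B}^{(t)})}{2\gamma}$ over $t = 1,\dots,\ell-1$, and then, to control the right-hand side, invokes a ``without loss of generality'' maneuver — assuming $SW(\vec{A}) \geq SW(\vec{B}^{(t)})$ for all $t \leq \ell$, or else re-running the argument starting from whichever $\vec{B}^{(t^*)}$ has the largest welfare — so that each of the at-most-$\gamma$ summands is bounded by $SW(\vec{A})/(2\gamma)$ and the total loss is at most $SW(\vec{A})/2$. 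You instead convert the same proposition into the \emph{multiplicative} recursion $SW(\vec{B}^{(t+1)}) \geq (1-\frac{1}{2\gamma})SW(\vec{B}^{(t)})$, compound it over at most $\gamma$ steps, and finish with the elementary bound $(1-\frac{1}{2\gamma})^{\gamma} \geq \frac{1}{2}$ (via the concavity-based inequality $1-x \geq 4^{-x}$ on $[0,\frac12]$, requiring $\gamma \geq 1$, i.e.\ $m \geq 2$). Your route has the advantage of avoiding the paper's WLOG step entirely — the multiplicative form does not require knowing which $\vec{B}^{(t)}$ has the largest welfare — at the small cost of needing a one-line calculus fact. Both proofs correctly restrict attention to $t < \ell$, where $\vec{B}^{(t+1)} = \vec{S}^{(t)}$ and Proposition~\ref{prop_surpdiff} applies, and both are short and sound; yours is arguably a touch cleaner because it eliminates the case analysis hidden in the paper's WLOG.
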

\begin{proof}
The proof is not hard to see. Without loss of generality, suppose that $SW(\vec{A}) \geq SW(\vec{B}^{(t)}$ for all $1 \leq t \leq \ell$ (if this were not the case, we just use the $\vec{B}^{(t)}$ with maximum welfare instead of $\vec{A}$ and argue the same as below). Applying Proposition~\ref{prop_surpdiff} repeatedly from $t=1$ up to $t=\ell-1$ as a telescoping summation, we have that $SW(\vec{B}^{(1)}) - SW(\vec{B}^{(\ell)} \leq \sum_{t=1}^{\ell}\frac{SW(\vec{B}^{(t)}}{2\gamma} \leq \frac{SW(\vec{A})}{2}$. The proof follows since $\vec{B}^{(1)} = \vec{A}$.
\end{proof}

%
%
%


%
%

Now, we are ready to complete the proof. We proceed in two cases based on whether or not $\ell < \gamma$. Suppose that $\ell < \gamma$. We know from Lemma~\ref{lemma_lowerboundBell} that $SW(\vec{B}^{(\ell)}) \geq \frac{SW(\vec{A})}{2}.$ Moreover, by definition of $\ell$, $\theta(\vec{S}^{(\ell)}) \geq \frac{\theta(\vec{B}^{(\ell)})}{2}$. So, we get that,

$$Rev(p^{(\ell)}, \vec{S}^{(\ell)}) = p^{(\ell)}\theta(\vec{S}^{(\ell)}) \geq \frac{SW(\vec{B}^{(\ell)})}{2\gamma\theta(\vec{B}^{(\ell)})}\frac{\theta(\vec{B}^{(\ell)})}{2} \geq \frac{SW(\vec{A})}{8\gamma}.$$

Finally, suppose that $\ell = \gamma$. In this case, it is not hard to deduce that  $\theta(\vec{B}^{(\ell)}) = 1$ since the number of items sold is halved every iteration, and by the special rule setting $p^{(\gamma)}$, we have that $SW(\vec{S}^{(\ell)}) = SW(\vec{B}^{(\ell)})$. The rest of the proof follows from Lemma~\ref{lemma_lowerboundBell}.



\end{document}